\documentclass[a4paper,11pt,oneside,reqno,final]{amsart}
%

\numberwithin{equation}{section}

\usepackage{graphicx}
\usepackage[latin1]{inputenc}
\usepackage{geometry}
\usepackage{latexsym}
\usepackage{amstext} \usepackage{amsmath} \usepackage{amssymb}
\usepackage{psfrag}
\usepackage{graphics}
\usepackage{color}

\renewcommand{\phi}{\varphi}

\newtheorem{remark}{Remark}
\newtheorem{theorem}{Theorem}
\newtheorem{lemma}{Lemma}
\newtheorem{proposition}{Proposition}

 2
 3
 4
 2 

\begin{document}

\title[Asymptotics of a PII Fredholm determinant]{Asymptotics of a Fredholm determinant involving the second Painlev\'e
transcendent}

\author{Thomas Bothner}
\address{Department of Mathematical Sciences,
Indiana University-Purdue University Indianapolis,
402 N. Blackford St., Indianapolis, IN 46202, U.S.A.}
\email{tbothner@iupui.edu}

\author{Alexander Its}
\address{Department of Mathematical Sciences,
Indiana University-Purdue University Indianapolis,
402 N. Blackford St., Indianapolis, IN 46202, U.S.A.}
\email{itsa@math.iupui.edu}

\thanks{This work was supported in part by the National Science Foundation (NSF) Grant DMS-1001777.}

\date{\today}

\begin{abstract}
We study the determinant $\det(I-K_{\textnormal{PII}})$ of an integrable Fredholm operator $K_{\textnormal{PII}}$ acting on the interval $(-s,s)$ whose kernel is constructed out of the $\Psi$-function associated with the Hastings-McLeod solution of the second Painlev\'e equation. This Fredholm determinant describes the critical behavior of the eigenvalue gap probabilities of a random Hermitian matrix chosen from the Unitary Ensemble in the bulk double scaling limit near a quadratic zero of the limiting mean eigenvalue density. Using the Riemann-Hilbert method, we evaluate the large $s$-asymptotics of $\det(I-K_{\textnormal{PII}})$.

\end{abstract}

\maketitle

\section{Introduction and statement of the results}\label{sec1}
Let $\mathcal{M}(n)$ be a unitary ensembles of random $n\times n$ Hermitian matrices $M=(M_{ij})=\overline{M}^t$ equipped with the probability measure,
\begin{equation}\label{UEmodel}
	P^{(n,N)}(M)dM = ce^{-N\textnormal{tr}V(M)}dM,\hspace{1cm} c\int\limits_{\mathcal{M}(n)}e^{-N\textnormal{tr}V(M)}dM=1.
\end{equation}
Here $dM$ denotes the Haar measure on $\mathcal{M}(n)\simeq\mathbb{R}^{n^2}$, $N$ is a fixed integer and the potential $V:\mathbb{R}\rightarrow\mathbb{R}$ is assumed to be real analytic satisfying the growth condition
\begin{equation}\label{potregul}
	\frac{V(x)}{\ln(x^2+1)}\rightarrow\infty\hspace{0.5cm}\ \textnormal{as}\ |x|\rightarrow\infty.
\end{equation}
The principal object of the analysis of the model  is the  statistics of eigenvalues of the  matrices from $\mathcal{M}(n)$. A classical  fact  \cite{D}, \cite{M} is that the eigenvalues form a  determinantal random point process 
with the  kernel
\begin{equation}\label{correlkernel}
	K_{n,N}(x,y)=e^{-\frac{N}{2}V(x)}e^{-\frac{N}{2}V(y)}\sum_{i=0}^{n-1}p_i(x)p_i(y),
\end{equation}
where $p_j(x)$ are polynomials orthonormal with respect to the weight $e^{-NV(x)}$,
\begin{equation}\label{ortpol}
\int\limits_{\mathbb{R}}p_i(x)p_j(x)e^{-NV(x)}dx=\delta_{ij}, \quad p_j(x) = \kappa_jx^j + ...
\end{equation}
In particular, one of the basic statistical characteristics, the  {\it gap probability}, 
\begin{equation*}
	E_{n,N}(s) = \textnormal{Prob}\big(M\in\mathcal{M}(n)\ \textnormal{has no eigenvalues in the interval}\ (-s,s),\ s>0\big)
\end{equation*}
is given by the formula,
\begin{eqnarray*}
	E_{n,N}(s) &=& \sum_{j=0}^n\frac{(-1)^j}{j!}\int\limits_{-s}^s\cdots\int\limits_{-s}^s\det\big(K_{n,N}(x_k,x_l)_{k,l=1}^j\big)dx_1\cdots dx_j\\
	&\equiv& \det(I-K_{n,N}),
\end{eqnarray*}
where $K_{n.N}$ is the trace class operator 
acting on $L^2\big((-s,s),dx\big)$ with kernel $K_{n.N}(x,y)$. 

Assumptions \eqref{potregul} on the potential $V(x)$ ensure \cite {DKM} (see also \cite{D}
for more on the history of the subject) that the mean eigenvalue density $\frac{1}{n}K_{n,N}(x,x)$ has a limit,
\begin{equation}\label{density}
\lim_{\substack{n,N\rightarrow\infty\\
	\frac{n}{N}\rightarrow 1}}\frac{1}{n}K_{n,N}(x,x) = \rho_V(x) \geq 0,
\end{equation}
whose support, $\Sigma_V\equiv \overline{\left\{ x\in \mathbb{R}: \rho_V(x) > 0\right\}}$, is a finite
union of intervals (simultaneously,  $\rho_V(x)$ defines the density of the  equilibrium measure
for the logarithmic potentials in the presence of the external  potential $V$). The limiting density 
$\rho_V(x)$ is determined by the potential $V(x)$. At the same time, the local statistics 
of eigenvalues in the large $n, N$ limit  satisfies the so-called {\it universality property}, i.e.
it is determined only by the local characteristics of the eigenvalue density $\rho_V$ (\cite{PS}, \cite{BI1},\cite{DKMVZ}). For instance, let us choose a regular point $x^{\ast}\in\Sigma_V$, i.e. $\rho_V(x^{\ast})>0$. Then the \emph{bulk universality} states that
\begin{equation}\label{bulkregular}
	\lim_{n\rightarrow\infty}\frac{1}{n\rho_V(x^{\ast})}K_{n,n}\bigg(x^{\ast}+\frac{\lambda}{n\rho_V(x^{\ast})},x^{\ast}+\frac{\mu}{n\rho_V(x^{\ast})}\bigg) = K_{\sin}(\lambda,\mu)\equiv \frac{\sin\pi(\lambda-\mu)}{\pi(\lambda-\mu)}
\end{equation}
uniformly on compact subsets of $\mathbb{R}$, which in turn implies \cite{DKMVZ} that for a regular point 
$x^{\ast}$,
\begin{equation*}
	\lim_{\substack{n,N\rightarrow\infty\\
	\frac{n}{N}\rightarrow 1}}\textnormal{Prob}\bigg(M\in\mathcal{M}(n)\ \textnormal{has no eigenvalues}\in \Big(x^{\ast}-\frac{s}{n\rho_V(x^{\ast})},x^{\ast}+\frac{s}{n\rho_V(x^{\ast})}\Big)\bigg)
\end{equation*}
\begin{equation}\label{empty0}
=\det(I-K_{\sin}),
\end{equation}
where $K_{\sin}$ is the trace class operator on $L^2\big((-s,s);d\lambda\big)$ with kernel $K_{\sin}(\lambda,\mu)$ given in \eqref{bulkregular}.  (This result was first obtained for GUE in the classical works of Gaudin and Dyson.)
The Fredholm determinant in the right hand side of \eqref{empty0} admits the following asymptotic
representation \cite{dyson}, 
\begin{equation}\label{dyson0}
	\ln\det(I-K_{\sin}) = -\frac{(\pi s)^2}{2}-\frac{1}{4}\ln (\pi s)+\frac{1}{12}\ln 2+3\zeta'(-1)+O\big(s^{-1}\big),\ \ s\rightarrow\infty,
\end{equation}
where $\zeta'(z)$ is the derivative of the Riemann zeta-function
(a rigorous proof without the constant term was obtained independently  by Widom  and Suleimanov - 
see \cite{DIZ} for
more historical details; a rigorous proof including the constant terms was obtained
independently in  \cite{K}, \cite{E} - see also \cite{DIKZ}). This remarkable formula yields one
of the most important results in random matrix theory, i.e. an explicit evaluation of {\it the large gap
probability}.

Equation \eqref{empty0} shows that in double scaling limits the basic statistical properties of hermitian random matrices are still expressible in terms of Fredholm determinants. This is also true for the first critical case, when $\rho_V(x)$ vanishes quadratically at an interior point $x^{\ast}\in\Sigma_V$. However, in this situation the scaling limit is more complicated \cite{BI2}, \cite{CK}. Let $\rho_V(x^{\ast})=\rho'_V(x^{\ast})=0,\rho''_V(x^{\ast})>0$ and $n,N\rightarrow\infty$ such that
\begin{equation*}
	\lim_{n,N\rightarrow\infty}n^{2/3}\bigg(\frac{n}{N}-1\bigg)=C
\end{equation*}
exists with $C\in\mathbb{R}$. Then the \emph{critical bulk universality} guarantees existence of positive constants $c$ and $c_1$ such that
\begin{equation}\label{bulkcritical}
	\lim_{n,N\rightarrow\infty}\frac{1}{cn^{1/3}}K_{n,N}\bigg(x^{\ast}+\frac{\lambda}{cn^{1/3}},x^{\ast}+\frac{\mu}{cn^{1/3}}\bigg)=K_{\textnormal{PII}}(\lambda,\mu;x)
\end{equation}
uniformly on compact subsets of $\mathbb{R}$ where the variable $x$ is the scaling parameter defined by the relation
\begin{equation*}
	\lim_{n,N\rightarrow\infty}n^{2/3}\bigg(\frac{n}{N}-1\bigg)=xc_1.
\end{equation*}
Here the limiting kernel $K_{\textnormal{PII}}(\lambda,\mu;x)$ is constructed out of the $\Psi$-function associated with a special solution of the second Painlev\'e equation. The precise description of the kernel $K_{\textnormal{PII}}(\lambda,\mu;x)$
is as follows.

Let $u(x)$ be the  Hastings-McLeod solution of the Painlev\'e II equation \cite{HM}, i.e. the unique real-valued solution to the boundary value problem
\begin{equation*}
	u_{xx}=xu+2u^3,\hspace{0.5cm} u(x)\sim\left\{
                                   \begin{array}{ll}
                                     \textnormal{Ai}(x), & \hbox{$x\rightarrow+\infty$;} \\\\
                                     \sqrt{-\frac{x}{2}}, & \hbox{$x\rightarrow-\infty$,} 
                                   \end{array}
                                 \right.
\end{equation*}
where $\textnormal{Ai}(x)$ is the Airy-function (the solution $u(x)$ is in fact uniquely determined 
by its Airy-asymptotics at $x = +\infty$). Viewing $x$, $u\equiv u(x)$ and $u_x \equiv du(x)/dx$
as real parameters, consider the $2\times 2$ system of linear ordinary differential
equations,
\begin{equation}\label{lax1}
   \frac{d\Psi}{d\lambda}=A(\lambda,x)\Psi,\
    A(\lambda,x)=-4i\lambda^2\sigma_3+4i\lambda\begin{pmatrix}
                                               0 & u \\
                                               -u & 0 \\
                                            \end{pmatrix}+\begin{pmatrix}
                                                             -ix-2iu^2 & -2u_x \\
                                                             -2u_x & ix+2iu^2 \\
                                                           \end{pmatrix}.
\end{equation}
Let $\Psi(\lambda) \equiv \Psi(\lambda,x)$
be the fundamental solution of system (\ref{lax1}) which is uniquely fixed by the asymptotic condition,
\begin{equation*}
	\Psi(\lambda,x)=\Big(I+O\big(\lambda^{-1}\big)\Big)e^{-i(\frac{4}{3}\lambda^3+x\lambda)\sigma_3},\hspace{0.5cm}\lambda\rightarrow\infty,\ 0<\textnormal{arg}\ \lambda<\pi.
\end{equation*}
Then, the kernel   $K_{\textnormal{PII}}(\lambda,\mu;x)$ is given by the formula,
\begin{equation}\label{PIIkernel}
    K_{\textnormal{PII}}(\lambda,\mu;x)\equiv K_{\textnormal{PII}}(\lambda,\mu)=\frac{1}{2\pi}\bigg(\frac{\psi_{21}(\lambda,x)\psi_{11}(\mu,x)-\psi_{21}(\mu,x)\psi_{11}(\lambda,x)}{\lambda-\mu}\bigg), 
\end{equation}
where $\psi_{11}(\lambda,x)$ and $\psi_{21}(\lambda,x)$ are the entries of the matrix 
valued function $\Psi(\lambda,x)
\equiv \{\psi_{jk}(\lambda,x)\}_{j,k = 1,2}$. 
\begin{remark} The function $\Psi(\lambda,x)$ can be alternatively defined as a solution of a
certain matrix oscillatory Riemann-Hilbert problem. The  exact formulation of this Riemann-Hilbert
problem is given in the next section.
\end{remark}

In this paper we study  the Fredholm determinant
\begin{equation}\label{Pdet}
	\det(I-K_{\textnormal{PII}}),
\end{equation}
where $K_{\textnormal{PII}}$ is the trace class operator on $L^2\big((-s,s);d\lambda\big)$ with kernel \eqref{PIIkernel}.
In virtue of (\ref{bulkcritical}), this determinant replaces the sine - kernel determinant in the description of the gap-probability near the critical point $x^{\ast}$, i.e. instead of \eqref{empty0} one has that
\begin{equation*}
	\lim\textnormal{Prob}\bigg(M\in\mathcal{M}(n)\ \textnormal{has no eigenvalues}\in \Big(x^{\ast}-\frac{s}{cn^{1/3}},x^{\ast}+\frac{s}{cn^{1/3}}\Big)\bigg)
\end{equation*}	
\begin{equation}\label{emp1}
	=\det(I-K_{\textnormal{PII}}),
\end{equation}
as $n, N \to \infty$ and
\begin{equation*}
	\lim_{n,N\rightarrow\infty}n^{2/3}\bigg(\frac{n}{N}-1\bigg)=xc_1.
\end{equation*}
(The proof can  obtained in a same manner as the  proof of the similar equation $(21)$ in  \cite{DIKa} with
the help of the proper estimates from \cite{BI2}.) The main result of the present paper is the following analogue
of the Dyson formula \eqref{dyson0} for the Painlev\'e II determinant \eqref{Pdet}. 
\begin{theorem}\label{t1} Let $K_{\textnormal{PII}}$ denote the trace class operator on $L^2\big((-s,s);d\lambda\big)$ with kernel \eqref{PIIkernel}. Then as $s\rightarrow\infty$ the Fredholm determinant $\det(I-K_{\textnormal{PII}})$ behaves as 
\begin{equation*}
	\ln \det(I-K_{\textnormal{PII}})=-\frac{2}{3}s^6-s^4x-\frac{1}{2}(sx)^2 - \frac{3}{4}\ln s + \int\limits_x^{\infty}(y-x)u^2(y)dy
\end{equation*}
\begin{equation}\label{theo1}
	 -\frac{1}{6}\ln 2+3\zeta'(-1)+O\big(s^{-1}\big),
\end{equation}
and the error term in \eqref{theo1} is uniform on any compact subset of the set
\begin{equation}\label{excset1}
	\big\{ x\in\mathbb{R}:\ -\infty<x<\infty\big\}.
\end{equation}
\end{theorem}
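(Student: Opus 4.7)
The plan is to apply the Deift-Zhou nonlinear steepest descent method to the Riemann-Hilbert problem that encodes the resolvent of $K_{\textnormal{PII}}$. Since $K_{\textnormal{PII}}$ is an integrable kernel in the sense of Its-Izergin-Korepin-Slavnov, built from the two functions $\psi_{11}(\cdot,x)$ and $\psi_{21}(\cdot,x)$ of \eqref{PIIkernel}, the operator $(I-K_{\textnormal{PII}})^{-1}$ is controlled by a $2\times 2$ matrix RHP $Y(\lambda)$ with jump contour $(-s,s)$ and a rank-two integrable jump matrix. The first step is to derive two differential identities,
\begin{equation*}
	\frac{\partial}{\partial s}\ln\det(I-K_{\textnormal{PII}})=F_s(s,x),\qquad\frac{\partial}{\partial x}\ln\det(I-K_{\textnormal{PII}})=F_x(s,x),
\end{equation*}
in which $F_s$ comes from the standard endpoint variation formula and reduces to a bilinear expression in the entries of $Y$ at $\pm s$, while $F_x$ is obtained by differentiating $K_{\textnormal{PII}}$ in $x$ and using the companion $x$-equation of the Painlev\'e II Lax pair to express $\partial_x\Psi$ in terms of $\Psi$; this again yields an integrable kernel and hence a bilinear expression in $Y$.

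Next, I would carry out the Deift-Zhou analysis of $Y$ as $s\to\infty$ with $x$ in a fixed compact subset of $\mathbb{R}$. The initial transformation is a $g$-function conjugation adapted to the cubic phase $\phi(\lambda)=\tfrac{4}{3}\lambda^3+x\lambda$ of the $\Psi$-function at infinity, which absorbs the growing exponentials and makes the jumps bounded uniformly in $s$. Opening of lenses off $(-s,s)$ then turns the oscillatory jumps into exponentially small ones on the deformed contour. The resulting global parametrix $N(\lambda)$ is an explicit matrix with square-root endpoint behavior at $\pm s$, and local parametrices at $\pm s$ are built from Bessel model functions. A small-norm argument for the ratio $R(\lambda)=Y(\lambda)(\textnormal{approximation})^{-1}$ produces $R=I+O(s^{-1})$ uniformly in $\lambda$ and in $x$ on compacta. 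Substituting into the two differential identities yields
\begin{equation*}
	F_s(s,x)=-4s^5-4s^3x-sx^2-\frac{3}{4s}+O(s^{-2}),\qquad F_x(s,x)=-s^4-s^2x-\int_x^{\infty}u^2(y)\,dy+O(s^{-1}).
\end{equation*}

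Integration of $F_s$ in $s$ reproduces $-\tfrac{2}{3}s^6-s^4x-\tfrac{1}{2}s^2x^2-\tfrac{3}{4}\ln s+C(x)+O(s^{-1})$. Comparing $\partial_xC(x)$ with the leading part of $F_x$ and using the elementary identity $\tfrac{d}{dx}\int_x^{\infty}(y-x)u^2(y)\,dy=-\int_x^{\infty}u^2(y)\,dy$, one obtains $C(x)=\int_x^{\infty}(y-x)u^2(y)\,dy+c_0$ for some constant $c_0\in\mathbb{R}$ independent of $s$ and $x$. To pin down $c_0=-\tfrac{1}{6}\ln 2+3\zeta'(-1)$ I would match with the limit $x\to-\infty$: since $u(x)\sim\sqrt{-x/2}$, the rescaling $\lambda\mapsto\sqrt{-x/2}\,\tilde\lambda$ degenerates the PII $\Psi$-function into the $\Psi$-function of the sine kernel, and in a suitable double-scaling regime $s,-x\to\infty$ the determinant $\det(I-K_{\textnormal{PII}})$ reduces to a sine-kernel determinant on a rescaled interval. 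Matching with Dyson's formula \eqref{dyson0} then fixes $c_0$.

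The main technical obstacle will be the construction of the local parametrices at $\pm s$ and the clean extraction of the logarithmic term $-\tfrac{3}{4}\ln s$, which sits at precisely the order where the Bessel model factors contribute constant and logarithmic corrections; all expansions must be tracked uniformly in $x$ so as not to lose an order when integrating the differential identities. The second delicate point is the rigorous identification of $c_0$, which is not accessible from the $s\to\infty$ analysis alone and requires a separate asymptotic study of the PII Riemann-Hilbert problem in the joint limit $s,-x\to\infty$, bringing it into contact with the sine-kernel RHP and hence with \eqref{dyson0}.
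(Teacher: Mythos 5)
Your derivation of the leading and logarithmic terms follows the same strategy as the paper: the Riemann--Hilbert formulation of the resolvent of the integrable kernel, the ``undressing'' by $\Psi(\lambda,x)$ to constant jumps, the cubic $g$-function, Bessel local parametrices at $\pm s$, a small-norm ratio RHP, and the two differential identities in $s$ and $x$ which integrate up to $-\tfrac{2}{3}s^6 - s^4 x - \tfrac{1}{2}(sx)^2 - \tfrac{3}{4}\ln s + \int_x^\infty (y-x)u^2(y)\,dy + c_0 + O(s^{-1})$. Up to that point your plan agrees with the paper in all essentials.

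The genuine gap is in your determination of $c_0$. You propose matching at $x\to-\infty$, where $u(x)\sim\sqrt{-x/2}$, claiming the PII $\Psi$-function degenerates into a sine-kernel $\Psi$-function under a $\sqrt{-x/2}$ rescaling. This runs into two obstructions. First, the Tracy--Widom term $\int_x^\infty(y-x)u^2(y)\,dy$ grows like $|x|^3/12 + \tfrac{1}{8}\ln|x| + (\text{const})$ as $x\to-\infty$, so extracting $c_0$ from an $x\to-\infty$ matching forces you to subtract a cubically divergent quantity and, crucially, to know the \emph{constant} in the Tracy--Widom tail asymptotics $-\tfrac{1}{24}\ln 2 + \zeta'(-1)$. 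That constant is itself a deep result on the same footing as the one you are trying to derive, so the argument becomes circular or at least parasitic on an independent proof. Second, the RHP degenerates badly in the regime $x\to-\infty$: as the paper explicitly notes in Section \ref{sec15} (see Figure \ref{fig33}), the zeros $z_\pm$ of the derivative of the $g$-function migrate onto the real axis when $x$ is sufficiently negative relative to $s$, and the sign structure of $\operatorname{Re} g$ on the infinite branches fails, so the entire steepest-descent scaffolding collapses before you reach a sine-kernel model. The paper instead sends $x\to+\infty$, where $u\to 0$ exponentially, the Tracy--Widom term vanishes, and $K_{\textnormal{PII}}$ converges (with explicit control of the error via a Fredholm determinant ratio and Hadamard's inequality) to the cubic sine kernel $\check{K}_{\textnormal{csin}}$. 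Even then a direct matching is not available: the paper introduces the homotopy parameter $t$ in $K_{\textnormal{csin}}$, carries out a separate, self-contained RHP analysis uniform in $t\in[0,1]$, and integrates $\tfrac{d}{dt}\ln\det(I-K_{\textnormal{csin}})$ from $t=0$ (sine kernel, Dyson's constant) to $t=1$ (cubic sine kernel). This $t$-deformation is the key missing step in your proposal; without it, or an equivalent device, neither the $x\to+\infty$ nor the $x\to-\infty$ limit on its own pins down $c_0$.
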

We bring the reader's attention to the following two interesting aspects of formula (\ref{theo1}).
One is the appearance in the asymptotics  of the Tracy-Widom function,
\begin{equation*}
	F_{TW}(x) = e^{-\int_x^{\infty}(y-x)u^2(y)dy}.
\end{equation*}
The second important feature of the estimate (\ref{theo1}) is related to the Forrester-Chen-Eriksen-Tracy
conjecture (\cite{For}, \cite{CET}; see also \cite{BH}) concerning the behavior of the large gap  probabilities. The conjecture states that  
the probability  $E(s)$ of emptiness of the (properly scaled) interval $(x^{\ast}-s,x^{\ast}+s)$ around the
point $x^{\ast}$  satisfies the estimate,
\begin{equation}\label{BHF}
E(s) \sim \exp\Bigl(-Cs^{2\kappa +2}\Bigr),
\end{equation}
if the mean density $\rho(x)$ behaves as $\rho \sim (x-x^{\ast})^{\kappa}$. This conjecture
is supported by the classical results concerning the regular bulk point ($\kappa = 0$, the sine-kernel
determinant) and
regular edge point ($\kappa = 1/2$, the Airy-kernel determinant). For higher order critical edge points 
($\kappa = 2l + 1/2$, the higher Painlev\'e I - kernel determinants), estimate (\ref{BHF}) follows from the asymptotic results of \cite{CIK}. Our asymptotic equation (\ref{theo1}) supports the Forrester-Chen-Eriksen-Tracy
conjecture for the first critical case in the bulk, when $\kappa = 2$.
\bigskip

The proof of Theorem \ref{t1} is based on a Riemann-Hilbert approach. 
This approach (compare \cite{IIKS}, \cite{DIZ}) uses the given integrable form of the Fredholm operator, allowing us to connect the resolvent kernel to the solution of a Riemann-Hilbert problem. The latter  can be analysed rigorously via the Deift-Zhou nonlinear steepest descent method. 

Let us finish this introduction with a brief outline for the rest of the paper. Section \ref{sec2} starts with the more
details concerning the  definition of the kernel \eqref{PIIkernel} followed by a short review of the Riemann-Hilbert approach for the asymptotics of integrable Fredholm operators. We then apply the general framework to 
the Fredholm determinant $\det(I-K_{\textnormal{PII}})$ and formulate the associated  ``master'' Riemann-Hilbert problem (RHP).
We will also evaluate logarithmic $s$ and $x$ derivatives of the determinant $\det(I-K_s)$ in terms
of the solution of this RHP and outline a derivation
of an integrable system whose tau-function is represented by $\det(I-K_{\textnormal{PII}})$. Since the main objective
of this paper is the large $s$ asymptotics of $\det(I-K_{\textnormal{PII}})$, we shall  postpone 
a thorough  discussion of the differential equations describing this determinant until the next publication. 
In sections $\ref{sec4},\ref{sec7}-\ref{sec12}$,  following the Deift-Zhou scheme, we construct the asymptotic solution of the
master Riemann-Hilbert problem. Comparing to the more usual cases, an extra ``undressing''
step is needed to overcome the transcendentality of the kernel $K_{\textnormal{PII}}(\lambda,\mu;x)$.
Here, a crucial role is played by the aforementioned alternative Riemann-Hilbert definition of the function
$\Psi(\lambda,x)$.  The situation is similar to the one dealt with in \cite{CIK}.
The calculations of sections \ref{sec13} and \ref{sec14}  provide us with the asymptotics of  $\ln \det(I-K_{\textnormal{PII}})$
announced in \eqref{theo1} but up to the constant term. In order to determine the latter, 
we will, in section \ref{sec15}, go back to equation \eqref{PIIkernel} and look at the behavior of the kernel 
$K_{\textnormal{PII}}(\lambda,\mu;x)$ as $x\rightarrow\infty$. We will see that in the large $x$ limit,
the kernel $K_{\textnormal{PII}}(\lambda,\mu)$ is replaced by the following cubic generalization
of the sine kernel, 
\begin{equation}\label{Kcheck}
	K_{\textnormal{PII}}(\lambda,\mu) \mapsto \check{K}_{\textnormal{csin}}(\lambda,\mu)=\frac{\sin\big(\frac{4}{3}(\lambda^3-\mu^3)+x(\lambda-\mu)\big)}{\pi(\lambda-\mu)},
\end{equation}
We will then introduce a parameter $t\in[0,1]$
\begin{equation*}
	\check{K}_{\textnormal{csin}}(\lambda,\mu)\mapsto K_{\textnormal{csin}}(\lambda,\mu)=\frac{\sin\big(\frac{4}{3}t(\lambda^3-\mu^3)+x(\lambda-\mu)\big)}{\pi(\lambda-\mu)}
\end{equation*}
and compute the large $s$ behavior of $\det(I-K_{\textnormal{csin}})$ using again the Riemann-Hilbert approach.
This will be done in sections \ref{sec17}-\ref{sec24}. This analysis will indeed produce the constant term  in \eqref{theo1} since $\det(I-K_{\textnormal{csin}})\big|_{t=0}$ reduces to the sine kernel with known asymptotics see (\ref{dyson0}),
\begin{equation*}
	\ln\det(I-K_{\textnormal{csin}})\big|_{t=0} = -\frac{(sx)^2}{2}-\frac{1}{4}\ln(sx)+\frac{1}{12}\ln 2+3\zeta'(-1)+O\big(s^{-1}\big),\ \ s\rightarrow\infty
\end{equation*}
uniformly on any compact subset of \eqref{excset1}.
\begin{remark} We do not address in this paper the question of the higher corrections to (\ref{theo1}).
After the leading and constant terms are determined, the higher corrections can be in principal
obtained by iterating the final $R$-Riemann-Hilbert problem (see section \ref{sec14}). Alternatively,
one can use the related to the determinant $\det(I-K_{\textnormal{PII}})$ differential system which we have mentioned 
above, and which we intend to discuss in detail in our next publication. 
\end{remark}
The asymptotic of the determinant $\det(I -\check{K}_{\textnormal{csin}})$ is of interest
on its own. We show, that it is given by a formula ignoring the Tracy-Widom term in \eqref{theo1}.
That is, our second asymptotic result is the following.
\begin{theorem}\label{theo2} Let $\check{K}_{\textnormal{csin}}$ denote the trace class operator on $L^2\big((-s,s);d\lambda\big)$ with kernel \eqref{Kcheck}. Then as $s\rightarrow\infty$ the Fredholm determinant $\det(I-\check{K}_{\textnormal{csin}})$ behaves as 
\begin{equation}\label{theo10}
\ln \det(I-\check{K}_{\textnormal{csin}})=-\frac{2}{3}s^6-s^4x-\frac{1}{2}(sx)^2 - \frac{3}{4}\ln s 
 -\frac{1}{6}\ln 2+3\zeta'(-1)+O\big(s^{-1}\big).
\end{equation}
and the error term in \eqref{theo10} is uniform on any compact subset of the set \eqref{excset1}.
\end{theorem}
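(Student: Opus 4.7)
The proof follows the same Riemann--Hilbert blueprint as Theorem \ref{t1}, but is much simpler, since the kernel \eqref{Kcheck} is already ``dressed'': no transcendental $\Psi$-function appears, only the explicit cubic phase $\phi_{t}(\lambda)=\tfrac{4}{3}t\lambda^{3}+x\lambda$. My plan is (i) to embed $\check{K}_{\textnormal{csin}}$ in the one-parameter family $K_{\textnormal{csin}}$ indexed by $t\in[0,1]$ introduced above, so that $t=1$ gives $\check{K}_{\textnormal{csin}}$ and $t=0$ gives (a rescaling of) the sine kernel; (ii) compute the $s$- and $t$-logarithmic derivatives of $\det(I-K_{\textnormal{csin}})$ from the solution of an associated Riemann--Hilbert problem via the Deift--Zhou nonlinear steepest descent method; and (iii) integrate in $t$ from $0$ to $1$, fixing the integration constant at $t=0$ by Dyson's formula \eqref{dyson0}.

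First I would write $K_{\textnormal{csin}}$ in integrable (IIKS) form and build the associated $2\times 2$ master RHP for a matrix $Y$, analytic on $\mathbb{C}\setminus[-s,s]$, normalized to $I$ at infinity, whose jump across $(-s,s)$ is expressed via $e^{\pm 2i\phi_{t}(\lambda)}$. The standard IIKS identities then express $\partial_{s}\ln\det(I-K_{\textnormal{csin}})$ and $\partial_{t}\ln\det(I-K_{\textnormal{csin}})$ as finite-dimensional residues/integrals involving $Y$ at the endpoints $\pm s$.

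Next, I would perform the Deift--Zhou nonlinear steepest descent on $Y$. Rescaling $\lambda=s\zeta$, the phase becomes $s^{3}\bigl(\tfrac{4t}{3}\zeta^{3}+\tfrac{x}{s^{2}}\zeta\bigr)$, with $s^{3}$ as the large parameter and the cubic term dominant on $\zeta\in[-1,1]$. The usual contour opening (lenses at $\zeta=\pm 1$ along the steepest-descent directions of $\tfrac{4t}{3}\zeta^{3}$) reduces $Y$ to an outer global parametrix with the standard sine-kernel-type endpoint singularities $(\zeta\mp 1)^{\pm 1/4}$, glued to local endpoint parametrices of Bessel type on small disks around $\zeta=\pm 1$. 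A small-norm argument on the remaining $R$-RHP then yields the asymptotics of $Y$ with uniform error $O(s^{-1})$ and, in particular, the leading asymptotics of both logarithmic derivatives of $\det(I-K_{\textnormal{csin}})$, uniform in $x$ on compact subsets of \eqref{excset1} and in $t$ on compact subsets of $(0,1]$.

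Finally, integrating $\partial_{t}\ln\det(I-K_{\textnormal{csin}})$ from $t=0$ to $t=1$ and using the $t=0$ value
\[
\ln\det(I-K_{\textnormal{csin}})\big|_{t=0}=-\tfrac{1}{2}(sx)^{2}-\tfrac{1}{4}\ln(sx)+\tfrac{1}{12}\ln 2+3\zeta'(-1)+O(s^{-1}),
\]
which is \eqref{dyson0} after the trivial rescaling $\lambda\mapsto\lambda/x$, assembles the full expansion \eqref{theo10}. The main technical obstacle is the uniformity of the steepest-descent analysis as $t\downarrow 0$: the cubic phase degenerates and the cubic-dominated scheme ceases to control the integration in $t$ near $0$. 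I would handle this by splitting $[0,1]=[0,\delta_{s}]\cup[\delta_{s},1]$ with $\delta_{s}=s^{-\alpha}$ for a suitable $\alpha>0$; on $[\delta_{s},1]$ the cubic-dominated analysis is uniform and quantitative, while on $[0,\delta_{s}]$ one runs the standard sine-kernel Deift--Zhou analysis directly and checks that the two regimes match to leading order at $t=\delta_{s}$.
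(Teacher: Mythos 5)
Your overall strategy is exactly the paper's: embed $\check K_{\textnormal{csin}}$ in the $t$-family $K_{\textnormal{csin}}$, compute $\frac{d}{dt}\ln\det(I-K_{\textnormal{csin}})$ from an associated $2\times2$ RHP by Deift--Zhou steepest descent, integrate $t$ from $0$ to $1$, and pin down the constant at $t=0$ using the Dyson sine-kernel formula \eqref{dyson0}. Where you diverge, however, is precisely on the issue you flag as the ``main technical obstacle'' --- the degeneration of the cubic phase as $t\downarrow0$ --- and the paper handles it more cleanly than your proposed split of $[0,1]$ at $\delta_s=s^{-\alpha}$ with a matching argument.

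The paper never needs a matching region. It defines the $g$-function as
\[
\hat g(z)=\frac{4i}{3}\sqrt{z^2-1}\Bigl(z^2 t+\frac{t}{2}+\frac{3x}{4s^2}\Bigr),
\]
which contains \emph{both} the cubic and the linear parts of the phase, so it interpolates continuously between the $t=1$ $g$-function \eqref{gfunction} and the sine-kernel $g$-function at $t=0$. With this choice the conformal change of variable near the endpoints gives $\zeta(z)\sim\tfrac{32 s^6}{9}\bigl(\tfrac{3t}{2}+\tfrac{3x}{4s^2}\bigr)^2(z\mp1)$, which is at least $O(s^2)$ uniformly for $(t,x)$ in compact subsets of $[0,1]\times\mathbb{R}$; the Bessel parametrices $S^{r,l}$ then match the outer parametrix $N$ with error $O(s^{-1})$ uniformly in $t$ (see \eqref{DZesti1}--\eqref{DZesti3}). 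Consequently the small-norm RHP and the $t$-integration go through in one unified analysis. Your cubic-only scheme genuinely would lose control as $t\to0$, but the fix is to enlarge the $g$-function, not to split the interval. The matching argument you sketch, while plausible, would require proving that the two steepest-descent regimes yield the same constant through the crossover at $t=\delta_s$ to accuracy $O(s^{-1})$, which is nontrivial and avoided entirely by the paper.

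One further small inaccuracy: in the paper the $t$-derivative is not expressed via endpoint residues of the RHP solution. Endpoint residues give the $s$-derivative (Proposition \ref{prop1}); the $t$-derivative is instead written in terms of the coefficients $\Theta_1,\Theta_2,\Theta_3$ of the expansion of $\Theta(\lambda)$ at $\lambda=\infty$ (Proposition \ref{prop4}), because the $t$-dependence of the kernel is polynomial and the trace identity collapses to a residue at infinity. Your phrasing suggests the wrong location for where the derivative is read off, though this is a presentational slip rather than a gap in the argument.
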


\section{Riemann-Hilbert approach - setup and review}\label{sec2}
The classical theory of ordinary differential equations in the complex plane implies that system \eqref{lax1} has precisely one irregular singular point of Poincar\'e rank $3$ at infinity. This observation leads to the existence of seven canonical solutions $\Psi_n(\lambda)$ which are fixed uniquely by their asymptotics (for more detail see \cite{FIKN})
\begin{equation*}
	\Psi_n(\lambda)\sim\Big(I+O\big(\lambda^{-1}\big)\Big)e^{-i(\frac{4}{3}\lambda^3+x\lambda)\sigma_3},\ \ \lambda\rightarrow\infty,\ \lambda\in\Omega_n
\end{equation*}
where the canonical sectors $\Omega_n$ (compare Figure \ref{fig1}) are defined by
\begin{equation*}
    \Omega_n=\Big\{\lambda\in\mathbb{C}\ |\ \textnormal{arg}\
    \lambda\in\Big(\frac{\pi}{3}(n-2),\frac{\pi}{3}n\Big),
    n=1,\ldots,7\Big\}.
\end{equation*}
\begin{figure}[tbh]
  \begin{center}
  \psfragscanon
  \psfrag{1}{\footnotesize{$\Omega_2$}}
  \psfrag{2}{\footnotesize{$0<\textnormal{arg}\ \lambda<\frac{2\pi}{3}$}}
  \psfrag{3}{\footnotesize{$\Omega_3$}}
  \psfrag{4}{\footnotesize{$\frac{\pi}{3}<\textnormal{arg}\ \lambda<\pi$}}
  \psfrag{5}{\footnotesize{$\Omega_4$}}
  \psfrag{6}{\footnotesize{$\frac{2\pi}{3}<\textnormal{arg}\ \lambda<\frac{4\pi}{3}$}}
  \psfrag{7}{\footnotesize{$\pi<\textnormal{arg}\ \lambda<\frac{5\pi}{3}$}}
  \psfrag{8}{\footnotesize{$\Omega_5$}}
  \psfrag{9}{\footnotesize{$\Omega_6$}}
  \psfrag{10}{\footnotesize{$\frac{4\pi}{3}<\textnormal{arg}\
  \lambda<2\pi$}}
  \psfrag{11}{\footnotesize{$\Omega_7$}}
  \psfrag{12}{\footnotesize{$\frac{5\pi}{3}<\textnormal{arg}\ \lambda<\frac{7\pi}{3}$}}
  \psfrag{13}{\footnotesize{$\Omega_1$}}
  \psfrag{14}{\footnotesize{$-\frac{\pi}{3}<\textnormal{arg}\ \lambda<\frac{\pi}{3}$}}
  \includegraphics[width=11cm,height=6cm]{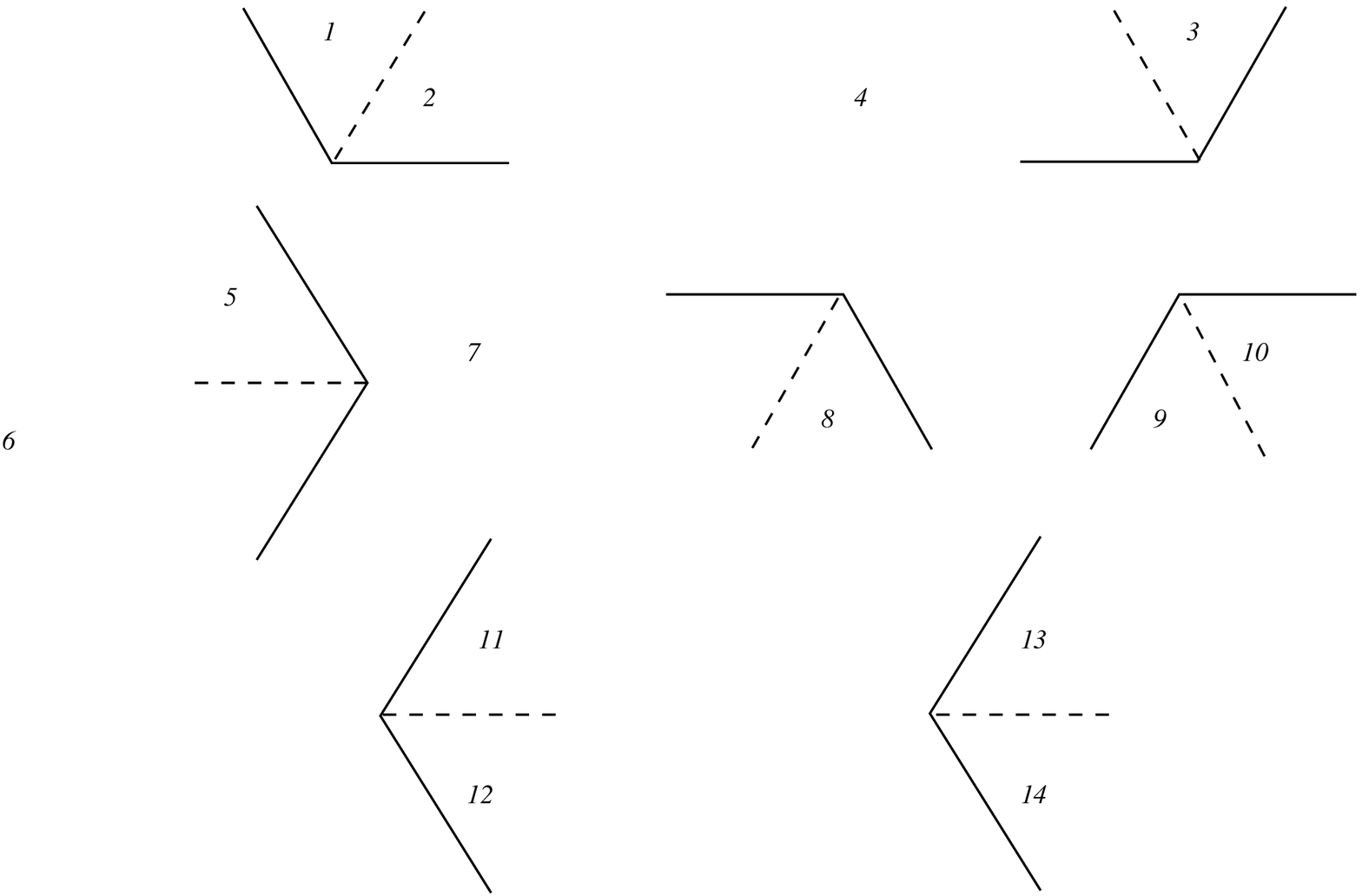}
  \end{center}
  \caption{Canonical sectors of system \eqref{lax1} with the dashed lines indicating where Re $\lambda^3=0$}
  \label{fig1}
\end{figure}

Moreover the presence of an irregular singularity gives us a non trivial Stokes phenomenon described by the Stokes matrices $S_n$:
\begin{equation*}
	S_n=\big(\Psi_n(\lambda)\big)^{-1}\Psi_{n+1}(\lambda)
\end{equation*}
In the given situation \eqref{PIIkernel} (see again \cite{FIKN}) these multipliers are 
\begin{equation}\label{stokesmat}
    S_1=\begin{pmatrix}
          1 & 0 \\
          -i & 1 \\
        \end{pmatrix},S_2=\begin{pmatrix}
                            1 & 0 \\
                            0 & 1 \\
                          \end{pmatrix},
        S_4=\begin{pmatrix}
                            1 & i \\
                            0 & 1 \\
                          \end{pmatrix},\
                          S_3=\bar{S}_1,S_5=\bar{S}_2,
                          S_6=\bar{S}_4,
\end{equation}
hence the required solution in \eqref{lax1} is the second and third canonical solution $\Psi(\lambda,x)\equiv \Psi_2(\lambda,x)=\Psi_3(\lambda,x)$ with asymptotics
\begin{equation}\label{initialasy}
	\Psi(\lambda,x)\sim\Big(I+O\big(\lambda^{-1}\big)\Big)e^{-i(\frac{4}{3}\lambda^3+x\lambda)\sigma_3},\hspace{0.5cm}\lambda\rightarrow\infty,\ 0<\textnormal{arg}\ \lambda<\pi
\end{equation}
and Stokes matrices as in \eqref{stokesmat}. Now that we have defined the integral kernel \eqref{PIIkernel} let us locate its structure within the algebra of integrable Fredholm operators and discuss the underlying Riemann-Hilbert approach. The given kernel belongs to an algebra of integrable operators first introduced in \cite{IIKS}: Let $\Gamma$ be an oriented contour in the complex plane $\mathbb{C}$ such as a Jordan curve. We are interested in operators of the form $\lambda I+K$ on $L^2(\Gamma)$, where $K$ denotes an integral operator with kernel
\begin{equation}\label{IIKStheo1}
	K(\lambda,\mu) = \frac{\sum_{i=1}^Mf_i(\lambda)h_i(\mu)}{\lambda-\mu},\hspace{0.5cm}\sum_{i=1}^Mf_i(\lambda)h_i(\lambda) = 0,\ \ M\in\mathbb{Z}_{\geq 1}
\end{equation}
with functions $f_i,h_i$ which are smooth up to the boundary of $\Gamma$. Given two operators $\lambda I+K,\check{\lambda}I+\check{K}$ of this type, the composition $(\lambda I+K)(\check{\lambda}I+\check{K})$ is again of the same form, hence we have a ring. Moreover let $K^t$ denote the real adjoint of $K$, i.e.
\begin{equation*}
	K^t(\lambda,\mu) = -\frac{\sum_{i=1}^Nh_i(\lambda)f_i(\mu)}{\lambda-\mu}.
\end{equation*}
Our results are based on the following facts of the theory of integrable operators (see e.g. \cite{DIZ}). First an algebraic Lemma, showing that the resolvent of $I-K$ is again integrable.
\begin{lemma}
Given an operator $I-K$ on $L^2(\Gamma)$ in the previous algebra with kernel \eqref{IIKStheo1}. Suppose the inverse $(I-K)^{-1}$ exists, then $I+R = (I-K)^{-1}$ lies again in the same algebra with
\begin{equation}\label{IIKStheo2}
	R(\lambda,\mu)=\frac{\sum_{i=1}^MF_i(\lambda)H_i(\mu)}{\lambda-\mu},\hspace{0.5cm}\sum_{i=1}^MF_i(\lambda)H_i(\lambda)=0
\end{equation}
and the functions $F_i,H_i$ are given by
\begin{equation}\label{IIKStheo3}
	F_i(\lambda)=\Big((I-K)^{-1}f_i\Big)(\lambda),\hspace{0.5cm} H_i(\lambda)=\Big((I-K^t)^{-1}h_i\Big)(\lambda).
\end{equation}
\end{lemma}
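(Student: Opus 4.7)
The plan is to exploit the commutator structure that lies at the heart of the theory of integrable operators, following the strategy of \cite{IIKS}. Let $X$ denote the operator of multiplication by the independent variable on $L^2(\Gamma)$, so that for any integral operator $A$ the kernel of the commutator $[X,A]$ is simply $(\lambda-\mu)A(\lambda,\mu)$. In this language, the very meaning of the integrability condition \eqref{IIKStheo1} is that $[X,K]$ is a finite rank operator; more precisely,
\begin{equation*}
[X,K]=\sum_{i=1}^{M}f_i\otimes h_i.
\end{equation*}

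Writing $R=(I-K)^{-1}-I$ and applying the general identity $[X,A^{-1}]=-A^{-1}[X,A]A^{-1}$ with $A=I-K$, I obtain
\begin{equation*}
[X,R]=(I-K)^{-1}\,[X,K]\,(I-K)^{-1}.
\end{equation*}
Substituting the rank $M$ expression for $[X,K]$ and applying the elementary rules $(I-K)^{-1}(u\otimes v)=\bigl((I-K)^{-1}u\bigr)\otimes v$ and $(u\otimes v)(I-K)^{-1}=u\otimes\bigl((I-K^t)^{-1}v\bigr)$ (the second being a consequence of $\bigl((I-K)^{-1}\bigr)^t=(I-K^t)^{-1}$), the right hand side collapses to
\begin{equation*}
[X,R]=\sum_{i=1}^{M}F_i\otimes H_i,
\end{equation*}
with $F_i$ and $H_i$ exactly as defined in \eqref{IIKStheo3}.

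Translating this operator identity back to the level of kernels yields the desired formula
\begin{equation*}
(\lambda-\mu)\,R(\lambda,\mu)=\sum_{i=1}^{M}F_i(\lambda)H_i(\mu),
\end{equation*}
which is the first half of \eqref{IIKStheo2}. The orthogonality condition $\sum_{i=1}^{M}F_i(\lambda)H_i(\lambda)=0$ then follows immediately by setting $\mu=\lambda$, once it is known that $R$ has a continuous kernel across the diagonal. I expect the only real technical point to be precisely this regularity statement for $R$, together with the standard care needed in manipulating commutators involving the (possibly unbounded) operator $X$; in the settings relevant for the present paper $\Gamma$ is a bounded contour and the functions $f_i,h_i$ are smooth up to its boundary, so these issues are routine and the above calculation can be justified verbatim by working in a stable dense subspace.
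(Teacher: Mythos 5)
The paper does not supply a proof of this lemma; it is quoted as a known fact from the theory of integrable operators with a reference to \cite{DIZ}, \cite{IIKS}. Your argument is correct and is, in fact, the standard proof from that literature: the observation that $[X,K]=\sum_i f_i\otimes h_i$ exactly encodes integrability, the general resolvent-commutator identity $[X,A^{-1}]=-A^{-1}[X,A]A^{-1}$ applied to $A=I-K$, and the push-through rules $(I-K)^{-1}(u\otimes v)=\bigl((I-K)^{-1}u\bigr)\otimes v$ and $(u\otimes v)(I-K)^{-1}=u\otimes\bigl((I-K^t)^{-1}v\bigr)$ (the latter because $K^t$ is the real transpose) together yield $[X,R]=\sum_i F_i\otimes H_i$, which is the kernel identity $(\lambda-\mu)R(\lambda,\mu)=\sum_i F_i(\lambda)H_i(\mu)$. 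The one point you leave implicit — that $R$ has a kernel continuous up to the diagonal, so setting $\mu=\lambda$ gives $\sum_i F_i H_i=0$ — is genuinely needed and is exactly where the smoothness of $f_i,h_i$ and the boundedness of the contour enter: from $R=K+KR$ one sees $R(\lambda,\mu)$ is as regular as $K(\lambda,\mu)$, and $K$ is continuous on the closed diagonal precisely because of the hypothesis $\sum_i f_ih_i=0$. You flagged this correctly as the only real technical step, so the proposal is complete.
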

Secondly an analytical Lemma, which connects integrable operators to a Riemann-Hilbert problem.
\begin{lemma}
Let $K$ be of integrable type such that $(I-K)^{-1}$ exists and let $Y=Y(z)$ denote the unique solution of the following $N\times N$ Riemann-Hilbert problem (RHP)
\begin{itemize}
	\item $Y(z)$ is analytic for $z\in\mathbb{C}\backslash \Gamma$
	\item On the contour $\Gamma$, the boundary values of the function $Y(z)$ satisfy the jump relation
	\begin{equation*}
		Y_+(z)=Y_-(z)\big(I-2\pi i f(z)h^t(z)\big),\hspace{0.5cm} z\in\Gamma
	\end{equation*}
	where $f(z)=\big(f_1(z),\ldots,f_N(z)\big)^t$ and similarly $h(z)=\big(h_1(z),\ldots,h_N(z)\big)^t$
	\item At an endpoint of the contour $\Gamma$, $Y(z)$ has no more than a logarithmic singularity
	\item As $z\rightarrow\infty$
	\begin{equation*}
		Y(z)=I+O\big(z^{-1}\big)
	\end{equation*}
\end{itemize}
Then $Y(z)$ determines the resolvent kernel via
\begin{equation}\label{IIKStheo4}
	F(z)=Y(z)f(z),\hspace{0.5cm}H(z)=\big(Y^t(z)\big)^{-1}h(z)
\end{equation}
and conversely the solution of the above RHP is expressible in terms of the function $F(z)$ using the Cauchy integral
\begin{equation}\label{IIKStheo5}
	Y(z)=I-\int\limits_{\Gamma}F(w)h^t(w)\frac{dw}{w-z}.
\end{equation}
\end{lemma}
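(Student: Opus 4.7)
The plan is to construct $Y(z)$ directly from the resolvent data of the first lemma, verify it solves the RHP, and then use uniqueness to close the loop between the two formulas (\ref{IIKStheo4}) and (\ref{IIKStheo5}). Concretely, set $F_i = (I-K)^{-1} f_i$ and $H_i = (I-K^t)^{-1} h_i$ (both well-defined by hypothesis and the first lemma), let $f, F, h$ denote the column vectors with entries $f_i, F_i, h_i$, and \emph{define} $Y(z) := I - \int_\Gamma F(w) h^t(w) \frac{dw}{w-z}$ for $z \in \mathbb{C}\setminus\Gamma$. Analyticity of $Y$ off $\Gamma$ is manifest; the geometric expansion of $(w-z)^{-1}$ at infinity gives $Y(z) = I + O(z^{-1})$; and the standard endpoint estimate for Cauchy transforms shows that $Y$ grows no faster than a logarithm near endpoints of $\Gamma$.

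The heart of the argument is verifying the multiplicative jump $Y_+ = Y_-(I - 2\pi i f h^t)$. Sokhotski-Plemelj gives the additive jump $Y_+(z) - Y_-(z) = -2\pi i F(z) h^t(z)$ on $\Gamma$, so it is enough to prove the identity $F(\lambda) = Y_-(\lambda) f(\lambda)$ for $\lambda \in \Gamma$. Applying the minus-boundary Plemelj formula to $Y_- f$ produces a principal-value integral plus a term $\pi i F(\lambda) h^t(\lambda) f(\lambda)$; the latter vanishes by the compatibility condition $h^t f = \sum_i h_i f_i = 0$ built into (\ref{IIKStheo1}), and the remaining principal-value integral is precisely the component-wise form of the resolvent equation $F_i(\lambda) = f_i(\lambda) + (K F_i)(\lambda)$. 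Note that the same condition $h^t f = 0$ makes $Y_- f = Y_+ f$, so the notation $Y(\lambda) f(\lambda)$ on $\Gamma$ is unambiguous. The identity $F = Y_- f$ then converts the additive Plemelj jump into the required multiplicative form.

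For uniqueness, the jump matrix is a rank-one perturbation with $\det(I - 2\pi i f h^t) = 1 - 2\pi i h^t f = 1$ by the Sylvester identity, so $\det Y$ has no jump and extends through the at-worst-logarithmic endpoints as a bounded entire function equal to $1$ at infinity; hence $\det Y \equiv 1$ and $Y$ is pointwise invertible. If $\widetilde Y$ is any other solution, then $\widetilde Y Y^{-1}$ is analytic across $\Gamma$, has removable endpoint singularities, and tends to $I$ at infinity, forcing $\widetilde Y \equiv Y$ by Liouville. The formula for $H$ is obtained by applying the same reasoning to $K^t$: using $f^t h = 0$ together with the Sherman-Morrison identity one checks that $(Y^t)^{-1}$ solves the RHP associated with the adjoint data, and uniqueness identifies $(Y^t)^{-1} h$ with $(I-K^t)^{-1} h = H$. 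The converse direction of the lemma -- that an abstractly given solution $Y$ of the RHP automatically produces the resolvent via (\ref{IIKStheo4}) and is representable by the Cauchy integral (\ref{IIKStheo5}) -- is then immediate from uniqueness, since the $Y$ we constructed manifestly satisfies both formulas.

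The main obstacle will be the sign-and-orientation bookkeeping in Sokhotski-Plemelj, combined with the pinpoint use of the compatibility condition $h^t f = 0$ in three separate places (to make $Yf$ continuous across $\Gamma$, to annihilate the $\pi i$ boundary term produced by Plemelj, and to force $\det(I - 2\pi i f h^t) = 1$). Once these conventions are fixed, the remainder is a formal Cauchy-integral computation plus a Liouville-type uniqueness argument.
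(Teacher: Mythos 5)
Your proposal is correct and is precisely the standard IIKS/DIZ argument: construct $Y$ by the Cauchy integral with density $F h^t$, use Plemelj together with the cancellation $h^t f=0$ to turn the additive jump into the multiplicative one via $F=Y_\pm f$, get $\det Y\equiv 1$ and uniqueness by Liouville, and treat $H$ through the transposed data. The paper itself states this lemma without proof, citing \cite{IIKS}, \cite{DIZ}, so your argument coincides with the proof the paper is relying on; the only cosmetic quibble is that near an endpoint $\det Y$ (and $\widetilde Y Y^{-1}$) is not literally bounded but has at worst logarithmic-power growth, which is still a removable singularity, as your argument in fact requires.
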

Let us use this general setup in the given situation \eqref{PIIkernel}. We have
\begin{equation}\label{spec}
	K_{\textnormal{PII}}(\lambda,\mu) = \frac{f^t(\lambda)h(\mu)}{\lambda-\mu},\hspace{0.2cm} f(\lambda)=\frac{i}{\sqrt{2\pi}}\binom{\psi_{11}(\lambda,x)}{\psi_{21}(\lambda,x)},\
    \ 
    h(\mu)=\frac{i}{\sqrt{2\pi}}\binom{\psi_{21}(\mu,x)}{-\psi_{11}(\mu,x)}
\end{equation}
hence the $Y$-RHP of Lemma $2$ reads as
\begin{itemize}
	\item $Y(\lambda)$ is analytic for $\lambda\in\mathbb{C}\backslash[-s,s]$
	\item Orienting the line segment $[-s,s]$ from left to right, the following jump holds
	\begin{equation*}
		Y_+(\lambda)=Y_-(\lambda)\begin{pmatrix}
                                   1+i\psi_{11}(\lambda)\psi_{21}(\lambda) & -i\psi_{11}^2(\lambda) \\
                                   i\psi_{21}^2(\lambda) & 1-i\psi_{11}(\lambda)\psi_{21}(\lambda) \\
                                 \end{pmatrix},\ \ \
                                 \lambda\in[-s,s]
    \end{equation*}
  \item At the endpoints $\lambda=\pm s$, $Y(\lambda)$ has logarithmic singularities, i.e.
   \begin{equation*}
   	Y(\lambda)=O\big(\ln(\lambda\mp s)\big),\ \ \lambda\rightarrow\pm s
   \end{equation*}
  \item As $\lambda\rightarrow\infty$ we have 
  \begin{equation*}
  	Y(\lambda)=I+\frac{m_1}{\lambda}+O\big(\lambda^{-2}\big).
  \end{equation*}
\end{itemize} 
The given jump matrix on the segment $[-s,s]$ can be factorized using the unimodular fundamental solution $\Psi(\lambda)$ of \eqref{lax1} corresponding to the choices \eqref{stokesmat} and \eqref{initialasy}
\begin{eqnarray*}
	G(\lambda) &=& \begin{pmatrix}
                                   1+i\psi_{11}(\lambda)\psi_{21}(\lambda) & -i\psi_{11}^2(\lambda) \\
                                   i\psi_{21}^2(\lambda) & 1-i\psi_{11}(\lambda)\psi_{21}(\lambda) \\
                                 \end{pmatrix}\\
                                 &=&\begin{pmatrix}
                                                 \psi_{11}(\lambda) & \psi_{12}(\lambda) \\
                                                 \psi_{21}(\lambda) & \psi_{22}(\lambda) \\
                                               \end{pmatrix}\begin{pmatrix}
                                                              1 & -i \\
                                                              0 & 1 \\
                                                            \end{pmatrix}\begin{pmatrix}
                                                                           \psi_{22}(\lambda) & -\psi_{12}(\lambda) \\
                                                                           -\psi_{21}(\lambda) & \psi_{11}(\lambda) \\
                                                                         \end{pmatrix}\\
                                 &=&\Psi(\lambda)\begin{pmatrix}
                                                   1 & -i \\
                                                   0 & 1 \\
                                                 \end{pmatrix}\big(\Psi(\lambda)\big)^{-1}.
\end{eqnarray*}
This motivates the first transformation of the RHP.


\section{First transformation of the RHP}\label{sec3}
We make the following substitution in the original $Y$-RHP
\begin{equation}\label{firsttrafo}
	\tilde{X}(\lambda)=Y(\lambda)\Psi(\lambda),\ \ \lambda\in\mathbb{C}\backslash[-s,s].
\end{equation}
This leads to a RHP for the function $\tilde{X}(\lambda)$
\begin{itemize}
	\item $\tilde{X}(\lambda)$ is analytic for $\lambda\in\mathbb{C}\backslash[-s,s]$
	\item The following jump holds
	\begin{equation}\label{Xtildejump}
		\tilde{X}_+(\lambda) = \tilde{X}_-(\lambda)\begin{pmatrix}
		1 & -i\\
		0 & 1\\
		\end{pmatrix},\ \ \lambda\in[-s,s]
	\end{equation}
	\item As $\lambda\rightarrow \pm s$, we have
	\begin{equation*}
		\tilde{X}(\lambda) = O\big(\ln(\lambda\mp s)\big)
	\end{equation*}
	\item At infinity,
	\begin{equation*}
		\tilde{X}(\lambda) = \Big(I+O\big(\lambda^{-1}\big)\Big)\Psi(\lambda),\ \ \lambda\rightarrow\infty
	\end{equation*}
\end{itemize}
We will now use the Stokes phenomenon \eqref{stokesmat} of $\Psi(\lambda)$ and introduce more cuts to the Riemann-Hilbert problem in order to uniformize its behavior at infinity.


\section{Second transformation of the RHP - uniformization}\label{sec4}

Let 
\begin{equation*}
    X(\lambda)=\tilde{X}(\lambda)\left\{
                                   \begin{array}{ll}
                                     I, & \hbox{$\lambda\in\hat{\Omega}_1$,} \\
                                     S_3, & \hbox{$\lambda\in\hat{\Omega}_2$,} \\
                                     S_3S_4, & \hbox{$\lambda\in\hat{\Omega}_3$,} \\
                                     S_3S_4S_6, & \hbox{$\lambda\in\hat{\Omega}_4$,}
                                   \end{array}
                                 \right.
\end{equation*}
with
\begin{eqnarray*}
	\Gamma_1 = \Big\{\lambda\in\mathbb{C}:\ \textnormal{arg}(\lambda-s) = \frac{\pi}{6}\Big\},&& \Gamma_3 = \Big\{\lambda\in\mathbb{C}:\ \textnormal{arg}(\lambda+s) = \frac{5\pi}{6}\Big\},\\
	\Gamma_4 = \Big\{\lambda\in\mathbb{C}:\ \textnormal{arg}(\lambda+s) = -\frac{5\pi}{6}\Big\},&& \Gamma_6 = \Big\{\lambda\in\mathbb{C}:\ \textnormal{arg}(\lambda-s) = -\frac{\pi}{6}\Big\},
\end{eqnarray*}
then $X(\lambda)$ satisfies the following RHP, depicted in Figure \ref{fig2}
\begin{figure}[tbh]
  \begin{center}
  \psfragscanon
  \psfrag{1}{\footnotesize{$\Gamma_1$}}
  \psfrag{2}{\footnotesize{$\Gamma_3$}}
  \psfrag{3}{\footnotesize{$\Gamma_4$}}
  \psfrag{4}{\footnotesize{$\Gamma_6$}}
  \psfrag{5}{\footnotesize{$-s$}}
  \psfrag{6}{\footnotesize{$s$}}
  \psfrag{7}{\footnotesize{$\hat{\Omega}_1$}}
  \psfrag{8}{\footnotesize{$\hat{\Omega}_2$}}
  \psfrag{9}{\footnotesize{$\hat{\Omega}_3$}}
  \psfrag{10}{\footnotesize{$\hat{\Omega}_4$}}
  \includegraphics[width=6cm,height=4cm]{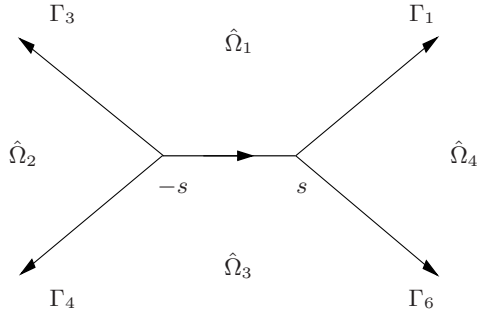}
  \end{center}
  \caption{Jump contours of the $X$-RHP}
  \label{fig2}
\end{figure}
\begin{itemize}
    \item $X(\lambda)$ is analytic for
    $\lambda\in\mathbb{C}\backslash\big( [-s,s]\cup\bigcup_k\Gamma_k\big)$
    \item Along the infinite rays $\Gamma_k$, $X(\lambda)$ has jumps
    described by the Stokes matrices
    \begin{equation*}
        X_+(\lambda)=X_-(\lambda)S_k,\ \ \lambda\in\Gamma_k,
    \end{equation*}
    whereas on the line segment $[-s,s]$ we have the following jump
    \begin{equation*}
        X_+(\lambda)=X_-(\lambda)\begin{pmatrix}
                                   0 & -i \\
                                   -i & 0 \\
                                 \end{pmatrix},\ \ \lambda\in[-s,s].
    \end{equation*}
    \item In a neighborhood of the endpoints $\lambda=\pm s$,
    \begin{equation}\label{Xlocal}
        X(\lambda)=\check{X}(\lambda)\begin{pmatrix}
                                     1 & -\frac{1}{2\pi}\ln\frac{\lambda-s}{\lambda+s} \\
                                     0 & 1 \\
                                   \end{pmatrix}\left\{
                                   \begin{array}{ll}
                                     I, & \hbox{$\lambda\in\hat{\Omega}_1$,} \\
                                     S_3, & \hbox{$\lambda\in\hat{\Omega}_2$,} \\
                                     S_3S_4, & \hbox{$\lambda\in\hat{\Omega}_3$,} \\
                                     S_3S_4S_6, & \hbox{$\lambda\in\hat{\Omega}_4$,}
                                   \end{array}
                                 \right.
    \end{equation}
    where $\check{X}(\lambda)$ is analytic at $\lambda=\pm s$ and the
    branch of the logarithm is fixed by the condition
    $-\pi<\textnormal{arg}\
    \frac{\lambda-s}{\lambda+s}<\pi$.\footnote{The local behavior \eqref{Xlocal} of $X(\lambda)$ at the endpoints
		$\pm s$ can be derived directly from the a-priori information $\tilde{X}(\lambda)=O(\ln(\lambda\mp s)),\lambda\rightarrow\pm s$ and the jump condition \eqref{Xtildejump}.}
    \item As $\lambda\rightarrow\infty$ the following
    asymptotical behavior holds
    \begin{equation}\label{Xasyinfinity}
    	X(\lambda) = \Big(I+\frac{m_1}{\lambda}+O\big(\lambda^{-2}\big)\Big)\Big(I+\frac{m_1^{HM}}{\lambda}+O\big(\lambda^{-2}\big)\Big)e^{-i(\frac{4}{3}\lambda^3+x\lambda)\sigma_3}
    \end{equation}
    with
    \begin{equation*}
    	m_1^{HM}=\frac{1}{2}\begin{pmatrix}
                          -iv & u \\
                          u & iv \\
                        \end{pmatrix},\hspace{1cm}v=(u_x)^2-xu^2-u^4,\hspace{0.5cm} v_x=-u^2.
		\end{equation*}
\end{itemize}
As we are going to see in the next sections the latter $X$-RHP can be solved asymptotically by approximating its global solution with local model functions. Before we start this analysis in detail, we first connect the solution of the $X$-RHP to the Fredholm determinant $\det(I-K_{\textnormal{PII}})$.


\section{Logarithmic derivatives - connection to $X$-RHP}\label{sec5}
We wish to express logarithmic derivatives of the determinant $\det(I-K_{\textnormal{PII}})$ in terms of the solution of the $X$-RHP. To this end recall the following classical identity
\begin{equation}\label{traceform}
    \frac{d}{ds}\ln\det(I-K_{\textnormal{PII}})=-\textnormal{trace}\bigg((I-K_{\textnormal{PII}})^{-1}\frac{dK_{\textnormal{PII}}}{ds}\bigg).
\end{equation}
In our situation
\begin{equation*}
    \frac{dK_{\textnormal{PII}}}{ds}(\lambda,\mu)=K_{\textnormal{PII}}(\lambda,\mu)\big(\delta(\mu-s)+\delta(\mu+s)\big),
\end{equation*}
where, by definition
\begin{equation*}
    \int\limits_{-s}^s\delta(w\mp s)f(w)dw = f(\pm s)
\end{equation*}
and therefore
\begin{equation*}
    -\textnormal{trace}\ \Big((I-K_{\textnormal{PII}})^{-1}\frac{dK_{\textnormal{PII}}}{ds}\Big) =
    -R(s,s)-R(-s,-s)
\end{equation*}
with $R(\lambda,\mu)$ denoting the kernel (see \eqref{IIKStheo2}) of the resolvent
$R=(I-K_{\textnormal{PII}})^{-1}K_{\textnormal{PII}}$. The latter derivative can be simplified using the equations (see \eqref{spec})
\begin{equation*}
	f_1(\lambda)=-h_2(\lambda),\hspace{0.5cm} f_2(\lambda)=h_1(\lambda)
\end{equation*} 
as well as the identity $\det Y(\lambda)\equiv 1$, which is a direct consequence of the unimodularity of the jump matrix $G(\lambda)$ and Liouville's theorem. We have,
\begin{equation*}
	R(\lambda,\mu)=\frac{F_1(\lambda)H_1(\mu)+F_2(\lambda)H_2(\mu)}{\lambda-\mu}=\frac{F_1(\lambda)F_2(\mu)-F_2(\lambda)F_1(\mu)}{\lambda-\mu}.
\end{equation*}
Since $R(\lambda,\mu)$ is continuous along the diagonal $\lambda=\mu$ we obtain further
\begin{equation}\label{resolventRHP}
	R(s,s)=F_1'(s)F_2(s)-F_2'(s)F_1(s),\ \ R(-s,-s)=F_1'(-s)F_2(-s)-F_2'(-s)F_1(-s)
\end{equation}
provided $F_i$ is analytic at $\lambda=\pm s$. One way to see this is a follows. Use the connection
\begin{equation*}
    X(\lambda)= Y(\lambda)\Psi(\lambda)\left\{
                                   \begin{array}{ll}
                                     I, & \hbox{$\lambda\in\hat{\Omega}_1$,} \\
                                     S_3, & \hbox{$\lambda\in\hat{\Omega}_2$,} \\
                                     S_3S_4, & \hbox{$\lambda\in\hat{\Omega}_3$,} \\
                                     S_3S_4S_6, & \hbox{$\lambda\in\hat{\Omega}_4$,}
                                   \end{array}
                                 \right.\equiv
                                 Y(\lambda)\Psi(\lambda)S(\lambda)
\end{equation*}
and \eqref{IIKStheo4}
\begin{equation*}
	F(\lambda)=X(\lambda)\big(S(\lambda)\big)^{-1}\big(\Psi(\lambda)\big)^{-1}f(\lambda)= X(\lambda)\big(S(\lambda)\big)^{-1}\frac{i}{\sqrt{2\pi}}\binom{1}{0}
\end{equation*}
as well as \eqref{Xlocal} to derive the following local identity
\begin{equation}\label{localF}
	F(\lambda)=\check{X}(\lambda)\begin{pmatrix}
                                 1 & -\frac{1}{2\pi}\ln\frac{\lambda-s}{\lambda+s} \\
                                 0 & 1 \\
                               \end{pmatrix}S(\lambda)\big(S(\lambda)\big)^{-1}
                               \frac{i}{\sqrt{2\pi}}\binom{1}{0}= \check{X}(\lambda)\frac{i}{\sqrt{2\pi}}\binom{1}{0},
\end{equation}
valid in a vicinity of $\lambda=\pm s$. But this proves analyticity of $F(\lambda)$ at the endpoints and as we shall see later on, \eqref{localF} is all we need to connect \eqref{traceform} via \eqref{resolventRHP} to the solution of the $X$-RHP. We summarize
\begin{proposition}\label{prop1} The logarithmic $s$-derivative of the Fredholm determinant \eqref{Pdet} can be expressed as
\begin{eqnarray}\label{sidentity}
	\frac{d}{ds}\ln\det(I-K_{\textnormal{PII}})&=&-R(s,s)-R(-s,-s),\\
	&& \ R(\pm s,\pm s)=F_1'(\pm s)F_2(\pm s)-F_2'(\pm s)F_1(\pm s)\nonumber
\end{eqnarray}
and the connection to the $X$-RHP is established through
\begin{equation*}
	F(\lambda)=\check{X}(\lambda)\frac{i}{\sqrt{2\pi}}\binom{1}{0},
\end{equation*}
where $\check{X}(\lambda)$ is analytic in a neighborhood of $\lambda=\pm s$.
\end{proposition}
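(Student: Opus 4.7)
The plan is to prove Proposition \ref{prop1} by formalizing the computation already outlined in this section. First, I would apply the Jacobi variation formula \eqref{traceform}. Since $K_{\textnormal{PII}}(\lambda,\mu)$ is smooth in $\lambda,\mu$ and the parameter $s$ enters only through the endpoints of the interval $(-s,s)$, differentiating the operator in $s$ produces distributional contributions $\delta(\mu\mp s)$ concentrated at the endpoints. The Fredholm trace then collapses to $-R(s,s)-R(-s,-s)$, where $R$ is the resolvent kernel of $(I-K_{\textnormal{PII}})^{-1}K_{\textnormal{PII}}$.

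Second, I would invoke Lemma 1 to obtain the integrable representation \eqref{IIKStheo2} of $R$, combined with the specific identifications $f_1=-h_2$, $f_2=h_1$ from \eqref{spec}. To convert $H_1,H_2$ back into $F_1,F_2$ one needs $\det Y\equiv 1$: the jump matrix $G(\lambda)=\Psi\bigl(\begin{smallmatrix}1&-i\\0&1\end{smallmatrix}\bigr)\Psi^{-1}$ is unimodular since $\Psi\in\mathrm{SL}_2$, so $\det Y$ extends entire across $[-s,s]$; together with the normalization $Y\to I$ at infinity and Liouville's theorem one concludes $\det Y\equiv 1$. With this in hand, $R$ reduces to $\bigl(F_1(\lambda)F_2(\mu)-F_2(\lambda)F_1(\mu)\bigr)/(\lambda-\mu)$, and a Taylor expansion around the diagonal $\mu=\lambda$ yields the stated expression for $R(\pm s,\pm s)$, \emph{provided} the $F_i$ are analytic at the endpoints.

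The main obstacle is precisely establishing this endpoint analyticity, since $Y$ has logarithmic singularities at $\pm s$ while $\Psi$ is entire. The strategy is to pass to the $X$-RHP via $X=Y\Psi S(\lambda)$, giving $F=X S^{-1}\Psi^{-1}f$. The pivotal observation is that
\[
\Psi^{-1}(\lambda)f(\lambda)=\frac{i}{\sqrt{2\pi}}\binom{1}{0},
\]
a constant in $\lambda$ (being proportional to the first column of $\Psi^{-1}\Psi=I$). Substituting the explicit local structure \eqref{Xlocal} of $X$ near $\pm s$, the sector-dependent Stokes factors $S(\lambda)$ cancel against their inverses, while the remaining upper-triangular logarithmic factor leaves the vector $\binom{1}{0}$ invariant because its single logarithmic entry sits in the $(1,2)$ position. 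What survives is $F(\lambda)=\check{X}(\lambda)\frac{i}{\sqrt{2\pi}}\binom{1}{0}$, analytic at $\pm s$ by the analyticity of $\check{X}$. This cancellation, showing how the logarithmic singularities and the discontinuous Stokes factors conspire to leave $F$ regular at the endpoints, is the only substantive step; once it is verified, the identities in the proposition follow immediately, and the displayed formula for $F$ is exactly what \eqref{resolventRHP} requires in order to connect the $s$-derivative of $\ln\det(I-K_{\textnormal{PII}})$ to the $X$-RHP in the subsequent sections.
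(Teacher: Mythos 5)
Your proof follows essentially the same approach as the paper: the Jacobi variation formula with endpoint delta contributions, reduction to the resolvent kernel via $\det Y\equiv 1$ and the relations $f_1=-h_2$, $f_2=h_1$, and the crucial cancellation $F=X S^{-1}\Psi^{-1}f = \check{X}\cdot\frac{i}{\sqrt{2\pi}}\binom{1}{0}$ via $\Psi^{-1}f=\frac{i}{\sqrt{2\pi}}\binom{1}{0}$ and the observation that the logarithmic factor in \eqref{Xlocal}, being upper-triangular with the log in the $(1,2)$ slot, fixes $\binom{1}{0}$. This matches the paper's derivation of \eqref{localF}, so no further commentary is needed.
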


Besides the logarithmic $s$-derivative we also differentiate with respect to $x$
\begin{equation*}
    \frac{d}{dx}\ln\det(I-K_{\textnormal{PII}})=-\textnormal{trace}\
    \bigg((I-K_{\textnormal{PII}})^{-1}\frac{dK_{\textnormal{PII}}}{dx}\bigg).
\end{equation*}
In our situation the kernel depends explicitly on $x$, since (see e.g. \cite{FIKN})
\begin{equation*}
	\frac{d\Psi}{dx} = U(\lambda,x)\Psi,\hspace{0.5cm} U(\lambda,x) = -i\lambda\sigma_3+i\begin{pmatrix}
	0 & u\\
	-u & 0\\
	\end{pmatrix},
\end{equation*}
we have
\begin{eqnarray*}
    \frac{dK_{\textnormal{PII}}}{dx}(\lambda,\mu) &=&
    \frac{i}{2\pi}\big(\psi_{21}(\lambda,x)\psi_{11}(\mu,x)+\psi_{21}(\mu,x)\psi_{11}(\lambda,x)\big)\\
    &=&i\big(f_2(\lambda)h_2(\mu)-f_1(\lambda)h_1(\mu)\big)
\end{eqnarray*}
and with \eqref{IIKStheo3}
\begin{equation*}
    -\textnormal{trace}\ \bigg((I-K_{\textnormal{PII}})^{-1}\frac{dK_{\textnormal{PII}}}{dx}\bigg)
    =
    -i\int\limits_{-s}^s\big(F_2(\lambda)h_2(\lambda)-F_1(\lambda)h_1(\lambda)\big)\
    d\lambda.
\end{equation*}
On the other hand the Cauchy integral \eqref{IIKStheo4} implies
\begin{equation*}
	Y(\lambda)=I+\frac{m_1}{\lambda}+O\big(\lambda^{-2}\big),\ \ \lambda\rightarrow\infty;\ m_1=\int\limits_{-s}^sF(w)h^t(w)dw
\end{equation*}
so
\begin{equation*}
	\frac{d}{dx}\ln\det(I-K_{\textnormal{PII}})=i\big(m_1^{11}-m_1^{22}\big),\ \ \ m_1=\big(m_1^{ij}\big)
\end{equation*}
and the connection to the $X$-RHP is established via \eqref{Xasyinfinity}. Again we summarize
\begin{proposition}\label{prop2}
The logarithmic $x$-derivative of the given Fredholm determinant can be expressed as
\begin{equation}\label{xidentity}
	\frac{d}{dx}\ln\det(I-K_{\textnormal{PII}})=i\big(X_1^{11}-X_1^{22}\big)-v
\end{equation}
with
\begin{equation*}
	X(\lambda)\sim \Big(I+\frac{X_1}{\lambda}+O\big(\lambda^{-2}\big)\Big)e^{-i(\frac{4}{3}\lambda^3+x\lambda)\sigma_3},\ \ \lambda\rightarrow\infty; \ X_1=\big(X_1^{ij}\big).
\end{equation*}
\end{proposition}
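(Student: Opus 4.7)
I would follow the classical recipe for integrable Fredholm determinants: first invoke the trace identity, then compute the $x$-derivative of the kernel in closed form using the $x$-equation of the Lax pair that $\Psi(\lambda,x)$ satisfies, then recognize the resulting trace as a difference of diagonal entries of the matrix $m_1$ from the $Y$-RHP, and finally convert $m_1$ into the corresponding coefficient $X_1$ of the $X$-RHP via the dressing transformation $\tilde X = Y\Psi$.

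Concretely, I start from
\[
\frac{d}{dx}\ln\det(I-K_{\textnormal{PII}}) = -\textnormal{trace}\,\Bigl((I-K_{\textnormal{PII}})^{-1}\frac{dK_{\textnormal{PII}}}{dx}\Bigr).
\]
All $x$-dependence in \eqref{PIIkernel} enters through $\Psi(\lambda,x)$, and $d\Psi/dx = U(\lambda,x)\Psi$ gives the componentwise relations
\[
\partial_x\psi_{11} = -i\lambda\psi_{11} + iu\psi_{21}, \qquad \partial_x\psi_{21} = i\lambda\psi_{21} - iu\psi_{11}.
\]
Substituting these into the differentiated numerator of \eqref{PIIkernel}, the $u$-terms cancel in pairs (the off-diagonal part of $U$ is antisymmetric), while the $\sigma_3$-terms produce an overall factor $(\lambda-\mu)$ that cancels the denominator. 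What remains is a rank-two kernel of exactly the form $i\bigl(f_2(\lambda)h_2(\mu) - f_1(\lambda)h_1(\mu)\bigr)$, with $f_i,h_i$ as in \eqref{spec}. For such a kernel, the trace against $(I-K_{\textnormal{PII}})^{-1}$ collapses via \eqref{IIKStheo3} to $-i\int_{-s}^s \bigl(F_2h_2 - F_1h_1\bigr)(\lambda)\,d\lambda$.

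Next, I identify this integral with entries of $m_1$. Expanding \eqref{IIKStheo5} at $\lambda=\infty$ gives $m_1 = \int_{-s}^s F(w)h^t(w)\,dw$, whence $m_1^{11} = \int F_1 h_1$ and $m_1^{22} = \int F_2 h_2$, so $\frac{d}{dx}\ln\det(I-K_{\textnormal{PII}}) = i(m_1^{11} - m_1^{22})$. To replace $m_1$ by $X_1$, I restrict to the sector $\hat\Omega_1$ in which the Stokes correction $S(\lambda)$ equals the identity, so that $X(\lambda) = Y(\lambda)\Psi(\lambda)$ there. Multiplying the two large-$\lambda$ expansions and comparing with \eqref{Xasyinfinity} yields $X_1 = m_1 + m_1^{HM}$. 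Inserting the entries $(m_1^{HM})^{11} = -iv/2$ and $(m_1^{HM})^{22} = iv/2$ converts $i(m_1^{11}-m_1^{22})$ into $i(X_1^{11} - X_1^{22}) - v$, which is exactly \eqref{xidentity}.

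The only nontrivial step is the algebraic cancellation in the computation of $dK_{\textnormal{PII}}/dx$; this is a textbook Jimbo-Miwa-Ueno-type manipulation for integrable kernels built from Lax eigenfunctions, and once it is performed the remainder is the standard Its-Izergin-Korepin-Slavnov dictionary together with a direct comparison of asymptotics at $\lambda = \infty$. Differentiation under the trace is justified by trace-class continuity of both $K_{\textnormal{PII}}$ and $dK_{\textnormal{PII}}/dx$ on $L^2(-s,s)$, which follows from smoothness of the Hastings-McLeod solution and of the associated $\Psi$-function in $\lambda$.
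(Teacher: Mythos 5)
Your proposal is correct and follows essentially the same route as the paper: trace formula, explicit computation of $dK_{\textnormal{PII}}/dx$ from the Lax pair (your cancellation of the $u$-terms and factorization of $\lambda-\mu$ is exactly what underlies the paper's displayed formula for the derivative kernel), identification with $m_1$ via the Cauchy integral, and conversion to $X_1$ through $X=Y\Psi$ in $\hat\Omega_1$ with $X_1 = m_1 + m_1^{HM}$. The arithmetic $i(m_1^{11}-m_1^{22}) = i(X_1^{11}-X_1^{22}) - v$ using $(m_1^{HM})^{11}=-iv/2$, $(m_1^{HM})^{22}=iv/2$ is also as in the paper.
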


As already indicated in the introduction, Proposition \ref{prop1} and \ref{prop2} are enough to determine the large $s$-asymptotics of $\ln\det(I-K_{\textnormal{PII}})$ up to the constant term.


\section{Differential equations associated with the determinant $\det(I-K_{\textnormal{PII}})$}\label{sec6}
Our consideration relies only on the underlying Riemann-Hilbert problem. Nevertheless, before we go further ahead in the asymptotical analysis, we would like to take a short look into the differential equations associated with the $X$-RHP.
\bigskip

To this end we notice that the $X$-RHP has unimodular constant jump matrices, thus the well-defined logarithmic derivatives $X_{\lambda}X^{-1}(\lambda),X_sX^{-1}(\lambda)$ and $X_xX^{-1}(\lambda)$ are rational functions. Indeed using \eqref{Xlocal} as well as \eqref{Xasyinfinity} we have
\begin{equation}\label{diffeq1}
    \frac{\partial X}{\partial\lambda} =
    \bigg[-4i\lambda^2\sigma_3+4i\lambda\begin{pmatrix}
                                          0 & b \\
                                          -c & 0 \\
                                        \end{pmatrix}+\begin{pmatrix}
                                                        d & e \\
                                                        f & -d \\
                                                      \end{pmatrix}+\frac{A}{\lambda-s}-\frac{B}{\lambda+s}\bigg]X\equiv
                                                      \mathcal{A}(\lambda,s,x)X
\end{equation}
where
\begin{equation*}
    A=-\frac{1}{2\pi}\check{X}(s)\begin{pmatrix}
                                                       0 & 1 \\
                                                       0 & 0 \\
                                                     \end{pmatrix}\big(\check{X}(s)\big)^{-1};\
                                                     \ \
                                                     B=-\frac{1}{2\pi}\check{X}(-s)\begin{pmatrix}
                                                                             0 & 1 \\
                                                                             0 & 0 \\
                                                                           \end{pmatrix}
                                                                           \big(\check{X}(-s)\big)^{-1}
\end{equation*}
and with parameters $b,c,d,e,f$ which can be expressed in terms of the entries of $m_1$ and $m_1^{HM}$ (see \eqref{Xasyinfinity}). Moreover
\begin{equation*}
	\frac{\partial X}{\partial
    s}=\bigg[-\frac{A}{\lambda-s}-\frac{B}{\lambda+s}\bigg]X\equiv
    \Theta(\lambda,s,x)X
\end{equation*}
and also
\begin{equation*}
	\frac{\partial X}{\partial x}=\bigg[-i\lambda\sigma_3+i\begin{pmatrix}
                                                             0 & b \\
                                                             -c & 0 \\
                                                           \end{pmatrix}\bigg]X\equiv\Pi(\lambda,s,x)X.
\end{equation*}
Hence we arrive at the Lax-system for the function $X$,
\begin{equation*}
    \left\{
      \begin{array}{c}
        \frac{\partial X}{\partial \lambda}=\mathcal{A}(\lambda,s,x)X \\
        \\
        \frac{\partial X}{\partial s}=\Theta(\lambda,s,x)X,  \\
        \\
        \frac{\partial X}{\partial x}=\Pi(\lambda,s,x)X.
      \end{array}
    \right.
\end{equation*}
Considering the compatibility conditions of the system,
 \begin{equation}\label{compa}
    \mathcal{A}_s-\Theta_{\lambda}=[\Theta,\mathcal{A}],\hspace{1cm}
\mathcal{A}_x-\Pi_{\lambda}=[\Pi,\mathcal{A}],\hspace{1cm}
\Theta_x-\Pi_s=[\Pi,\Theta]
\end{equation}
we are lead to a system of eighteen nonlinear ordinary differential equations for the unknown quantities $b,c,d,e,f$ and the entries of $A$ and $B$. Since it is possible to express the previous derivatives of $\ln\det(I-K_{\textnormal{PII}})$ solely in terms of the unknowns $b,c,d,e,f,A$ and $B$, one could then try to derive a differential equation for the Fredholm determinant \eqref{Pdet} using \eqref{compa}. We shall devote to these issues our next publication.

\section{Second transformation of the RHP - rescaling and $g$-function}\label{sec7}
Along the line segment $[-s,s]$ (see Figure \ref{fig2}) we have that
\begin{equation}\label{permu}
	X_+(\lambda)=X_-(\lambda)\bigl(\begin{smallmatrix}
                                   0 & -i \\
                                   -i & 0 \\
                                 \end{smallmatrix}\bigr),
\end{equation}
i.e. we face a permutation jump matrix. This behavior (cf. \cite{DIZ}) motivates the introduction of the following g-function,
\begin{equation}\label{gfunction}
	g(z)=\frac{4i}{3}\sqrt{z^2-1}\bigg(z^2+\frac{1}{2}+\frac{3x}{4s^2}\bigg),\ \ \sqrt{z^2-1} \sim z,\ z\rightarrow\infty.
\end{equation}
This function is analytic outside the segment $[-1,1]$ and as $z\rightarrow\infty$
\begin{equation*}
	g(z)=\vartheta(z)+O\big(z^{-1}\big),\ \ \vartheta(z)=i\bigg(\frac{4}{3}z^3+\frac{xz}{s^2}\bigg).
\end{equation*}
Also,
\begin{equation}\label{gjump}
	g_+(z)+g_-(z)=0,\hspace{0.5cm} z\in[-1,1].
\end{equation}
We put
\begin{equation}\label{fromXtoT}
	T(z) = X(zs)e^{s^3g(z)\sigma_3},\ \ z\in\mathbb{C}\backslash\Big([-1,1]\cup\bigcup_k\Gamma_k\Big)
\end{equation}
and, taking into account \eqref{gjump}, are lead to the following RHP 
\begin{itemize}
	\item $T(z)$ is analytic for $z\in\mathbb{C}\backslash \big([-1,1]\cup\bigcup_k\Gamma_k\big)$
	\item The jump properties of $T(z)$ are given by the equations
	\begin{equation*}
		T_+(z) = T_-(z)\begin{pmatrix}
		0 & -i\\
		-i & 0\\
		\end{pmatrix},\ \ z\in[-1,1]
	\end{equation*}
	\begin{equation*}	
	 T_+(z)=T_-(z)e^{-s^3g(z)\sigma_3}S_ke^{s^3g(z)\sigma_3},\ z\in\Gamma_k.
	\end{equation*}
	\item In a neighborhood of the endpoints $z=\pm 1$
	\begin{equation}\label{Tsingular}
		T(z)e^{-s^3g(z)\sigma_3} = \check{X}(zs)\begin{pmatrix}
		1 & -\frac{1}{2\pi}\ln\frac{z-1}{z+1}\\
		0 & 1\\
		\end{pmatrix}\left\{
                                   \begin{array}{ll}
                                     I, & \hbox{$\lambda\in\hat{\Omega}_1$,} \\
                                     S_3, & \hbox{$\lambda\in\hat{\Omega}_2$,} \\
                                     S_3S_4, & \hbox{$\lambda\in\hat{\Omega}_3$,} \\
                                     S_3S_4S_6, & \hbox{$\lambda\in\hat{\Omega}_4$,}
                                   \end{array}
                                 \right.
   \end{equation}
   \item As $z\rightarrow\infty$, we have $T(z) = I+O(z^{-1})$
\end{itemize}
Let us analyse the behavior of the jumps along the infinite branches $\Gamma_k$ as $s\rightarrow\infty$. To this end consider the sign-diagram of the function $\textnormal{Re}\ g(z)$, depicted in Figure \ref{fig3}, where $x$ is chosen from a compact subset of the real line and $s>0$ is sufficiently large.
\begin{figure}[tbh]
  \begin{center}
  \psfragscanon
  \psfrag{1}{\footnotesize{$-1$}}
  \psfrag{2}{\footnotesize{$1$}}
  \psfrag{3}{\footnotesize{$\textnormal{Re}\ g<0$}}
  \psfrag{4}{\footnotesize{$\textnormal{Re}\ g<0$}}
  \psfrag{5}{\footnotesize{$\textnormal{Re}\ g>0$}}
  \psfrag{6}{\footnotesize{$\textnormal{Re}\ g>0$}}
  \psfrag{7}{\footnotesize{$\textnormal{Re}\ g>0$}}
  \psfrag{8}{\footnotesize{$\textnormal{Re}\ g<0$}}
  \includegraphics[width=7cm,height=6cm]{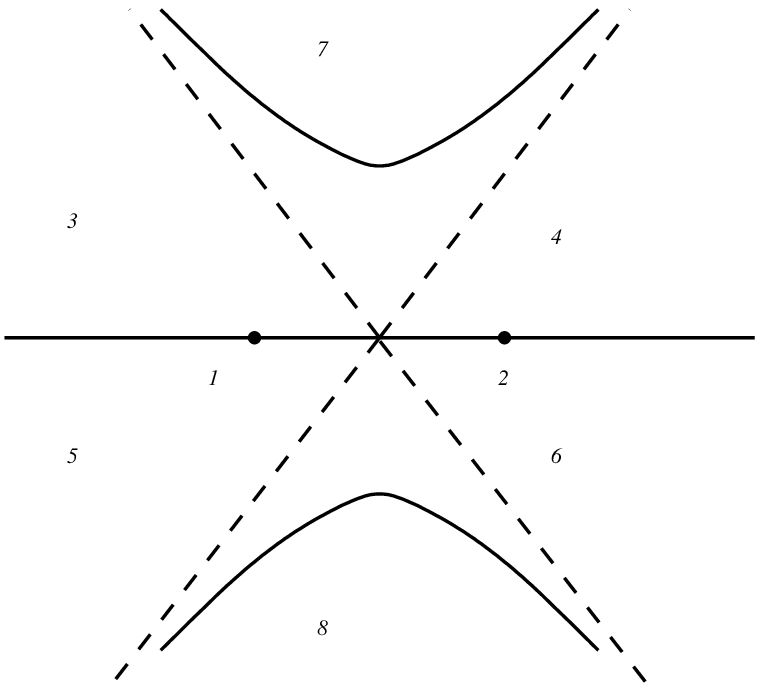}
  \end{center}
  \caption{Sign-diagram for the function $\textnormal{Re}\ g(z)$. Along the solid lines $\textnormal{Re}\ g(z)=0$ and the dashed lines resemble $\textnormal{arg}\ z =\pm\frac{\pi}{3},\pm\frac{2\pi}{3}$}
  \label{fig3}
\end{figure}

Since $\textnormal{Re}\ g(z)$ is negative resp. positive along the rays $\Gamma_1,\Gamma_3$ resp. $\Gamma_4,\Gamma_6$,
\begin{equation}\label{infexpdecay}
	e^{-s^3g(z)\sigma_3}S_ke^{s^3g(z)\sigma_3} \longrightarrow I,\ \ s\rightarrow\infty
\end{equation}
uniformly on any compact subset of the set \eqref{excset1} and the stated convergence is in fact exponentially fast. From \eqref{infexpdecay} we expect, and this will be justified rigorously, that as $s\rightarrow\infty$, $T(z)$ converges to a solution of the model RHP, in which we only have to deal with the constant jump matrix on the line segment $[-1,1]$. Let us now consider this model RHP.

\section{The model RHP}\label{sec8}

Find the piecewise analytic $2\times 2$ matrix valued function $M(z)$ such that
\begin{itemize}
	\item $M(z)$ is analytic for $z\in\mathbb{C}\backslash[-1,1]$
	\item Along $[-1,1]$ the following jump condition holds
	\begin{equation*}
		M_+(z)=M_-(z)\bigl(\begin{smallmatrix}
			0 & -i \\
			-i & 0 \\
		\end{smallmatrix}\bigr),\ \ \ z\in[-1,1]
	\end{equation*}
	\item $M(z)$ has at most logarithmic singularities at the endpoints $z=\pm 1$
	\item $M(z)= I+O\big(z^{-1}\big),\ \ z\rightarrow\infty$
\end{itemize}
A solution to this problem can be obtained explicitly 
via diagonalization (cf. \cite{DIZ})
\begin{equation}\label{modelRHP}
	M(z)= \begin{pmatrix}
		1 & 1 \\
		1 & -1 \\
		\end{pmatrix}\beta(z)^{\sigma_3}\frac{1}{2}\begin{pmatrix}
		1 & 1 \\
		1 & -1 \\
		\end{pmatrix}=\frac{1}{2}\begin{pmatrix}
		\beta+\beta^{-1} & \beta-\beta^{-1} \\
		\beta-\beta^{-1} & \beta+\beta^{-1} \\
		\end{pmatrix},
\end{equation}
with
\begin{equation*}
	\beta(z)=\bigg(\frac{z+1}{z-1}\bigg)^{1/4}
\end{equation*}
and $\big(\frac{z+1}{z-1}\big)^{1/4}$ is defined on $\mathbb{C}\backslash[-1,1]$ with its branch fixed by the condition $\big(\frac{z+1}{z-1}\big)^{1/4}\rightarrow 1$ as $z\rightarrow\infty$.
\section{construction of a parametrix at the edge point $z=+1$}\label{sec9}
In this section we  construct a parametrix for the right endpoint $z=+1$. We start from the local expansion
\begin{equation*}
	g(z)=\frac{4\sqrt{2}}{3}i\bigg(\frac{3}{2}+\frac{3x}{4s^2}\bigg)\sqrt{z-1}\Big(1+O\big((z-1)^{1/2}\big)\Big),\ z\rightarrow 1,\ -\pi<\textnormal{arg}(z-1)\leq\pi
\end{equation*}
and the singular endpoint behavior \eqref{Xlocal}
\begin{equation*}
	X(z)=O\big(\ln(z-1)\big),\ \ z\rightarrow 1.
\end{equation*}
Both observations suggest (cf. \cite{DIZ}) to use the Bessel functions $H_0^{(1)}(\zeta)$ and $H_0^{(2)}(\zeta)$ for our construction. This idea can indeed be justified rigorously as follows. First we recall that the Hankel functions of first and second kind are unique independent solutions to Bessel's equation
\begin{equation*}
	zw''+w'+w=0
\end{equation*}
satisfying the following asymptotic conditions as $\zeta\rightarrow\infty$ and $-\pi<\textnormal{arg}\ \zeta<\pi$ (see \cite{BE})
\begin{eqnarray*}
	H_0^{(1)}(\zeta)&\sim& \sqrt{\frac{2}{\pi\zeta}}e^{i(\zeta-\frac{\pi}{4})}\bigg(1-\frac{i}{8\zeta}-\frac{9}{128\zeta^2}+\frac{75i}{1024\zeta^3}+O\big(\zeta^{-4}\big)\bigg)\\
	H_0^{(2)}(\zeta)&\sim& \sqrt{\frac{2}{\pi\zeta}}e^{-i(\zeta-\frac{\pi}{4})}\bigg(1+\frac{i}{8\zeta}-\frac{9}{128\zeta^2}-\frac{75i}{1024\zeta^3}+O\big(\zeta^{-4}\big)\bigg).
\end{eqnarray*}
Secondly $H_0^{(1)}(\zeta),H_0^{(2)}(\zeta)$ satisfy monodromy relations, valid on the entire universal covering of the punctured plane
\begin{equation}\label{besselmonodromy}
	H_0^{(1)}\big(\zeta e^{\pi i}\big)=-H_0^{(2)}(\zeta),\ H_0^{(2)}\big(\zeta e^{\pi i}\big)=H_0^{(1)}(\zeta)+2H_0^{(2)}(\zeta),\ H_0^{(2)}\big(\zeta e^{-\pi i}\big)=-H_0^{(1)}(\zeta)
\end{equation}
and finally the following expansions at the origin are valid (compare to \eqref{Xlocal})
\begin{equation}\label{Hankelorigin}
	H_0^{(1)}(\zeta) = a_0+a_1\ln\zeta+a_2\zeta^2+a_3\zeta^2\ln\zeta +O\big(\zeta^4\ln\zeta),\ \zeta\rightarrow 0
\end{equation}
with coefficients $a_i$ given as
\begin{equation*}
	a_0=1+\frac{2i\gamma_E}{\pi}-\frac{2i}{\pi}\ln 2,\ \ a_1=\frac{2i}{\pi},\ \ a_2=\frac{i}{2\pi}(1-\gamma_E)-\frac{1}{4}+\frac{i}{2\pi}\ln 2,\ \ a_3=-\frac{i}{2\pi}
\end{equation*}
where $\gamma_E$ is Euler's constant and the expansion for $H_0^{(2)}(\zeta)$ is up to the replacement $a_i\mapsto \bar{a}_i$ identical to \eqref{Hankelorigin}. Remembering all these properties we now introduce the following matrix-valued function on the punctured plane $\zeta\in\mathbb{C}\backslash\{0\}$
\begin{equation}\label{PBE}
	P_{BE}(\zeta)=e^{i\frac{\pi}{4}\sigma_3}\begin{pmatrix}
	H_0^{(2)}(\sqrt{\zeta}) & H_0^{(1)}(\sqrt{\zeta}) \\
	\sqrt{\zeta}\Big(H_0^{(2)}\Big)'(\sqrt{\zeta}) & \sqrt{\zeta}\Big(H_0^{(1)}\Big)'(\sqrt{\zeta})
	\end{pmatrix}e^{-i\frac{\pi}{4}\sigma_3},\ \ \ -\pi<\textnormal{arg}\ \zeta\leq \pi.
\end{equation}
Using the behavior of $H_0^{(1)}(\zeta)$ and $H_0^{(2)}(\zeta)$ at infinity we deduce
\begin{eqnarray*}
	P_{BE}(\zeta)&=&\sqrt{\frac{2}{\pi}}\zeta^{-\sigma_3/4}e^{i\frac{\pi}{4}}\begin{pmatrix}
	1 & 1 \\
	-1 & 1\\
	\end{pmatrix}\bigg[I+\frac{i}{8\sqrt{\zeta}}\begin{pmatrix}
	-1 & -2 \\
	2 & 1 \\
	\end{pmatrix}
	+\frac{3}{128\zeta}\begin{pmatrix}
	1 & -4\\
	-4 & 1 \\
	\end{pmatrix}\\
	&&+\frac{15i}{1024\zeta^{3/2}}\begin{pmatrix}
	1 & 6 \\
	-6 & -1 \\
	\end{pmatrix} + O\big(\zeta^{-2}\big)\bigg]e^{-i\sqrt{\zeta}\sigma_3},
\end{eqnarray*}
as $\zeta\rightarrow\infty$ and $-\pi<\textnormal{arg}\ \zeta\leq \pi$. Let us now assemble the following model function
\begin{equation}\label{PBERH}
	P_{BE}^{RH}(\zeta)= \left\{
                                 \begin{array}{ll}
                                   P_{BE}(\zeta)\begin{pmatrix}
                          1  & 0 \\
                          -i & 1 \\
                        \end{pmatrix}, & \hbox{arg $\zeta\in(\frac{\pi}{6},\pi)$,} \bigskip\\
                                   P_{BE}(\zeta)\begin{pmatrix}
                          1 & i \\
                          0 & 1 \\
                        \end{pmatrix}, & \hbox{arg $\zeta\in(-\pi,-\frac{\pi}{6})$,} \bigskip \\
                                   P_{BE}(\zeta),& \hbox{arg $\zeta\in(-\frac{\pi}{6},\frac{\pi}{6})$.}
                                 \end{array}
                               \right.
\end{equation}
which solves the RHP depicted in Figure \ref{fig5}.
\begin{figure}[tbh]
  \begin{center}
  \psfragscanon
  \psfrag{1}{$\bigr(\begin{smallmatrix}
  0 & -i\\
  -i & 0\\
  \end{smallmatrix}\bigl)$}
  \psfrag{2}{$\bigr(\begin{smallmatrix}
  1 & 0\\
  -i & 1\\
  \end{smallmatrix}\bigl)=S_1$}
  \psfrag{3}{$\bigr(\begin{smallmatrix}
  1 & -i\\
  0 & 1\\
  \end{smallmatrix}\bigl)=S_6$}
  \includegraphics[width=4cm,height=3cm]{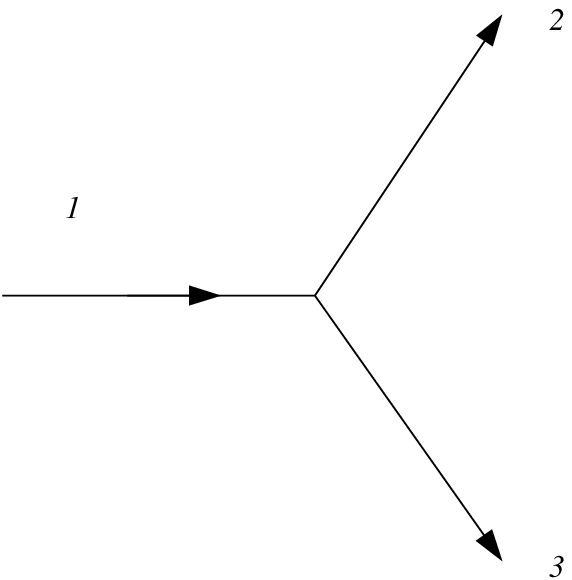}
  \end{center}
  \caption{The model RHP near $z=+1$ which can be solved explicitly using Hankel functions}
  \label{fig5}
\end{figure}

More precisely, the function $P_{BE}^{RH}(\zeta)$ possesses the following analytic properties.
\begin{itemize}
	\item $P_{BE}^{RH}(\zeta)$ is analytic for $\zeta\in\mathbb{C}\backslash\{\textnormal{arg}\ \zeta=-\pi,-\frac{\pi}{6},\frac{\pi}{6}\}$
	\item The following jumps hold
	\begin{eqnarray*}
		\big(P_{BE}^{RH}(\zeta)\big)_+&=&\big(P_{BE}^{RH}(\zeta)\big)_-\begin{pmatrix}
		1 & 0 \\
		-i & 1 \\
		\end{pmatrix},\hspace{1cm} \textnormal{arg}\ \zeta=\frac{\pi}{6}\\
		\big(P_{BE}^{RH}(\zeta)\big)_+&=&\big(P_{BE}^{RH}(\zeta)\big)_-\begin{pmatrix}
		1 & -i \\
		0 & 1 \\
		\end{pmatrix},\hspace{1cm} \textnormal{arg}\ \zeta=-\frac{\pi}{6}\\
	\end{eqnarray*}
	And for the jump on the line $\textnormal{arg}\ \zeta=\pi$ we notice that the monodromy relations imply
	\begin{eqnarray*}
		H_0^{(2)}\big(\sqrt{\zeta}_+\big)&=&H_0^{(2)}\big(\sqrt{\zeta}_-e^{\pi i}\big)=H_0^{(1)}\big(\sqrt{\zeta}_-\big)+2H_0^{(2)}\big(\sqrt{\zeta}_-\big)\\
		\big(H_0^{(2)}\big)'\big(\sqrt{\zeta}_+\big)&=&e^{-i\pi}\big(H_0^{(1)}\big)'\big(\sqrt{\zeta}_-\big)+2e^{-i\pi}\big(H_0^{(2)}\big)'\big(\sqrt{\zeta}_-\big)
	\end{eqnarray*}
	and
	\begin{eqnarray*}
		H_0^{(1)}\big(\sqrt{\zeta}_+\big)&=&H_0^{(1)}\big(\sqrt{\zeta}_-e^{\pi i}\big)=-H_0^{(2)}\big(\sqrt{\zeta}_-\big)\\
		\big(H_0^{(1)}\big)'\big(\sqrt{\zeta}_+\big)&=&\big(H_0^{(2)}\big)'\big(\sqrt{\zeta}_-\big).
	\end{eqnarray*}
	Therefore
	\begin{equation*}
		\big(P_{BE}(\zeta)\big)_+=\big(P_{BE}(\zeta)\big)_-e^{i\frac{\pi}{4}\sigma_3}\begin{pmatrix}
		2 & -1 \\
		1 & 0 \\
		\end{pmatrix}e^{-i\frac{\pi}{4}\sigma_3} = \big(P_{BE}(\zeta)\big)_-\begin{pmatrix}
		2 & -i \\
		-i & 0 \\
		\end{pmatrix}
	\end{equation*}
	and hence
	\begin{equation*}
		\big(P_{BE}^{RH}(\zeta)\big)_+=\big(P_{BE}^{RH}(\zeta)\big)_-\begin{pmatrix}
			0 & -i \\
			-i & 0 \\
		\end{pmatrix},\ \ \ \textnormal{arg}\ \zeta=\pi,
	\end{equation*}
	\item In order to determine the behavior of $P_{BE}^{RH}(\zeta)$ at infinity we make the following observations. First let $\textnormal{arg}\ \zeta\in(\frac{\pi}{6},\pi)$ and consider
	\begin{equation*}
		e^{-i\sqrt{\zeta}\sigma_3}\begin{pmatrix}
			1 & 0 \\
			-i & 1 \\
			\end{pmatrix}e^{i\sqrt{\zeta}\sigma_3} = \begin{pmatrix}
				1 & 0\\
				-ie^{2i\sqrt{\zeta}} & 1
				\end{pmatrix}.
	\end{equation*}
	Observe that $\textnormal{Re}\big(i\sqrt{\zeta}\big)<0$, hence the given product approaches the identity exponentially fast as $\zeta\rightarrow\infty$. Secondly for $\textnormal{arg}\ \zeta\in(-\pi,-\frac{\pi}{6})$
	\begin{equation*}
		e^{-i\sqrt{\zeta}\sigma_3}\begin{pmatrix}
		1 & i \\
		0 & 1 \\
	\end{pmatrix}e^{i\sqrt{\zeta}\sigma_3}=\begin{pmatrix}
	1 & ie^{-2i\sqrt{\zeta}}\\
	0 & 1 \\
	\end{pmatrix}
	\end{equation*}
	and in this situation $\textnormal{Re}\big(-i\sqrt{\zeta}\big)<0$, so again the product approaches the identity exponentially fast as $\zeta\rightarrow\infty$. Both cases together with the previously stated asymptotics for $P_{BE}(\zeta)$ imply therefore
	\begin{eqnarray}\label{PBERHasyinfinity}
	P_{BE}^{RH}(\zeta)&=&\sqrt{\frac{2}{\pi}}\zeta^{-\sigma_3/4}e^{i\frac{\pi}{4}}\begin{pmatrix}
	1 & 1 \\
	-1 & 1\\
	\end{pmatrix}\bigg[I+\frac{i}{8\sqrt{\zeta}}\begin{pmatrix}
	-1 & -2 \\
	2 & 1 \\
	\end{pmatrix}
	+\frac{3}{128\zeta}\begin{pmatrix}
	1 & -4\\
	-4 & 1 \\
	\end{pmatrix}\nonumber\\
	&&+\frac{15i}{1024\zeta^{3/2}}\begin{pmatrix}
	1 & 6 \\
	-6 & -1 \\
	\end{pmatrix} + O\big(\zeta^{-2}\big)\bigg]e^{-i\sqrt{\zeta}\sigma_3},
\end{eqnarray}
as $\zeta\rightarrow\infty$ in a whole neighborhood of infinity.
\end{itemize}
The model function $P_{BE}^{RH}(\zeta)$ will now be used to construct the parametrix to the solution of the original $X$-RHP in a neighborhood of $z=+1$. We proceed in two steps. First define
\begin{equation}\label{BEchangeright}
	\zeta(z)=-s^6g^2(z),\ \ |z-1|<r,\ -\pi<\textnormal{arg}\ \zeta\leq \pi
\end{equation}
or respectively
\begin{equation*}
	\sqrt{\zeta(z)} = -is^3g(z) = \frac{4s^3}{3}\sqrt{z^2-1}\bigg(z^2+\frac{1}{2}+\frac{3x}{4s^2}\bigg).
\end{equation*}
This change of variables is indeed locally conformal, since
\begin{equation*}
	\zeta(z)=\frac{32s^6}{9}\bigg(\frac{3}{2}+\frac{3x}{4s^2}\bigg)^2(z-1)\big(1+O(z-1)\big),\ \ |z-1|<r
\end{equation*}
and it enables us to define the right parametrix $U(z)$ near $z=+1$ by the formula:
\begin{equation}\label{Xrpara}
		U(z)=B_r(z)\frac{\sigma_3}{2}\sqrt{\frac{\pi}{2}}e^{-i\frac{\pi}{4}}P_{BE}^{RH}\big(\zeta(z)\big)e^{s^3g(z)\sigma_3},\ \ \ |z-1|<r
\end{equation}
with $\zeta(z)$ as in \eqref{BEchangeright} and the matrix multiplier
\begin{equation}\label{Brmultiplier}
	B_r(z) = \begin{pmatrix}
	1 & 1 \\
	1 & -1 \\
	\end{pmatrix}\bigg(\zeta(z)\frac{z+1}{z-1}\bigg)^{\sigma_3/4},\ \ B_r(1)=\begin{pmatrix}
	1 & 1 \\
	1 & -1 \\
	\end{pmatrix}\Bigg(\frac{8s^3}{3}\bigg(\frac{3}{2}+\frac{3x}{4s^2}\bigg)\Bigg)^{\sigma_3/2}.
\end{equation}
By construction, in particular since $B_r(z)$ is analytic in a neighborhood of $z=+1$, the parametrix $U(z)$ has jumps along the curves depicted in Figure \ref{fig6}, and we can always locally match the latter curves with the jump curves of the original RHP. Also these jumps are described by the same Stokes matrices as in the original $T$-RHP. Furthermore, and we will elaborate this in full detail very soon, the singular behavior of $U(z)$ at the endpoint $z=+1$ matches the singular behavior of $T(z)$:
\begin{equation}\label{Using}
	U(z)=O\big(\ln(z-1)\big),\ \ |z-1|<r.
\end{equation}
Hence the ratio of $T(z)$ with $U(z)$ is locally analytic, i.e.
\begin{equation}
	T(z)=N_r(z)U(z),\ \ |z-1|<r<\frac{1}{2}.
\end{equation}
\begin{figure}[tbh]
  \begin{center}
  \psfragscanon
  \psfrag{1}{$\bigr(\begin{smallmatrix}
  0 & -i\\
  -i & 0\\
  \end{smallmatrix}\bigl)$}
  \psfrag{2}{\footnotesize{$e^{-s^3g(z)\sigma_3}S_1e^{s^3g(z)\sigma_3}$}}
  \psfrag{3}{\footnotesize{$e^{-s^3g(z)\sigma_3}S_6e^{s^3g(z)\sigma_3}$}}
  \includegraphics[width=4cm,height=3cm]{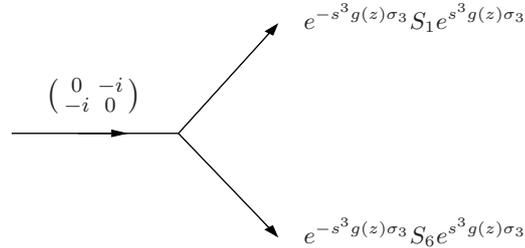}
  \end{center}
  \caption{Transformation of parametrix jumps to original jumps}
  \label{fig6}
\end{figure}

Let us explain the role of the left multiplier $B_r(z)$ in the definition \eqref{Xrpara}. Observe that
\begin{equation*}
	B_r(z)\zeta(z)^{-\sigma_3/4}\frac{1}{2}\begin{pmatrix}
	1 & 1 \\
	1 & -1 \\
	\end{pmatrix} = M(z).
\end{equation*}
This relation together with the asymptotic equation \eqref{PBERHasyinfinity} implies that,
\begin{eqnarray}\label{Xrmatchup}
	U(z) &=& \begin{pmatrix}
	1 & 1 \\
	1 & -1 \\
	\end{pmatrix}\beta(z)^{\sigma_3}\frac{1}{2}\begin{pmatrix}
	1 & 1 \\
	1 & -1 \\
	\end{pmatrix}\bigg[I+\frac{i}{8\sqrt{\zeta}}\begin{pmatrix}
	-1 & -2\\
	2 & 1 \\
	\end{pmatrix}+\frac{3}{128\zeta}\begin{pmatrix}
	1 & -4\\
	-4 & 1\\
	\end{pmatrix}\nonumber\\
	&&+\frac{15i}{1024\zeta^{3/2}}\begin{pmatrix}
	1 & 6 \\
	-6 & -1\\
	\end{pmatrix} +O\big(\zeta^{-2}\big)\bigg]\frac{1}{2}\begin{pmatrix}
	1 & 1 \\
	1 & -1\\
	\end{pmatrix}\beta(z)^{-\sigma_3}\begin{pmatrix}
	1 & 1 \\
	1 & -1\\
	\end{pmatrix}M(z)\nonumber\\
	&=& \bigg[I + \frac{i}{16\sqrt{\zeta}}\begin{pmatrix}
	\beta^2-3\beta^{-2} & -(\beta^2+3\beta^{-2}) \\
	\beta^2+3\beta^{-2} & -(\beta^2-3\beta^{-2})\\
	\end{pmatrix}+\frac{3}{128\zeta}\begin{pmatrix}
	1 & -4\\
	-4 & 1\\
	\end{pmatrix}\nonumber\\
	&&+\frac{15i}{2048\zeta^{3/2}}\begin{pmatrix}
	-(5\beta^2-7\beta^{-2}) & 5\beta^2+7\beta^{-2} \\
	-(5\beta^2+7\beta^{-2}) & 5\beta^2-7\beta^{-2} \\
	\end{pmatrix}+O\big(\zeta^{-2}\big)\bigg]M(z)
\end{eqnarray}
as $s\rightarrow\infty$ and $0<r_1\leq|z-1|\leq r_2<1$ (so $|\zeta|\rightarrow\infty$). Since the function $\zeta(z)$ is of order $O\big(s^6\big)$ on the latter annulus and $\beta(z)$ is bounded, equation \eqref{Xrmatchup} yields the matching relation between the model functions $U(z)$ and $M(z)$,
\begin{equation*}
	U(z) = \big(I+o(1)\big)M(z),\hspace{0.5cm} s\rightarrow\infty,\ \ 0<r_1\leq|z-1|\leq r_2<1
\end{equation*}
which is crucial for the successful implementation of the nonlinear steepest descent method as we shall see after the next subsection. This is the reason for chosing the left multiplier $B_r(z)$ in \eqref{Xrpara} in the form \eqref{Brmultiplier}.

\section{Construction of a parametrix at the edge point $z=-1$}\label{sec10}
The model RHP near the other endpoint $z=-1$ can be introduced in a similar way as we did it in the last section. First we consider on the punctured plane $\zeta\in\mathbb{C}\backslash\{0\}$
\begin{equation*}
		\tilde{P}_{BE}(\zeta)=\begin{pmatrix}
		e^{-i\frac{3\pi}{2}}\sqrt{\zeta}\Big(H_0^{(1)}\Big)'\big(e^{-i\frac{\pi}{2}}\sqrt{\zeta}\big) & -\sqrt{\zeta}\Big(H_0^{(2)}\Big)'\big(e^{-i\frac{\pi}{2}}\sqrt{\zeta}\big) \\
		-e^{i\frac{\pi}{2}}H_0^{(1)}\big(e^{-i\frac{\pi}{2}}\sqrt{\zeta}\big) & H_0^{(2)}\big(e^{-i\frac{\pi}{2}}\sqrt{\zeta}\big) \\
		\end{pmatrix},\ \ \ 0<\textnormal{arg}\ \zeta\leq 2\pi
\end{equation*}
satisfying
\begin{eqnarray*}
	\tilde{P}_{BE}(\zeta) &=& \sqrt{\frac{2}{\pi}}\zeta^{\sigma_3/4}\begin{pmatrix}
	-1 & -1 \\
	-i & i \\
	\end{pmatrix}\bigg[I+\frac{1}{8\sqrt{\zeta}}\begin{pmatrix}
	-1 & 2 \\
	-2 & 1 \\
	\end{pmatrix} +\frac{3}{128\zeta}\begin{pmatrix}
	-1 & -4 \\
	-4 & -1 \\
	\end{pmatrix}\\
	&&+\frac{15}{1024\zeta^{3/2}}\begin{pmatrix}
	-1 & 6 \\
	-6 & 1 \\
	\end{pmatrix} +O\big(\zeta^{-2}\big)\bigg]e^{\sqrt{\zeta}\sigma_3}
\end{eqnarray*}
as $\zeta\rightarrow\infty$ and $0<\textnormal{arg}\ \zeta\leq 2\pi$. Next, instead of \eqref{PBERH}, define
\begin{equation}\label{PBERHleft}
	\tilde{P}_{BE}^{RH}(\zeta)=  \left\{
                                 \begin{array}{ll}
                                   \tilde{P}_{BE}(\zeta)\begin{pmatrix}
                          1  & 0 \\
                          -i & 1 \\
                        \end{pmatrix}, & \hbox{arg $\zeta\in(0,\frac{5\pi}{6})$,} \bigskip\\
                                   \tilde{P}_{BE}(\zeta)\begin{pmatrix}
                          1 & i \\
                          0 & 1 \\
                        \end{pmatrix}, & \hbox{arg $\zeta\in(\frac{7\pi}{6},2\pi)$,} \bigskip \\
                                   \tilde{P}_{BE}(\zeta),& \hbox{arg $\zeta\in(\frac{5\pi}{6},\frac{7\pi}{6})$.}
                                 \end{array}
                               \right.
\end{equation}
which solves the model RHP of Figure \ref{fig7}. More precisely, the function $\tilde{P}_{BE}^{RH}(\zeta)$ has the following analytic properties
\begin{figure}[tbh]
  \begin{center}
  \psfragscanon
  \psfrag{1}{$\bigr(\begin{smallmatrix}
  0 & -i\\
  -i & 0\\
  \end{smallmatrix}\bigl)$}
  \psfrag{2}{$S_3=\bigr(\begin{smallmatrix}
  1 & 0\\
  i & 1\\
  \end{smallmatrix}\bigl)$}
  \psfrag{3}{$S_4=\bigr(\begin{smallmatrix}
  1 & i\\
  0 & 1\\
  \end{smallmatrix}\bigl)$}
  \includegraphics[width=4cm,height=3cm]{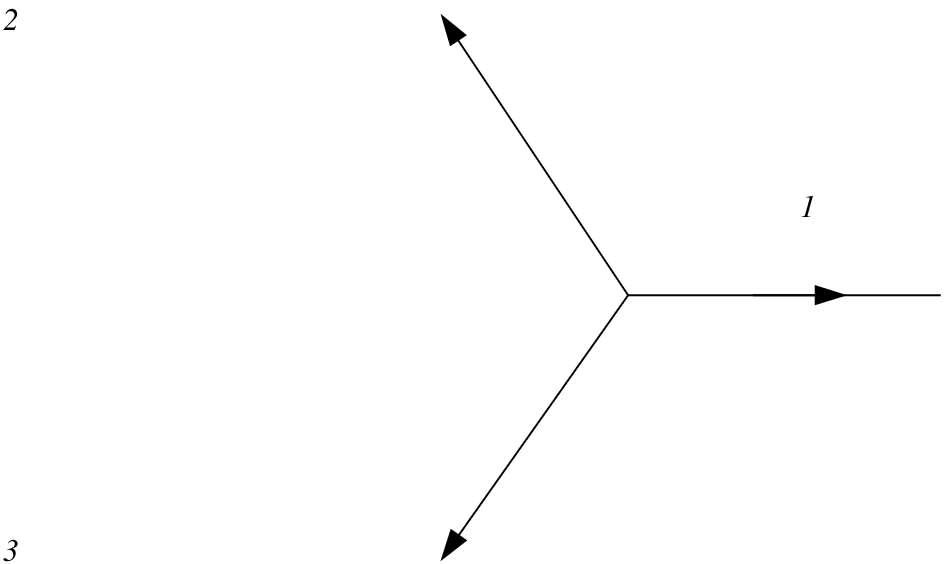}
  \end{center}
  \caption{The model RHP near $z=-1$ which can be solved explicitly using Hankel functions}
  \label{fig7}
\end{figure}
\begin{itemize}
	\item $\tilde{P}_{BE}^{RH}(\zeta)$ is analytic for $\zeta\in\mathbb{C}\backslash\{\textnormal{arg}\ \zeta = \frac{5\pi}{6},\frac{7\pi}{6},2\pi\}$
	\item We have the following jumps on the contour depicted in Figure \ref{fig7}
	\begin{eqnarray*}
		\big(\tilde{P}_{BE}^{RH}(\zeta)\big)_+&=&\big(\tilde{P}_{BE}^{RH}(\zeta)\big)_-\begin{pmatrix}
		1 & 0 \\
		i & 1 \\
		\end{pmatrix},\ \ \ \textnormal{arg}\ \zeta=\frac{5\pi}{6}\\
		\big(\tilde{P}_{BE}^{RH}(\zeta)\big)_+&=&\big(\tilde{P}_{BE}^{RH}(\zeta)\big)_-\begin{pmatrix}
		1 & i \\
		0 & 1 \\
		\end{pmatrix},\ \ \ \textnormal{arg}\ \zeta=\frac{7\pi}{6}
	\end{eqnarray*}
	and on the line segment $\textnormal{arg}\ \zeta=2\pi$
	\begin{eqnarray*}
		H_0^{(1)}\big(e^{-i\frac{\pi}{2}}\sqrt{\zeta}_+\big) &=& H_0^{(1)}\big(e^{-i\frac{\pi}{2}}\sqrt{\zeta}_-e^{-i\pi}\big) = H_0^{(2)}\big(e^{-i\frac{\pi}{2}}\sqrt{\zeta}_-\big)+2H_0^{(1)}\big(e^{-i\frac{\pi}{2}}\sqrt{\zeta}_-\big)\\		\big(H_0^{(1)}\big)'\big(e^{-i\frac{\pi}{2}}\sqrt{\zeta}_+\big)&=&e^{i\pi}\big(H_0^{(2)}\big)'\big(e^{-i\frac{\pi}{2}}\sqrt{\zeta}_-\big)+2e^{i\pi}\big(H_0^{(1)}\big)'\big(e^{-i\frac{\pi}{2}}\sqrt{\zeta}_-\big)\\
	\end{eqnarray*}
	as well as
	\begin{eqnarray*}
		H_0^{(2)}\big(e^{-i\frac{\pi}{2}}\sqrt{\zeta}_+\big)&=& H_0^{(2)}\big(e^{-i\frac{\pi}{2}}\sqrt{\zeta}_-e^{-i\pi}\big)=-H_0^{(1)}\big(e^{-i\frac{\pi}{2}}\sqrt{\zeta}_-\big)\\
		\big(H_0^{(2)}\big)'\big(e^{-i\frac{\pi}{2}}\sqrt{\zeta}_+\big)&=&\big(H_0^{(1)}\big)'\big(e^{-i\frac{\pi}{2}}\sqrt{\zeta}_-\big)
	\end{eqnarray*}
	hence 
	\begin{equation*}
		\big(\tilde{P}_{BE}^{RH}(\zeta)\big)_+ = \big(\tilde{P}_{BE}^{RH}(\zeta)\big)_-\begin{pmatrix}
		0 & -i \\
		-i & 0 \\
		\end{pmatrix},\ \ \ \textnormal{arg}\ \zeta=2\pi.
	\end{equation*}
	\item A similar argument as given in the construction of $P_{BE}^{RH}(\zeta)$ implies 
	\begin{eqnarray}\label{PBERHasyleftinfinity}
	\tilde{P}_{BE}^{RH}(\zeta) &=& \sqrt{\frac{2}{\pi}}\zeta^{\sigma_3/4}\begin{pmatrix}
	-1 & -1 \\
	-i & i \\
	\end{pmatrix}\bigg[I+\frac{1}{8\sqrt{\zeta}}\begin{pmatrix}
	-1 & 2 \\
	-2 & 1 \\
	\end{pmatrix} +\frac{3}{128\zeta}\begin{pmatrix}
	-1 & -4 \\
	-4 & -1 \\
	\end{pmatrix}\nonumber\\
	&&+\frac{15}{1024\zeta^{3/2}}\begin{pmatrix}
	-1 & 6 \\
	-6 & 1 \\
	\end{pmatrix} +O\big(\zeta^{-2}\big)\bigg]e^{\sqrt{\zeta}\sigma_3}
\end{eqnarray}
as $\zeta\rightarrow\infty$, valid in a full neighborhood of infinity.
\end{itemize}
Again we use the model function $\tilde{P}_{BE}^{RH}(\zeta)$ in the construction of the parametrix to the solution of the original $X$-RHP near $z=-1$. Instead of \eqref{BEchangeright}
\begin{equation}\label{BEchangeleft}
	\zeta(z) = s^6g^2(z),\ \ |z+1|<r,\ 0<\textnormal{arg}\ \zeta\leq 2\pi
\end{equation}
or equivalently
\begin{equation*}
	\sqrt{\zeta(z)} = -s^3g(z) = -\frac{4is^3}{3}\sqrt{z^2-1}\bigg(z^2+\frac{1}{2}+\frac{3x}{4s^2}\bigg).
\end{equation*}
This change of the independent variable is locally conformal 
\begin{equation*}
	\zeta(z)=\frac{32s^6}{9}\bigg(\frac{3}{2}+\frac{3x}{4s^2}\bigg)^2(z+1)\big(1+O(z+1)\big),\ \ |z+1|<r
\end{equation*}
and allows us to define the left parametrix $V(z)$ near $z=-1$ by the formula:
\begin{equation}\label{Xlpara}
	V(z) = B_l(z)\frac{1}{2}\begin{pmatrix}
	-1 & 0 \\
	0 & i \\
	\end{pmatrix}\sqrt{\frac{\pi}{2}}\tilde{P}_{BE}^{RH}\big(\zeta(z)\big)e^{s^3g(z)\sigma_3},\ \ |z+1|<r
\end{equation}
with the matrix multiplier
\begin{equation}\label{Blmultiplier}
	B_l(z)=\begin{pmatrix}
	1 & 1 \\
	1 & -1 \\
	\end{pmatrix}\bigg(\zeta(z)\frac{z-1}{z+1}\bigg)^{-\sigma_3/4},\ \ B_l(-1) =\begin{pmatrix}
	1 & 1\\
	1 & -1\\
	\end{pmatrix}\Bigg(\frac{8is^3}{3}\bigg(\frac{3}{2}+\frac{3x}{4s^2}\bigg)\Bigg)^{-\sigma_3/2}.
\end{equation}
Similar to the situation in the last section, $V(z)$ has jumps on the contour depicted in Figure \ref{fig8} which are described by the same Stokes matrices as in the original $T$-RHP. 
\begin{figure}[tbh]
  \begin{center}
  \psfragscanon
  \psfrag{1}{$\bigr(\begin{smallmatrix}
  0 & -i\\
  -i & 0\\
  \end{smallmatrix}\bigl)$}
  \psfrag{2}{\footnotesize{$e^{-s^3g(z)\sigma_3}S_3e^{s^3g(z)\sigma_3}$}}
  \psfrag{3}{\footnotesize{$e^{-s^3g(z)\sigma_3}S_4e^{s^3g(z)\sigma_3}$}}
  \includegraphics[width=6cm,height=3cm]{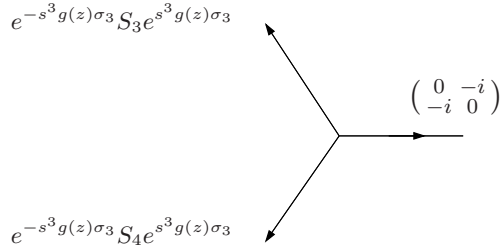}
  \end{center}
  \caption{Transformation of parametrix jumps to original jumps}
  \label{fig8}
\end{figure}

Also here, as we shall see in detail in the section \ref{sec13}, the singular behavior at $z=-1$ matches:
\begin{equation}\label{Vsing}
	V(z)=O\big(\ln(z+1)\big),\ \ |z+1|<r
\end{equation}
Hence the ratio of parametrix $V(z)$ with $T(z)$ is locally analytic
\begin{equation*}
	T(z)=N_l(z)V(z),\ \ |z+1|<r<\frac{1}{2}
\end{equation*}
and the left multiplier \eqref{Blmultiplier} in \eqref{Xlpara} provides us with the following asymptotic matchup between $V(z)$ and $M(z)$:
\begin{eqnarray}\label{Xlmatchup}
	V(z) &=& \begin{pmatrix}
	1 & 1 \\
	1 & -1\\
	\end{pmatrix}\beta(z)^{\sigma_3}\frac{1}{2}\begin{pmatrix}
	1 & 1 \\
	1 & -1\\
	\end{pmatrix}\bigg[I+\frac{1}{8\sqrt{\zeta}}\begin{pmatrix}
	-1 & 2 \\
	-2 & 1 \\
	\end{pmatrix} +\frac{3}{128\zeta}\begin{pmatrix}
	-1 & -4 \\
	-4 & -1 \\
	\end{pmatrix}\nonumber\\
	&&+\frac{15}{1024\zeta^{3/2}}\begin{pmatrix}
	-1 & 6 \\
	-6 & 1 \\
	\end{pmatrix} +O\big(\zeta^{-2}\big)\bigg]\frac{1}{2}\begin{pmatrix}
	1 & 1 \\
	1 & -1 \\
	\end{pmatrix}\beta(z)^{-\sigma_3}\begin{pmatrix}
	1 & 1 \\
	1 & -1 \\
	\end{pmatrix}M(z)\nonumber\\
	&=&\bigg[I+\frac{1}{16\sqrt{\zeta}}\begin{pmatrix}
	\beta^{-2}-3\beta^2 & \beta^{-2}+3\beta^2 \\
	-(\beta^{-2}+3\beta^2) & -(\beta^{-2}-3\beta^2) \\
	\end{pmatrix}+\frac{3}{128\zeta}\begin{pmatrix}
	-1 & -4 \\
	-4 & -1 \\
	\end{pmatrix}\nonumber\\
	&& +\frac{15}{2048\zeta^{3/2}}\begin{pmatrix}
	5\beta^{-2}-7\beta^2 & 5\beta^{-2}+7\beta^2 \\
	-(5\beta^{-2}+7\beta^2) & -(5\beta^{-2}-7\beta^2) \\
	\end{pmatrix}+O\big(\zeta^{-2}\big)\bigg]M(z)
\end{eqnarray}
as $s\rightarrow\infty$ and $0<r_1\leq|z+1|\leq r_2<1$, thus
\begin{equation*}
	V(z) = \big(I+o(1)\big)M(z),\hspace{0.5cm}s\rightarrow\infty,\ \ 0<r_1\leq|z+1|\leq r_2<1.
\end{equation*}
At this point we can use the model functions $M(z),U(z)$ and $V(z)$ to employ the final transformation.

\section{Third and final transformation of the RHP}\label{sec11}
In this final transformation we put
\begin{equation}\label{errorfunction}
	R(z)=T(z)\left\{
                 \begin{array}{ll}
                   \big(U(z)\big)^{-1}, & \hbox{$|z-1|<r$,} \\
                   \big(V(z)\big)^{-1}, & \hbox{$|z+1|<r$,} \\
                   \big(M(z)\big)^{-1}, & \hbox{$|z\mp 1|>r$} 
                 \end{array}
               \right.
\end{equation}
where $0<r<\frac{1}{4}$ remains fixed. With $C_r$ and $C_l$ denoting the clockwise oriented circles shown in Figure \ref{fig9}, the ratio-function $R(z)$ solves the following RHP
\begin{figure}[tbh]
  \begin{center}
  \psfragscanon
  \psfrag{1}{\footnotesize{$C_r$}}
  \psfrag{2}{\footnotesize{$C_l$}}
  \psfrag{3}{\footnotesize{$\gamma_1$}}
  \psfrag{4}{\footnotesize{$\gamma_3$}}
  \psfrag{5}{\footnotesize{$\gamma_4$}}
  \psfrag{6}{\footnotesize{$\gamma_6$}}
  \includegraphics[width=7cm,height=4cm]{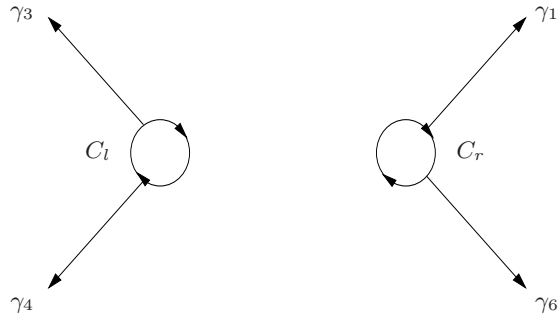}
  \end{center}
  \caption{The jump graph for the ratio-function $R(z)$}
  \label{fig9}
\end{figure}

\begin{itemize}
	\item $R(z)$ is anlytic for $z\in\mathbb{C}\backslash\big\{C_r\cup C_l\cup\bigcup_k\gamma_k\big\}$
	\item Along the infinite branches $\gamma_k$
	\begin{equation*}
		R_+(z)=R_-(z)M(z)e^{-s^3g(z)\sigma_3}S_ke^{s^3g(z)\sigma_3}\big(M(z)\big)^{-1},\ \ z\in\gamma_k,
	\end{equation*}
	and on the clockwise oriented circles $C_{r,l}$
	\begin{eqnarray*}
		R_+(z)&=&R_-(z)U(z)\big(M(z)\big)^{-1},\ z\in C_r\\
		R_+(z)&=&R_-(z)V(z)\big(M(z)\big)^{-1},\ z\in C_l
	\end{eqnarray*}
	\item $R(z)$ is analytic at $z=\pm 1$. This observation follows from \eqref{Using} and \eqref{Vsing}, which will be proved in section \ref{sec13}.
	\item In a neigborhood of infinity, we have $R(z)\rightarrow I$
\end{itemize}
We emphasize that, by construction, the function $R(z)$ has no jumps inside of $C_r$ and $C_l$ and across the line segment in between them.
In order to apply the Deift-Zhou nonlinear steepest descent method for the ratio-RHP, all its jump matrices have to be close to the unit matrix, as $s\rightarrow\infty$, see \cite{DZ1}. Due to the triangularity of all Stokes matrices $S_k$, the jump matrices corresponding to the infinite parts $\bigcup_k\gamma_k$ of the $R$-jump contour are in fact exponentially close to the unit matrix
\begin{equation}\label{esti1}
	\|Me^{-s^3g(\cdot)\sigma_3}S_ke^{s^3g(\cdot)\sigma_3}\big(M\big)^{-1}-I\|_{L^2\cap L^{\infty}(\gamma_k)}\leq c_1e^{-c_2s^3|z\mp 1|}
\end{equation}
emanating from $C_{r,l}$ as $s\rightarrow\infty$ with constants $c_i>0$ whose values are not important. Moreover, by virtue of \eqref{Xrmatchup}, $U(z)\big(M(z)\big)^{-1}$ approaches the unit matrix as $s\rightarrow\infty$
\begin{equation}\label{Uesti}
	\|U\big(M\big)^{-1}-I\|_{L^2\cap L^{\infty}(C_r)}\leq c_3s^{-3}
\end{equation}
and from \eqref{Xlmatchup}, also $V(z)\big(M(z)\big)^{-1}$
\begin{equation}\label{Vesti}
	\|V\big(M\big)^{-1}-I\|_{L^2\cap L^{\infty}(C_l)}\leq c_4s^{-3},\ s\rightarrow\infty.
\end{equation}
All together, with $G_R$ denoting the jump matrix in the latter ratio-RHP and $\Sigma_R$ the underlying contour
\begin{equation}\label{esti3}
	\| G_R-I\|_{L^2\cap L^{\infty}(\Sigma_R)}\leq cs^{-3},\ \ s\rightarrow\infty
\end{equation}
uniformly on any compact subset of the set \eqref{excset1}. The latter estimation enables us to solve the ratio-RHP iteratively.

\section{Solution of the RHP for $R(z)$ via iteration}\label{sec12}

The RHP for the function $R(z)$,
\begin{itemize}
	\item $R(z)$ is analytic for $z\in\mathbb{C}\backslash\Sigma_R$
	\item Along the contour depicted in Figure \ref{fig9}
	\begin{equation*}
		R_+(z) = R_-(z)G_R(z),\ \ z\in\Sigma_R
	\end{equation*}
	\item At infinity, $R(z)=I+O\big(z^{-1}\big),z\rightarrow\infty$
\end{itemize}
is equivalent to the singular integral equation
\begin{equation}\label{singularintegral}
	R_-(z)=I+\frac{1}{2\pi i}\int\limits_{\Sigma_R}R_-(w)\big(G_R(w)-I\big)\frac{dw}{w-z_-}
\end{equation}
and by standard arguments (see \cite{DZ1}) we know that for sufficiently large $s$ the relevant integral operator is contracting and equation \eqref{singularintegral} can be solved iteratively in $L^2(\Sigma_R)$. Moreover, its unique solution satisfies
\begin{equation}\label{esti4}
		\|R_--I\|_{L^2(\Sigma_R)}\leq cs^{-3},\ \ s\rightarrow\infty.
\end{equation}
It is now time to connect the latter information with \eqref{sidentity} and \eqref{xidentity} to determine the large $s$ asymptotics of $\det(I-K_{\textnormal{PII}})$ up to the constant term.

\section{Asymptotics of $\ln\det(I-K_{\textnormal{PII}})$ - preliminary steps}\label{sec13}
Let us recall the transformations, which have been used in the asymptotical solution of the original $Y$-RHP
\begin{equation*}
	Y(\lambda)\mapsto \tilde{X}(\lambda)\mapsto X(\lambda)\mapsto T(z)\mapsto R(z).
\end{equation*}
Thus, in order to determine $\ln\det(I-K_{\textnormal{PII}})$ via Proposition \ref{prop1}
we need to connect $\check{X}(\pm s)$ and $\check{X}'(\pm s)$ to the values $R(\pm 1)$ and $R'(\pm 1)$ of the ratio-function. This can be done as follows: From \eqref{errorfunction} and \eqref{Tsingular} for $|z-1|<r$
\begin{equation}\label{rightcompar}
	R(z)U(z)e^{-s^3g(z)\sigma_3} = \check{X}(zs)\begin{pmatrix}
	1 & -\frac{1}{2\pi}\ln\frac{z-1}{z+1}\\
	0 & 1\\
	\end{pmatrix}\left\{
                                   \begin{array}{ll}
                                     I, & \hbox{$\lambda\in\hat{\Omega}_1$,} \\
                                     S_3S_4, & \hbox{$\lambda\in\hat{\Omega}_3$,} \\
                                     S_3S_4S_6, & \hbox{$\lambda\in\hat{\Omega}_4$,}
                                   \end{array}
                                 \right.
\end{equation}
and for $|z+1|<r$
\begin{equation}\label{leftcompar}
	R(z)V(z)e^{-s^3g(z)\sigma_3} = \check{X}(zs)\begin{pmatrix}
	1 & -\frac{1}{2\pi}\ln\frac{z-1}{z+1}\\
	0 & 1\\
	\end{pmatrix}\left\{
                                   \begin{array}{ll}
                                     I, & \hbox{$\lambda\in\hat{\Omega}_1$,} \\
                                     S_3, & \hbox{$\lambda\in\hat{\Omega}_2$,} \\
                                     S_3S_4, & \hbox{$\lambda\in\hat{\Omega}_3$.} 
                                   \end{array}
                                 \right.
\end{equation}
This shows that the required values of $\check{X}(\pm s)$ and $\check{X}'(\pm s)$ can be determined via comparison in either \eqref{rightcompar} or \eqref{leftcompar}, once we know the local expansions of $U(z)$, respectively $V(z)$ at $z=\pm 1$. Our starting point is \eqref{Hankelorigin}
\begin{eqnarray}\label{PBERHlocal}
	P_{BE}^{RH}(\zeta) &=& \begin{pmatrix}
	\bar{a}_0+\frac{\bar{a}_1}{2}\ln\zeta & i(a_0+\frac{a_1}{2}\ln\zeta)\\
	\bar{a}_1 & a_1
	\end{pmatrix}\nonumber\\
	&& + \zeta\begin{pmatrix}
	\bar{a}_2 +\frac{\bar{a}_3}{2}\ln\zeta & i(a_2+\frac{a_3}{2}\ln\zeta) \\
	2\bar{a}_2+\bar{a}_3+\bar{a}_3\ln\zeta & 2 a_2+a_3+a_3\ln\zeta \\
	\end{pmatrix}+O\big(\zeta^2\ln\zeta),
\end{eqnarray}
as $\zeta\rightarrow 0$ and $-\frac{\pi}{6}<\textnormal{arg}\ \zeta <\frac{\pi}{6}$. The latter expansion together with the changes of variables $\zeta=\zeta(z) = -s^6g^2(z)$ and $\lambda=zs$ implies for $-\frac{\pi}{6}<\textnormal{arg}\ (\lambda-s)<\frac{\pi}{6}$
\begin{equation*}
	P_{BE}^{RH}\big(\zeta(z)\big) = \Omega\big(\ln(\lambda-s)\big)+(\lambda-s)\Pi\big(\ln(\lambda-s)\big)+O\big((\lambda-s)^2\ln(\lambda-s)\big),\ \ \lambda\rightarrow s
\end{equation*}
with the matrix functions $\Omega=(\Omega_{ij})$ and $\Pi=(\Pi_{ij})$ being determined from \eqref{PBERHlocal}. Now we combine the latter expansion with \eqref{Xrpara} and \eqref{errorfunction}
\begin{eqnarray*}
	&&R(z)U(z)e^{-s^3g(z)\sigma_3}\Big|_{z=\frac{\lambda}{s}} = R(z)B_r(z)\frac{\sigma_3}{2}\sqrt{\frac{\pi}{2}}e^{-i\frac{\pi}{4}}P_{BE}^{RH}\big(\zeta(z)\big)\Big|_{z=\frac{\lambda}{s}}\\
			 &=& R(1)B_r(1)\frac{\sigma_3}{2}\sqrt{\frac{\pi}{2}}e^{-i\frac{\pi}{4}}\Omega\big(\ln(\lambda-s)\big)+(\lambda-s)\bigg\{R(1)B_r(1)\\
	&&\times\frac{\sigma_3}{2}\sqrt{\frac{\pi}{2}}e^{-i\frac{\pi}{4}}\Pi\big(\ln(\lambda-s)\big)+\big(R'(1)B_r(1)+R(1)B_r'(1)\big)\frac{\sigma_3}{2}\sqrt{\frac{\pi}{2}}e^{-i\frac{\pi}{4}}\frac{\Omega\big(\ln(\lambda-s)\big)}{s}\bigg\}\\
	&&+O\big((\lambda-s)^2\ln(\lambda-s)\big),\ \ \lambda\rightarrow s, \ \ -\frac{\pi}{6}<\textnormal{arg}\ (\lambda-s)<\frac{\pi}{6}
\end{eqnarray*}
and similar identities hold for $-\pi<\textnormal{arg}\ (\lambda-s)<-\frac{\pi}{6}$ and $\frac{\pi}{6}<\textnormal{arg}\ (\lambda-s)<\pi$, they differ from \eqref{PBERHlocal} only by right multiplication with a Stokes matrix (see \eqref{PBERH}). On the other hand the right hand side in \eqref{rightcompar} implies for $-\frac{\pi}{6}<\textnormal{arg}\ (\lambda-s)<\frac{\pi}{6}$
\begin{eqnarray*}
    &&T\Big(\frac{\lambda}{s}\Big)e^{-s^3g(\frac{\lambda}{s})\sigma_3}=
    \Big[\check{X}(s) +(\lambda-s)\check{X}'(s) +O\big((\lambda-s)^2\big)\Big]
    \begin{pmatrix}
                                   1 & -\frac{1}{2\pi}\ln\frac{\lambda-s}{\lambda+s} \\
                                   0 & 1 \\
                                 \end{pmatrix}\begin{pmatrix}
                                                1 & 0 \\
                                                i & 1 \\
                                              \end{pmatrix}\\
&=&\Bigg[\begin{pmatrix}
     \check{X}_{11}(s) & \frac{1}{2\pi}\ln(2s)\check{X}_{11}(s)+\check{X}_{12}(s) \\
     \check{X}_{21}(s) & \frac{1}{2\pi}\ln(2s)\check{X}_{21}(s)+\check{X}_{22}(s) \\
   \end{pmatrix}\\
   &&
+(\lambda-s)\begin{pmatrix}
                \check{X}_{11}'(s) & \frac{1}{2\pi}(\ln(2s)\check{X}_{11}'(s)+\frac{1}{2s}\check{X}_{11}(s))+\check{X}_{12}'(s) \\
                \check{X}_{21}'(s) & \frac{1}{2\pi}(\ln(2s)\check{X}_{21}'(s)+\frac{1}{2s}\check{X}_{21}(s))+\check{X}_{22}'(s) \\
              \end{pmatrix}\Bigg]\\
              &&\times\begin{pmatrix}
                             1-\frac{i}{2\pi}\ln(\lambda-s) & -\frac{1}{2\pi}\ln(\lambda-s) \\
                             i & 1 \\
                           \end{pmatrix}+O\big((\lambda-s)^2\ln(\lambda-s)\big),\ \ \lambda\rightarrow s
\end{eqnarray*}
and thus by comparison of the left hand side and right hand side in \eqref{rightcompar}
\begin{equation}\label{Xcheck11}
	\check{X}_{11}(s) = \sqrt{\frac{\pi}{2}}e^{-i\frac{\pi}{4}}\big(R(1)B_r(1)\big)_{11},\ \ \ \ \check{X}_{21}(s)=\sqrt{\frac{\pi}{2}}e^{-i\frac{\pi}{4}}\big(R(1)B_r(1)\big)_{21}.
\end{equation}
Although \eqref{Xcheck11} was derived from a comparison in the sector $-\frac{\pi}{6}<\textnormal{arg}\ (\lambda-s)<\frac{\pi}{6}$, the multiplication of \eqref{PBERHlocal} with the right Stokes matrix in the other sectors as well as the use of appropriate Stokes matrices in \eqref{Xlocal} show that \eqref{Xcheck11} follows from comparison in a full neighborhood of $\lambda =+s$. Comparing terms of $O\big((\lambda-s)\ln(\lambda-s)\big)$ we also derive
\begin{eqnarray}\label{Xcheck211}
	\check{X}_{11}'(s)&=&-\sqrt{\frac{\pi}{2}}e^{-i\frac{\pi}{4}}\bigg[\frac{8s^5}{9}\bigg(\frac{3}{2}+\frac{3x}{4s^2}\bigg)^2
	\Big(\big(R(1)B_r(1)\big)_{11}+2i\big(R(1)B_r(1)\big)_{12}\Big)\nonumber\\
	&&-\frac{1}{s}\big(R'(1)B_r(1)+R(1)B_r'(1)\big)_{11}\bigg]
\end{eqnarray}
and
\begin{eqnarray}\label{Xcheck221}
	\check{X}_{21}'(s) &=&-\sqrt{\frac{\pi}{2}}e^{-i\frac{\pi}{4}}\bigg[\frac{8s^5}{9}\bigg(\frac{3}{2}+\frac{3x}{4s^2}\bigg)^2
	\Big(\big(R(1)B_r(1)\big)_{21}+2i\big(R(1)B_r(1)\big)_{22}\Big)\nonumber\\
	&&-\frac{1}{s}\big(R'(1)B_r(1)+R(1)B_r'(1)\big)_{21}\bigg].
\end{eqnarray}
All together we have the identities
\begin{equation*}
	F_1(s)=\frac{i}{\sqrt{2\pi}}\check{X}_{11}(s),\  F_2(s)=\frac{i}{\sqrt{2\pi}}\check{X}_{21}(s),\  F_1'(s)=\frac{i}{\sqrt{2\pi}}\check{X}_{11}'(s),\  F_2'(s)=\frac{i}{\sqrt{2\pi}}\check{X}_{21}'(s)
\end{equation*}
related to the solution of the ratio-RHP via \eqref{Xcheck11},\eqref{Xcheck211} and \eqref{Xcheck221}. A completely similar analysis for the left endpoint $\lambda=-s$ provides us with
\begin{equation}\label{F12minus}
	F_1(-s) = \frac{i}{2}\big(R(-1)B_l(-1)\big)_{12},\ \ \ F_2(-s)=\frac{i}{2}\big(R(-1)B_l(-1)\big)_{22}
\end{equation}
and
\begin{eqnarray}\label{Fprime1minus}
	F_1'(-s) &=&\frac{i}{2}\bigg[\frac{8s^5}{9}\bigg(\frac{3}{2}+\frac{3x}{4s^2}\bigg)^2\Big(\big(R(-1)B_l(-1)\big)_{12}+2\big(R(-1)B_l(-1)\big)_{11}\Big)\nonumber\\
	&&+\frac{1}{s}\big(R(-1)B_l'(-1)+R'(-1)B_l(-1)\big)_{12}\bigg]
\end{eqnarray}
as well as
\begin{eqnarray}\label{Fprime2minus}
	F_2'(-s) &=&\frac{i}{2}\bigg[\frac{8s^5}{9}\bigg(\frac{3}{2}+\frac{3x}{4s^2}\bigg)^2\Big(\big(R(-1)B_l(-1)\big)_{22}+2\big(R(-1)B_l(-1)\big)_{21}\Big)\nonumber\\
	&&+\frac{1}{s}\big(R(-1)B_l'(-1)+R'(-1)B_l(-1)\big)_{22}\bigg].
\end{eqnarray}
We can now derive 
\begin{eqnarray*}
	&&R(s,s)= F_1'(s)F_2(s)-F_2'(s)F_1(s)\\
	&=& -\frac{4s^5}{9}\bigg(\frac{3}{2}+\frac{3x}{4s^2}\bigg)^2\Big[\big(R(1)B_r(1)\big)_{11}\big(R(1)B_r(1)\big)_{22}-\big(R(1)B_r(1)\big)_{21}\big(R(1)B_r(1)\big)_{12}\Big]\\
	&&+\frac{i}{4s}\Big[\big(R'(1)B_r(1)+R(1)B_r'(1)\big)_{11}\big(R(1)B_r(1)\big)_{21}-\big(R'(1)B_r(1)+R(1)B_r'(1)\big)_{21}\\
	&&\times\big(R(1)B_r(1)\big)_{11}\Big]
\end{eqnarray*}
as well as
\begin{eqnarray*}
	&&R(-s,-s)=F_1'(-s)F_2(-s)-F_2'(-s)F_1(-s)\\
	&=&-\frac{4s^5}{9}\bigg(\frac{3}{2}+\frac{3x}{4s^2}\bigg)^2\Big[\big(R(-1)B_l(-1)\big)_{11}\big(R(-1)B_l(-1)\big)_{22}-\big(R(-1)B_l(-1)\big)_{21}\\
	&&\times\big(R(-1)B_l(-1)\big)_{12}\Big]-\frac{1}{4s}\Big[\big(R(-1)B_l'(-1)+R'(-1)B_l(-1)\big)_{12}\\
	&&\times\big(R(-1)B_l(-1)\big)_{22}-\big(R(-1)B_l'(-1)+R'(-1)B_l(-1)\big)_{22}\big(R(-1)B_l(-1)\big)_{12}\Big].\\
\end{eqnarray*}
We make the following observation
\begin{proposition} $R(z)$ is unimodular for any $x\in\mathbb{R}$, i.e. $\det R(z)\equiv 1$.
\end{proposition}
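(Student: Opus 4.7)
The approach is to chase unimodularity through the entire construction. Since $R(z)$ is defined piecewise by \eqref{errorfunction} as $TU^{-1}$, $TV^{-1}$, or $TM^{-1}$, it is enough to prove that each of $T(z)$, $M(z)$, $U(z)$, $V(z)$ has determinant identically $1$.

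For $T(z)$, I would trace back along the chain $Y\mapsto\tilde X\mapsto X\mapsto T$. Definition \eqref{fromXtoT} gives $\det T(z)=\det X(zs)$ since $\sigma_3$ is traceless; the piecewise relation $X=\tilde X\cdot(\text{Stokes product})$ preserves determinants because the Stokes matrices \eqref{stokesmat} are unit triangular; and $\det\tilde X=\det Y\cdot\det\Psi$ from \eqref{firsttrafo}, where $\det Y\equiv 1$ was already noted in Section \ref{sec5} (Liouville applied to the unimodular jump $G$), and $\det\Psi\equiv 1$ because the coefficient matrix $A(\lambda,x)$ in \eqref{lax1} is traceless and $\Psi$ is normalized to $I$ at infinity. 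The unimodularity of $M$ is immediate from the explicit diagonalized formula \eqref{modelRHP}: a direct expansion gives $\det M=\frac{1}{4}[(\beta+\beta^{-1})^2-(\beta-\beta^{-1})^2]=1$.

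The only nontrivial point is $\det U\equiv\det V\equiv 1$. Here the determinant of the Hankel block $P_{BE}^{RH}(\zeta)$ reduces, after stripping the unit-triangular multipliers used in \eqref{PBERH}, to a constant multiple of the classical Wronskian $W(H_0^{(1)},H_0^{(2)})(z)=-4i/(\pi z)$; the analogous computation applies to $\tilde P_{BE}^{RH}$. Matching these constants against the scalar prefactors built into \eqref{Xrpara} and \eqref{Xlpara} — the matrices $B_{r,l}$, the factor $\sigma_3/2$ (or its analogue on the left), the scalars $\sqrt{\pi/2}\,e^{\pm i\pi/4}$, and the unimodular $e^{s^3g\sigma_3}$ — produces a clean cancellation to $1$. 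Feeding these four unimodularities into \eqref{errorfunction} yields $\det R(z)\equiv 1$. An equivalent shortcut is to observe that $\det G_R\equiv 1$ on $\Sigma_R$ and that $R$ is holomorphic at $z=\pm 1$, so the scalar $\det R$ extends to an entire function with limit $1$ at infinity and Liouville's theorem concludes the proof. The only real obstacle is the bookkeeping of the various scalar prefactors in the construction of the parametrices, which closes precisely thanks to the Hankel Wronskian.
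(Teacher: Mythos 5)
Your argument is correct, and the ``shortcut'' you give at the end --- $\det G_R\equiv 1$ on $\Sigma_R$, $R$ holomorphic at $z=\pm 1$, hence $\det R$ entire with limit $1$ at infinity, Liouville --- is exactly the paper's proof, resting as you say on the Hankel Wronskian $W(H_0^{(1)},H_0^{(2)})(z)=-4i/(\pi z)$ to get $\det P_{BE}^{RH}=4i/\pi$ and $\det\tilde P_{BE}^{RH}=-4i/\pi$, from which $\det U=\det V=1$. Your first route, tracing $\det T=1$ back through $\det\Psi\equiv 1$ (Abel plus normalization) and $\det Y\equiv 1$, is also valid but is more than the paper needs, since once $\det G_R\equiv 1$ one never has to know $\det T$.
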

\begin{proof} From \eqref{PBERH} we obtain that $\det P_{BE}^{RH}(\zeta)=\frac{4i}{\pi}$, hence $\det U(z)=1$. Similarly $\det\tilde{P}_{BE}^{RH}(\zeta)= -\frac{4i}{\pi}$ leading to $\det V(z)= 1$. Thus the ratio-RHP has a unimodular jump matrix $G_R(z)$ which implies by normalization at infinity $\det R(z)\equiv 1$.
\end{proof}
Applying the latter Proposition, one checks now readily
\begin{equation*}
	\big(R(1)B_r(1)\big)_{11}\big(R(1)B_r(1)\big)_{22}-\big(R(1)
	B_r(1)\big)_{21}\big(R(1)B_r(1)\big)_{12} = -2
\end{equation*}
and
\begin{equation*}
	\big(R(-1)B_l(-1)\big)_{11}\big(R(-1)B_l(-1)\big)_{22}-\big(R(-1)
	B_l(-1)\big)_{21}\big(R(-1)B_l(-1)\big)_{12} = -2.
\end{equation*}
We combine these two identities with the values of $B_r'(1)$ and $B_l'(-1)$ to deduce
\begin{eqnarray}\label{resolventright}
	R(s,s)&=&\frac{8s^5}{9}\bigg(\frac{3}{2}+\frac{3x}{4s^2}\bigg)^2+\frac{2is^2}{3}\bigg(\frac{3}{2}+\frac{3x}{4s^2}\bigg)
	\Big[\big(R'_{11}(1)+R_{12}'(1)\big)\big(R_{21}(1)+R_{22}(1)\big)\nonumber\\
	&&-\big(R_{21}'(1)+R_{22}'(1)\big)\big(R_{11}(1)+R_{12}(1)\big)\Big]
\end{eqnarray}
as well as
\begin{eqnarray}\label{resolventleft}
	R(-s,-s)&=&\frac{8s^5}{9}\bigg(\frac{3}{2}+\frac{3x}{4s^2}\bigg)^2-\frac{2is^2}{3}\bigg(\frac{3}{2}+\frac{3x}{4s^2}\bigg)
	\Big[\big(R_{11}'(-1)-R_{12}'(-1)\big)\\
	&&\times\big(R_{21}(-1)-R_{22}(-1)\big)-\big(R_{21}'(-1)-R_{22}'(-1)\big)\big(R_{11}(-1)-R_{12}(-1)\big)\Big]\nonumber.
\end{eqnarray}

\section{Asymptotics of $\ln\det(I-K_{\textnormal{PII}})$ up to the constant term}\label{sec14}

The stated asymptotics \eqref{theo1} without the constant term is a direct consequence of Proposition \ref{prop1} and \ref{prop2}. We first need to compute $R(\pm 1)$. From the integral representation and estimates \eqref{esti3},\eqref{esti4}
\begin{eqnarray*}
	R(\pm 1) &=& I+\frac{1}{2\pi i}\int\limits_{\Sigma_R}R_-(w)\big(G_R(w)-I\big)\frac{dw}{w\mp 1}\\
	&=& I+\frac{1}{2\pi i}\int\limits_{C_{r,l}}\big(G_R(w)-I\big)\frac{dw}{w\mp 1} +O\big(s^{-6}\big) = I+O\big(s^{-3}\big),\ \ s\rightarrow\infty
\end{eqnarray*}
so
\begin{eqnarray*}
	R(s,s) &=& \frac{8s^5}{9}\bigg(\frac{3}{2}+\frac{3x}{4s^2}\bigg)^2+\frac{2is^2}{3}\bigg(\frac{3}{2}+\frac{3x}{4s^2}\bigg)\Big[R'_{11}(1)-R'_{22}(1)\\
	&&+R_{12}'(1)-R'_{21}(1)\Big]+O\big(s^{-4}\big)
\end{eqnarray*}
and
\begin{eqnarray*}
	R(-s,-s)&=&\frac{8s^5}{9}\bigg(\frac{3}{2}+\frac{3x}{4s^2}\bigg)^2+\frac{2is^2}{3}\bigg(\frac{3}{2}+\frac{3x}{4s^2}\bigg)\Big[R'_{11}(-1)-R'_{22}(-1)\\
	&&-R'_{12}(-1)+R'_{21}(-1)\Big]+O\big(s^{-4}\big),\ \ s\rightarrow\infty.
\end{eqnarray*}
In order to compute the values $R'(\pm 1)$ one uses \eqref{Xrmatchup} and \eqref{Xlmatchup}
\begin{eqnarray*}
	R'(\pm 1)&=&\frac{1}{2\pi i}\int\limits_{C_{r,l}}\big(G_R(w)-I\big)\frac{dw}{(w\mp 1)^2} + O\big(s^{-6}\big)\\
	&=&\frac{1}{2\pi i}\int\limits_{C_r}\frac{i}{16\sqrt{\zeta(w)}}\begin{pmatrix}
	\beta^2-3\beta^{-2} & -(\beta^2+3\beta^{-2})\\
	\beta^2+3\beta^{-2} & -(\beta^2-3\beta^{-2})\\
	\end{pmatrix}\frac{dw}{(w\mp 1)^2}\\
	&&+\frac{1}{2\pi i}\int_{C_l}\frac{1}{16\sqrt{\zeta(w)}}\begin{pmatrix}
	\beta^{-2}-3\beta^2 & \beta^{-2}+3\beta^2 \\
	-(\beta^{-2}+3\beta^2) & -(\beta^{-2}-3\beta^2) \\
	\end{pmatrix}\frac{dw}{(w\mp 1)^2}+O\big(s^{-6}\big)
\end{eqnarray*}
with the local variables given in \eqref{BEchangeright},\eqref{BEchangeleft}:
\begin{equation*}
	w\in C_r:\ \frac{\beta^2(w)}{\sqrt{\zeta(w)}}=\frac{3}{4s^3}\bigg(w^2+\frac{1}{2}+\frac{3x}{4s^2}\bigg)^{-1}\frac{1}{w-1},\ \frac{\beta^{-2}(w)}{\sqrt{\zeta(w)}}=\frac{3}{4s^3}\bigg(w^2+\frac{1}{2}+\frac{3x}{4s^2}\bigg)^{-1}\frac{1}{w+1}
\end{equation*}
\begin{equation*}
	w\in C_l:\ \frac{\beta^2(w)}{\sqrt{\zeta(w)}} = \frac{3i}{4s^3}\bigg(w^2+\frac{1}{2}+\frac{3x}{4s^2}\bigg)^{-1}\frac{1}{w-1},\ \frac{\beta^{-2}(w)}{\sqrt{\zeta(w)}} = \frac{3i}{4s^3}\bigg(w^2+\frac{1}{2}+\frac{3x}{4s^2}\bigg)^{-1}\frac{1}{w+1}.
\end{equation*}
In the end residue theorem leads us to
\begin{eqnarray*}
	R'(\pm 1)\bigg(\frac{3}{2}+\frac{3x}{4s^2}\bigg) &=&\frac{3i}{256s^3}\begin{pmatrix}
	-1 & \mp 1\\
	\pm 1 & 1\\
	\end{pmatrix}\\
	&&+\frac{3i}{64s^3}\bigg(\frac{3}{2}+\frac{3x}{4s^2}\bigg)^{-1}\begin{pmatrix}
	-\frac{25}{8}-\frac{9x}{16s^2} & \mp(\frac{41}{8}+\frac{9x}{16s^2})\\
	\pm(\frac{41}{8}+\frac{9x}{16s^2}) & \frac{25}{8}+\frac{9x}{16s^2} \\
	\end{pmatrix}\\
	&&+\frac{3i}{16s^3}\bigg(\frac{3}{2}+\frac{3x}{4s^2}\bigg)^{-2}\begin{pmatrix}
	-1 & \pm 1\\
	\mp 1 & 1\\
	\end{pmatrix} +O\big(s^{-6}\big),
\end{eqnarray*}
and 
\begin{equation}\label{rss}
	R(s,s)=R(-s,-s)=\frac{8s^5}{9}\bigg(\frac{3}{2}+\frac{3x}{4s^2}\bigg)^2+\frac{3}{8s}+O\big(s^{-3}\big),\ \ s\rightarrow\infty.
\end{equation}
Combining \eqref{rss} with \eqref{sidentity} we have thus derived the following asymptotics
\begin{equation}\label{logsasy}
	\frac{d}{ds}\ln\det(I-K_{\textnormal{PII}})=-4s^5-4xs^3-x^2s-\frac{3}{4s}+O\big(s^{-3}\big),\ \ s\rightarrow\infty
\end{equation}
and the error term is uniform on any compact subset of the set \eqref{excset1}.
\bigskip

Opposed to \eqref{logsasy} we are going to determine $\det(I-K_{\textnormal{PII}})$ now via Proposition \ref{prop2}
\begin{equation*}
	\frac{d}{dx}\ln\det(I-K_{\textnormal{PII}})=i\big(X_1^{11}-X_1^{22}\big)-v
\end{equation*}
where
\begin{equation*}
	X_1^{ii} = \lim_{\lambda\rightarrow\infty}\Big(\lambda\big(X(\lambda)e^{i(\frac{4}{3}\lambda^3+x\lambda)\sigma_3}-I\big)_{ii}\Big),\ \ i=1,2.
\end{equation*}
To this end remember the definition of $\beta(z)$ and $g(z)$, hence as $z\rightarrow\infty$
\begin{equation*}
	M(z) = I+\frac{1}{2z}\begin{pmatrix}
	0 & 1 \\
	1 & 0 \\
	\end{pmatrix}+O\big(z^{-2}\big),\  e^{s^3(\vartheta(z)-g(z))\sigma_3} = I+\frac{is^3}{2z}\Big(1+\frac{x}{s^2}\Big)\sigma_3+O\big(z^{-2}\big),
\end{equation*}
which gives
\begin{equation*}
	X_1^{ii}=s\bigg[\frac{is^3}{2}\Big(1+\frac{x}{s^2}\Big)\sigma_3+\frac{1}{2}\begin{pmatrix}
	0 & 1\\
	1 & 0\\
	\end{pmatrix}+\frac{i}{2\pi}\int\limits_{C_{r,l}}\Big(R_-(w)\big(G_R(w)-I\big)\Big)dw\bigg]_{ii}
\end{equation*}
already neglecting exponentially small contributions in the last equality. The latter integral can be evaluated in a similar way as we did it during the computation of \eqref{logsasy}, we end up with
\begin{eqnarray*}
	&&\frac{i}{2\pi}\int\limits_{C_{r,l}}\Big(R_-(w)\big(G_R(w)-I\big)\Big)_{11}dw=\frac{3i}{32s^3}\bigg(\frac{3}{2}+\frac{3x}{4s^2}\bigg)^{-1}+O\big(s^{-6}\big)\\
	&=& -\frac{i}{2\pi}\int\limits_{C_{r,l}}\Big(R_-(w)\big(G_R(w)-I\big)\Big)_{22}dw,
\end{eqnarray*}
i.e. together
\begin{eqnarray}\label{logxasy}
	\frac{d}{dx}\ln\det(I-K_{\textnormal{PII}}) &=& 2is\bigg[\frac{it}{2}\Big(1+\frac{x}{s^2}\Big)+\frac{3i}{32t}\bigg(\frac{3}{2}+\frac{3x}{4s^2}\bigg)^{-1}\bigg]-v+O\big(s^{-5}\big)\nonumber\\
	&=&-s^4-s^2x-v-\frac{1}{8s^2}+O\big(s^{-4}\big),\ \ \ s\rightarrow\infty,
\end{eqnarray}
again uniformly on any compact subset of the set \eqref{excset1}. Given the asymptotic expansions \eqref{logsasy} and \eqref{logxasy} we can now determine the large $s$-asymptotics of $\ln\det(I-K_{\textnormal{PII}})$ via integration
\begin{equation}\label{asywithoutconstant}
	\ln\det(I-K_s)=-\frac{2}{3}s^6-xs^4-x^2s^2-\frac{3}{4}\ln s+\int\limits_x^{\infty}(y-x)u^2(y)dy +\omega +O\big(s^{-1}\big),
\end{equation}
recalling that $u(x)\sim \textnormal{Ai}(x)$ as $x\rightarrow+\infty$. As we see \eqref{asywithoutconstant} matches \eqref{theo1} up to a general, universal constant term $\omega$. We shall now determine this constant term using an approximation argument for the given kernel \eqref{PIIkernel}.

\section{Kernel approximation - $K_{\textnormal{PII}}\mapsto K_{\textnormal{csin}}$}\label{sec15}
Within the asymptotical analysis of the $X$-RHP in the past sections, one of the first transformations of the RHP was the $g$-function transformation: it allowed us to transform the jumps on the infinite branches $\Gamma_k$ to exponentially small contributions and to solve the model problem on the line segment $[-1,1]$. Then and there it was crucial that for $x$ chosen from a compact subset of the real line and $s$ sufficiently large, one always has that $\textnormal{Re}\ g(z)$ is negative on the infinite parts $\Gamma_1,\Gamma_3$ in the upper halfplane and positive on the infinite contours $\Gamma_4,\Gamma_6$ in the lower halfplane. This fact however also holds in the limit $x\rightarrow+\infty$, on the other hand it fails for $x\rightarrow-\infty$: Let
\begin{equation*}
	 z_{\pm} = \pm i\sqrt{\frac{1}{2}+\frac{3x}{4s^2}}
\end{equation*}
denote the two vertices of the curves depicted in Figure \ref{fig33} below. In case $x,s>0$, they are purely imaginary and bounded away from zero, hence the statement on the sign of $\textnormal{Re}\ g(z)$ on $\Gamma_i$ follows.
\begin{figure}[tbh]
  \begin{center}
  \psfragscanon
  \psfrag{1}{\footnotesize{$\textnormal{Re}\ g<0$}}
  \psfrag{2}{\footnotesize{$\textnormal{Re}\ g<0$}}
  \psfrag{3}{\footnotesize{$\textnormal{Re}\ g>0$}}
  \psfrag{4}{\footnotesize{$\textnormal{Re}\ g>0$}}
  \psfrag{5}{\footnotesize{$\textnormal{Re}\ g<0$}}
  \psfrag{6}{\footnotesize{$\textnormal{Re}\ g<0$}}
  \psfrag{7}{\footnotesize{$\textnormal{Re}\ g>0$}}
  \psfrag{8}{\footnotesize{$\textnormal{Re}\ g>0$}}
  \psfrag{9}{\footnotesize{$\textnormal{Re}\ g>0$}}
  \psfrag{10}{\footnotesize{$\textnormal{Re}\ g<0$}}
  \psfrag{11}{\footnotesize{$\textnormal{Re}\ g<0$}}
  \psfrag{12}{\footnotesize{$\textnormal{Re}\ g>0$}}
  \psfrag{13}{\footnotesize{$z_+$}}
  \psfrag{14}{\footnotesize{$z_-$}}
  \psfrag{15}{\footnotesize{$z_-$}}
  \psfrag{16}{\footnotesize{$z_+$}}
  \includegraphics[width=10cm,height=6cm]{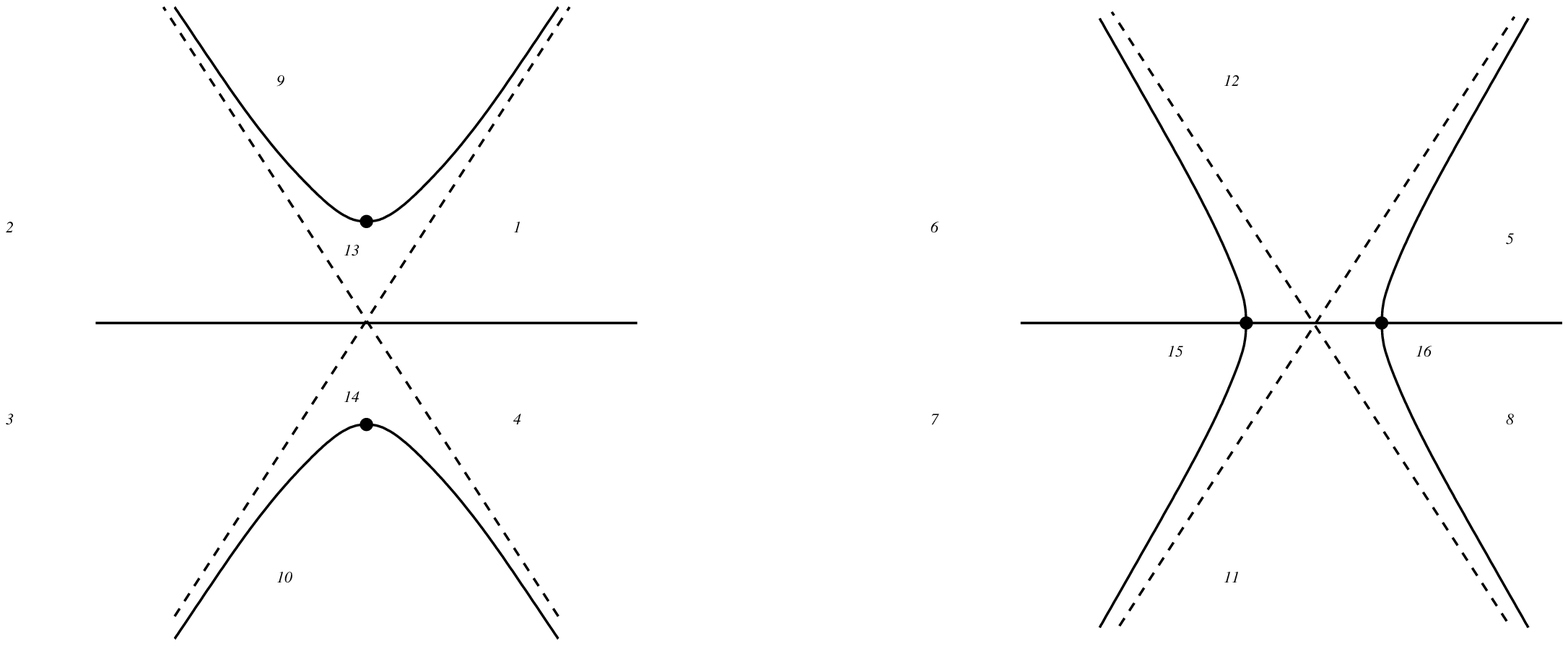}
  \end{center}
  \caption{Sign diagram for the function $\textnormal{Re}\ g(z)$. In the left picture we indicate the location of $z_{\pm}$ as $x>0$ and in the right picture for a particular choice of $x<0$. Along the solid lines $\textnormal{Re}\ g(z)=0$ and the dashed lines resemble $\textnormal{arg}\ z =\pm\frac{\pi}{3},\pm\frac{2\pi}{3}$}
  \label{fig33}
\end{figure}
\bigskip

Our approach henceforth will be to study the large positive $x$-limit of \eqref{PIIkernel}, i.e. the large positive $x$-limit of the associated function $\Psi(\lambda,x)$. We begin with the following Riemann-Hilbert problem depicted in Figure \ref{fig10}, compare \cite{FIKN}
\begin{figure}[tbh]
  \begin{center}
  \psfragscanon
  \psfrag{1}{\footnotesize{$S_1$}}
  \psfrag{2}{\footnotesize{$S_3$}}
  \psfrag{3}{\footnotesize{$S_4$}}
  \psfrag{4}{\footnotesize{$S_6$}}
  \includegraphics[width=7cm,height=3.5cm]{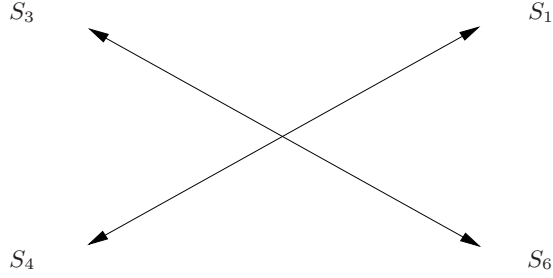}
  \end{center}
  \caption{The RHP jump graph associated with the Hastings-McLeod transcendent}
  \label{fig10}
\end{figure}
\begin{itemize}
	\item $\Psi^{\infty}(\lambda)$ is analytic for $\lambda\in\mathbb{C}\backslash\big(\bigcup_k R_k\big)$ where $R_k$ denote the rays
	\begin{equation*}
		R_k=\{\lambda\in\mathbb{C}|\ \textnormal{arg}\ \lambda=\frac{\pi}{6}+\frac{\pi}{3}(k-1)\},\ k=1,3,4,6
	\end{equation*}
	\item On the rays $R_k$, the boundary values of the the function $\Psi^{\infty}$ satisfy the jump relation
	\begin{equation*}
		\Psi^{\infty}_+(\lambda)=\Psi^{\infty}_-(\lambda)S_k,\ \ \lambda\in R_k,\ k=1,3,4,6
	\end{equation*}
	\item At $\lambda=\infty$ the following asymptotic behavior takes place
	\begin{equation*}
		\Psi^{\infty}(\lambda)e^{i(\frac{4}{3}\lambda^3+x\lambda)\sigma_3}=I+O\big(\lambda^{-1}\big)
	\end{equation*}
\end{itemize} 
which is connected to the given $\Psi$-function of \eqref{PIIkernel} by
\begin{equation*}
	\Psi(\lambda,x)=\Psi^{\infty}(\lambda,x)S_1.
\end{equation*}
As we see, determining the large positive $x$ behavior of $\Psi(\lambda,x)$ therefore reduces to an analysis of the oscillatory $\Psi^{\infty}$-RHP. However the latter RHP is very well known since it is used to determine the large $x$-asymptotics of the Hastings-McLeod solution of the second Painlev\'e transcendent given in the introduction (cf. \cite{FIKN}). We have in fact for $\lambda\in(-s,s)$
\begin{equation*}
	\Psi^{\infty}(\lambda,x)e^{i(\frac{4}{3}\lambda^3+x\lambda)\sigma_3}-I=
	O\bigg(\frac{x^{-1/4}e^{-\frac{2}{3}x^{3/2}}}{\sqrt{4\lambda^2+x}}\bigg),\hspace{0.5cm}x\rightarrow+\infty
\end{equation*}
hence
\begin{eqnarray*}
	\psi_{11}(\lambda,x)&=&e^{-i(\frac{4}{3}\lambda^3+x\lambda)}+O\bigg(\frac{x^{-1/4}e^{-\frac{2}{3}x^{3/2}}}{\sqrt{4\lambda^2+x}}\bigg)\\
	\psi_{21}(\lambda,x)&=&-ie^{i(\frac{4}{3}\lambda^3+x\lambda)}+O\bigg(\frac{x^{-1/4}e^{-\frac{2}{3}x^{3/2}}}{\sqrt{4\lambda^2+x}}\bigg)
\end{eqnarray*}
as $x\rightarrow+\infty$ and $\lambda\in(-s,s)$.
Going back to \eqref{PIIkernel} we obtain
\begin{equation}\label{kernelapprox}
	K_{\textnormal{PII}}(\lambda,\mu) = \check{K}_{\textnormal{csin}}(\lambda,\mu)\Bigg(1+O\bigg(\frac{x^{-1/4}e^{-\frac{2}{3}x^{3/2}}}{\sqrt{(4\lambda^2+x)(4\mu^2+x)}}\bigg)\Bigg),\ x\rightarrow+\infty,\ \ \lambda,\mu\in(-s,s)
\end{equation}
where
\begin{equation}\label{lambdacube}
	\check{K}_{\textnormal{csin}}(\lambda,\mu)= \frac{\sin\big(\frac{4}{3}(\lambda^3-\mu^3)+x(\lambda-\mu)\big)}{\pi(\lambda-\mu)}.
\end{equation}
The latter integral kernel is a cubic generalization of the well known sine kernel
\begin{equation*}
	\frac{\sin x(\lambda-\mu)}{\pi(\lambda-\mu)}
\end{equation*}
acting on $L^2\big((-s,s);d\lambda\big)$. In order to compute the constant term in \eqref{theo1} we will introduce a parameter $t\in[0,1]$ and pass from \eqref{lambdacube} to
\begin{equation}\label{lambdacubet}
	\check{K}_{\textnormal{csin}}(\lambda,\mu)\mapsto K_{\textnormal{csin}}(\lambda,\mu) = \frac{\sin\big(\frac{4}{3}t(\lambda^3-\mu^3)+x(\lambda-\mu)\big)}{\pi(\lambda-\mu)}
\end{equation}
Our strategy is as follows. First we are going to find the large $s$-asymptotics of
\begin{equation}\label{tlogderivative}
	\frac{d}{dt}\ln\det(I-K_{\textnormal{csin}})
\end{equation}
using the Riemann-Hilbert approach of section $2$. This analysis can be done independently of the original kernel \eqref{PIIkernel}, although in some details it is similar to the presented one.  Afterwards, using uniformity of the asymptotic expansion with respect to $t\in[0,1]$ we shall integrate 
\begin{equation*}
	\int\limits_0^1\frac{d}{dt}\ln\det(I-K_{\textnormal{csin}})\ dt = \ln\det(I-\check{K}_{\textnormal{csin}})-\ln\det(I-K_{\sin});
\end{equation*}
but since the asymptotic expansion of the sine kernel as $s\rightarrow\infty$ is known including the constant term, we know the large $s$-asymptotics of
\begin{equation*}
	\det(I-\check{K}_{\textnormal{csin}})
\end{equation*}
also up to order $O(s^{-1})$, in fact
\begin{equation}\label{approxesti1}
	\ln\det(I-\check{K}_{\textnormal{csin}}) = A(s,x)+\omega_0 +O\big(s^{-1}\big),\ \ s\rightarrow\infty 
\end{equation}
uniformly on any compact subset of the set \eqref{excset1} with 
\begin{equation*}
	A(s,x)=-\frac{2}{3}s^6-s^4x-\frac{1}{2}(sx)^2-\frac{3}{4}\ln s,\hspace{0.5cm} \omega_0=-\frac{1}{6}\ln 2+3\zeta'(-1),
\end{equation*}
which is the statement of Theorem \ref{theo2}. On the other hand from \eqref{asywithoutconstant}
\begin{equation}\label{approxesti2}
	\ln\det(I-K_{\textnormal{PII}})=A(s,x)+\int\limits_x^{\infty}(y-x)u^2(y)dy +\omega +O\big(s^{-1}),\ \ s\rightarrow\infty
\end{equation}
hence considering \eqref{kernelapprox} as well as
\begin{equation}\label{PIIintegralesti}
	\lim_{x\rightarrow\infty}\int\limits_x^{\infty}(y-x)u^2(y)=0
\end{equation}
we might conjecture that $\omega=\omega_0$, and this conclusion can be justified as follows. Notice the following identity for trace class operators $A,B$ (see \cite{S})
\begin{equation*}
	\det(I-A)(I-B) = \det(I-A)\det(I-B)
\end{equation*}
which gives in our situation
\begin{equation*}
	\det(I-K_{\textnormal{PII}})-\det(I-\check{K}_{\textnormal{csin}})=-\det(I-\check{K}_{\textnormal{csin}})
	\Big[1-\det\Big(I-(I-\check{K}_{\textnormal{csin}})^{-1}(K_{\textnormal{PII}}-\check{K}_{\textnormal{csin}})\Big)\Big]
\end{equation*}
provided
\begin{equation}\label{resolventcsin}
	(I-\check{K}_{\textnormal{csin}})^{-1} = I+\check{R}_{\textnormal{csin}}
\end{equation}
exists as a bounded operator. The latter statement will follow from the Riemann-Hilbert analysis given in section \ref{sec23}. Since from \eqref{kernelapprox}
\begin{equation*}
	(K_{\textnormal{PII}}f)(\lambda)=\int\limits_{-s}^sK_{\textnormal{PII}}(\lambda,\mu)f(\mu)d\mu = (\check{K}_{\textnormal{csin}}f)(\lambda)+(Ef)(\lambda)
\end{equation*}
where the trace class operator $E$ has a kernel satisfying
\begin{equation}\label{kernelapprox2}
	E(\lambda,\mu)=O\Big(x^{-1/4}e^{-\frac{2}{3}x^{3/2}}\Big),\ \ x\rightarrow\infty,\ \ (\lambda,\mu)\in[-s,s]\times[-s,s]
\end{equation}
we obtain
\begin{equation*}
	\det(I-K_{\textnormal{PII}})-\det(I-\check{K}_{\textnormal{csin}})=-\det(I-\check{K}_{\textnormal{csin}})
	\Big[1-\det\Big(I-(I+\check{R}_{\textnormal{csin}})E\Big)\Big]
\end{equation*}
and therefore
\begin{equation*}
	\frac{\det(I-K_{\textnormal{PII}})}{\det(I-\check{K}_{\textnormal{csin}})}=\det\Big(I-(I+\check{R}_{\textnormal{csin}})E\Big).
\end{equation*}
Now from the boundedness of $I+\check{R}_{\textnormal{csin}}$ as well as \eqref{kernelapprox2} we see that the convolution kernel of the operator
\begin{equation*}
	(I+\check{R}_{\textnormal{csin}})E
\end{equation*}
approaches zero exponentially fast as $x\rightarrow\infty$, thus via Hadamard's inequality
\begin{eqnarray*}
	&&\det\Big(I-(I+\check{R}_{\textnormal{csin}})E\Big)\\
	&=&1+\sum_{n=1}^{\infty}\frac{(-1)^n}{n!}
	\int\limits_{-s}^s\cdots\int\limits_{-s}^s\det\big[\big(I+\check{R}_{\textnormal{csin}})E\big](x_i,x_j)dx_1\cdots dx_n
	=1+o_s(1),\ x\rightarrow\infty
\end{eqnarray*}
or similarly
\begin{equation}\label{approxesti3}
	\ln\det(I-K_{\textnormal{PII}})=\ln\det(I-\check{K}_{\textnormal{csin}})+o_s(1),\ \ x\rightarrow\infty.
\end{equation}
We combine \eqref{approxesti1},\eqref{approxesti2}, \eqref{approxesti3} and obtain from Schwarz inequality
\begin{equation*}
	|\omega_0 -\omega|\leq \frac{\alpha}{s}+\frac{\beta(s)}{x}+\int\limits_x^{\infty}(y-x)u^2(y)dy
\end{equation*}
for all $x\geq x_0$ and $s\geq s_0$, with a universal constant $\alpha$ and a positive function $\beta=\beta(s)$. Recalling \eqref{PIIintegralesti} we first take the limit $x\rightarrow\infty$ and afterwards $s\rightarrow\infty$ to conclude $\omega_0=\omega$. 


\section{Riemann-Hilbert problem for $\det(I-K_{\textnormal{csin}})$}\label{sec16}

Before we start proving \eqref{approxesti1}, let us locate \eqref{lambdacubet} within the framework of integrable Fredholm operators and derive the connection of \eqref{tlogderivative} with the solution of the underlying Riemann-Hilbert problem.
\smallskip

The given kernel \eqref{lambdacubet} is of integrable type with
\begin{equation*}
	K_{\textnormal{csin}}(\lambda,\mu) = \frac{d^t(\lambda)e(\mu)}{\lambda-\mu},\ d(\lambda)=\frac{1}{\sqrt{2\pi i}}\binom{e^{i(\frac{4}{3}t\lambda^3+x\lambda)}}{e^{-i(\frac{4}{3}t\lambda^3+x\lambda)}},\  e(\lambda)=\frac{1}{\sqrt{2\pi i}}\binom{e^{-i(\frac{4}{3}t\lambda^3+x\lambda)}}{-e^{i(\frac{4}{3}t\lambda^3+x\lambda)}}
\end{equation*}
hence Lemma $2$ implies the following $\Theta$-RHP 
\begin{itemize}
	\item $\Theta(\lambda)$ is analytic for $\lambda\in\mathbb{C}\backslash[-s,s]$
	\item On the line segment $[-s,s]$ oriented from left to right, the following jump holds
	\begin{equation*}
		\Theta_+(\lambda) = \Theta_-(\lambda)\begin{pmatrix}
		0 & e^{2i(\frac{4}{3}t\lambda^3+x\lambda)}\\
		-e^{-2i(\frac{4}{3}t\lambda^3+x\lambda)} & 2 \\
		\end{pmatrix},\hspace{0.5cm} \lambda\in[-s,s]
	\end{equation*}
	\item $\Theta(\lambda)$ has at most logarithmic endpoint singularities at $\lambda=\pm s$
	\begin{equation*}
		\Theta(\lambda)=O\big(\ln(\lambda\mp s)\big),\ \ \lambda\rightarrow\pm s
	\end{equation*}
	\item $\Theta(\lambda)\rightarrow I$ as $\lambda\rightarrow\infty$.
\end{itemize}
Also here we can factorize the jump matrix
\begin{equation*}
	H(\lambda)=\begin{pmatrix}
		0 & e^{2i(\frac{4}{3}t\lambda^3+x\lambda)}\\
		-e^{-2i(\frac{4}{3}t\lambda^3+x\lambda)} & 2 \\
		\end{pmatrix} = e^{i(\frac{4}{3}t\lambda^3+x\lambda)\sigma_3}\begin{pmatrix}
		0 & 1 \\
		-1 & 2 \\
		\end{pmatrix}e^{-i(\frac{4}{3}t\lambda^3+x\lambda)\sigma_3}
\end{equation*}
and employ a first transformation.

\section{First transformation of the $\Theta$-RHP}\label{sec17}

Introduce
\begin{equation}\label{Fromthetatophi}
	\Phi(\lambda)=\Theta(\lambda)e^{i(\frac{4}{3}t\lambda^3+x\lambda)\sigma_3},\ \ \lambda\in\mathbb{C}\backslash[-s,s]
\end{equation} 
and obtain the following RHP
\begin{itemize}
	\item $\Phi(\lambda)$ is analytic for $\lambda\in\mathbb{C}\backslash[-s,s]$
	\item On the segment $[-s,s]$
\begin{equation*}
	\Phi_+(\lambda)=\Phi_-(\lambda)\begin{pmatrix}
	0 & 1 \\
	-1 & 2\\
	\end{pmatrix},\hspace{0.5cm}\lambda\in[-s,s]
\end{equation*}
	\item $\Phi(\lambda)$ has at most logarithmic endpoint singularities at $\lambda=\pm s$
	\item We have the following asymptotics at infinity
\begin{equation*}
	\Phi(\lambda)\sim \Big(I+O\big(\lambda^{-1}\big)\Big)e^{i(\frac{4}{3}t\lambda^3+x\lambda)\sigma_3},\hspace{0.5cm}\lambda\rightarrow\infty.
\end{equation*}
\end{itemize}
As we are going to see in the next sections the latter RHP admits direct asymptotical analysis via the nonlinear steepest descent method. This analysis shows similarities to the analysis of the sine kernel determinant analysis (see \cite{DIZ}) and the analysis presented in the past sections of the current paper. However one major difference to \eqref{PIIkernel} is the absence of infinite jump contours in the given $\Phi$-RHP, hence we should not start our analysis from the $X$-RHP in section $2$ and use the previously discussed large $x$-approximation $\Psi^{\infty}(\lambda,x)$.
\smallskip

Before we start the asymptotical analysis, let us first express \eqref{tlogderivative} in terms of $\Theta(\lambda)$.


\section{The logarithmic $t$-derivative of $\det(I-K_{\textnormal{csin}})$}\label{sec18}

The following derivation is a special situation of the general case given in \cite{KKMST}. We start from \eqref{traceform}
\begin{eqnarray*}				\frac{d}{dt}\ln\det(I-K_{\textnormal{csin}})&=&-\int\limits_{-s}^s\bigg(\big(I-K_{\textnormal{csin}}\big)^{-1}\frac{dK_{\textnormal{csin}}}{dt}\bigg)(\lambda,\lambda)d\lambda\\
&=&-\int\limits_{-s}^s\bigg(\big(I+R_{\textnormal{csin}}\big)\frac{dK_{\textnormal{csin}}}{dt}\bigg)(\lambda,\lambda)d\lambda
\end{eqnarray*}
with $(I-K_{\textnormal{csin}})^{-1}=I+R_{\textnormal{csin}}$ denoting the resolvent. In the given situation
\begin{equation*}
	\frac{dK_{\textnormal{csin}}}{dt}(\lambda,\mu)=\frac{4i}{3}\big(d^t(\lambda)\sigma_3e(\mu)\big)\frac{\lambda^3-\mu^3}{\lambda-\mu}
	=\frac{4i}{3}\big(d^t(\lambda)\sigma_3e(\mu)\big)(\lambda^2+\lambda\mu+\mu^2)
\end{equation*}
hence
\begin{equation}\label{KKMSTtheo1}
	\int\limits_{-s}^{s}\frac{dK_{\textnormal{csin}}}{dt}(\lambda,\lambda)d\lambda=\textnormal{trace}\Big\{4i\sigma_3\int\limits_{-s}^s
	\big(d(\lambda)e^t(\lambda)\big)\lambda^2d\lambda\Big\}.
\end{equation}
On the other hand Lemma $1$ allows us to write the second summand as
\begin{eqnarray}\label{KKMSTtheo2}
	&&\int\limits_{-s}^s\bigg(R_{\textnormal{csin}}\frac{dK_{\textnormal{csin}}}{dt}\bigg)(\lambda,\lambda)d\lambda = \int\limits_{-s}^s\int\limits_{-s}^sR_{\textnormal{csin}}(\lambda,\mu)\frac{dK_{\textnormal{csin}}}{dt}(\mu,\lambda)d\mu d\lambda\nonumber\\
	&=&\int\limits_{-s}^s\int\limits_{-s}^s\frac{D^t(\lambda)E(\mu)}{\lambda-\mu}d^t(\mu)\sigma_3e(\lambda)\frac{4i}{3}(\lambda^2+\lambda\mu+\mu^2)d\mu d\lambda\nonumber\\
	&=&\textnormal{trace}\Bigg[\frac{4}{3}\sigma_3\int\limits_{-s}^s\int\limits_{-s}^s\frac{\big(d(\mu)E^t(\mu)\big)\big(D(\lambda)e^t(\lambda)\big)}{\lambda-\mu}i(\lambda^2+\lambda\mu+\mu^2)d\mu d\lambda\Bigg].
\end{eqnarray}
At this point the following observation is crucial
\begin{equation*}
	i(\lambda^2+\lambda\mu+\mu^2)=\frac{1}{2\pi}\int\limits_{\Sigma}\frac{w^3}{(w-\lambda)(w-\mu)}dw
\end{equation*}
with $\Sigma$ denoting a closed Jordan curve around the line segment $[-s,s]$. This identity combined with \eqref{IIKStheo5} rewrites \eqref{KKMSTtheo2} as
\begin{eqnarray*}		
	&&\int\limits_{-s}^s\bigg(R_{\textnormal{csin}}\frac{dK_{\textnormal{csin}}}{dt}\bigg)(\lambda,\lambda)d\lambda=\textnormal{trace}\Bigg[\frac{\sigma_3}{2\pi}\int\limits_{\Sigma}\int\limits_{-s}^s
	\int\limits_{-s}^s\frac{\big(d(\mu)E^t(\mu)\big)\big(D(\lambda)e^t(\lambda)\big)\frac{4}{3}w^3}{(\lambda-\mu)(w-\lambda)(w-\mu)}d\lambda d\mu dw\Bigg]\nonumber\\
	&=&\textnormal{trace}\Bigg[\frac{\sigma_3}{2\pi}\int\limits_{\Sigma}\int\limits_{-s}^s\frac{d(\mu)E^t(\mu)}{(w-\mu)^2}\frac{4}{3}w^3\bigg[\int\limits_{-s}^s\frac{D(\lambda)e^t(\lambda)}{\lambda-\mu}
	d\lambda-\int\limits_{-s}^s\frac{D(\lambda)e^t(\lambda)}{\lambda-w}d\lambda\bigg]d\mu dw\Bigg]\nonumber\\
\end{eqnarray*}
\begin{eqnarray}\label{KKMSTtheo3}
	&=&\textnormal{trace}\Bigg[\frac{\sigma_3}{2\pi}\int\limits_{\Sigma}\int\limits_{-s}^s\frac{d(\mu)E^t(\mu)}{(w-\mu)^2}\frac{4}{3}w^3\big(\Theta(w)-\Theta(\mu)\big)d\mu dw\Bigg].
\end{eqnarray}
Finally, as a consequence of \eqref{IIKStheo3} and \eqref{IIKStheo4}, we note
\begin{equation*}
	\Theta^{-1}(z)=I+\int\limits_{\Gamma}d(w)E^t(w)\frac{dw}{w-z}
\end{equation*}
thus \eqref{KKMSTtheo3} yields
\begin{eqnarray*}
	\int\limits_{-s}^s\bigg(R_{\textnormal{csin}}\frac{dK_{\textnormal{csin}}}{dt}\bigg)(\lambda,\lambda)d\lambda&=& \textnormal{trace}\Bigg[\frac{\sigma_3}{2\pi}\int\limits_{\Sigma}\big(\Theta^{-1}(w)\big)'\Theta(w)\frac{4}{3}w^3dw\\
	&& -4i\sigma_3\int\limits_{-s}^sd(\mu)e^t(\mu)\mu^2d\mu\Bigg]
\end{eqnarray*}
and adding the latter equality and \eqref{KKMSTtheo1} we derived
\begin{equation*}
	\frac{d}{dt}\ln\det(I-K_{\textnormal{csin}})=\frac{1}{2\pi}\int\limits_{\Sigma}\textnormal{trace}\Big[\Theta'(w)\sigma_3\Theta^{-1}(w)\Big]\frac{4}{3}w^3dw
\end{equation*}
showing explicitly the connection to the $\Theta$-RHP an which infact can be evaluated using residue theorem. We summarize
\begin{proposition}\label{prop4}
The logarithmic $t$-derivative of the Fredholm determinant $\det(I-K_{\textnormal{csin}})$ can be expressed as
\begin{equation}\label{tidentity}
	\frac{d}{dt}\ln\det(I-K_{\textnormal{csin}})=\frac{4i}{3}\textnormal{trace}\Big(-\Theta_1\sigma_3\big(\Theta_1^2-\Theta_2\big)
	+2\Theta_2\sigma_3\Theta_1-3\Theta_3\sigma_3\Big)
\end{equation}
with
\begin{equation*}
	\Theta(\lambda)\sim I+\frac{\Theta_1}{\lambda}+\frac{\Theta_2}{\lambda^2}+\frac{\Theta_3}{\lambda^3}+O\big(\lambda^{-4}\big),\hspace{0.5cm}\lambda\rightarrow\infty.
\end{equation*}
\end{proposition}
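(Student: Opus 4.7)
The excerpt has already carried out the derivation essentially to its end: it explicitly produces the contour integral formula
\begin{equation*}
\frac{d}{dt}\ln\det(I-K_{\textnormal{csin}})=\frac{1}{2\pi}\int_{\Sigma}\textnormal{trace}\big[\Theta'(w)\sigma_3\Theta^{-1}(w)\big]\frac{4}{3}w^3\,dw,
\end{equation*}
where $\Sigma$ is a positively oriented Jordan curve enclosing $[-s,s]$. My plan is to close the proof by evaluating this integral as a residue at infinity and comparing coefficients.

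First I would exploit the fact that $\Theta(w)$ is analytic in $\mathbb{C}\setminus[-s,s]$, so that the integrand admits a genuinely convergent Laurent expansion about $w=\infty$ in some annulus $\{|w|>R_0\}$. Deforming $\Sigma$ to a circle $\{|w|=R\}$ with $R>R_0$, and recalling the standard identity $\frac{1}{2\pi i}\oint_{|w|=R}f(w)\,dw=[w^{-1}]\,f(w)$ for such functions, the contour integral becomes
\begin{equation*}
\frac{d}{dt}\ln\det(I-K_{\textnormal{csin}})=\frac{4i}{3}\,\textnormal{trace}\Big([w^{-4}]\,\Theta'(w)\sigma_3\Theta^{-1}(w)\Big).
\end{equation*}

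Next I would compute the required Laurent coefficients. From $\Theta\Theta^{-1}=I$ and the assumed expansion of $\Theta$, a direct recursion yields
\begin{equation*}
\Theta^{-1}(w)=I-\frac{\Theta_1}{w}+\frac{\Theta_1^2-\Theta_2}{w^2}+\frac{\Theta_1\Theta_2+\Theta_2\Theta_1-\Theta_1^3-\Theta_3}{w^3}+O(w^{-4}),
\end{equation*}
while termwise differentiation gives $\Theta'(w)\sigma_3=-\Theta_1\sigma_3/w^2-2\Theta_2\sigma_3/w^3-3\Theta_3\sigma_3/w^4+O(w^{-5})$. Multiplying these two series and collecting the three contributions of total weight $w^{-4}$ produces exactly
\begin{equation*}
-\Theta_1\sigma_3(\Theta_1^2-\Theta_2)+2\Theta_2\sigma_3\Theta_1-3\Theta_3\sigma_3,
\end{equation*}
so that taking the trace delivers \eqref{tidentity}.

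The calculation is essentially mechanical; the only point that genuinely requires the Riemann-Hilbert structure is the analyticity of $\Theta(w)$ off $[-s,s]$, which is what makes the Laurent series about infinity convergent rather than merely formal and justifies termwise multiplication. There is no real analytic obstacle here: the substantive ingredient of Proposition~\ref{prop4} is the rewriting $i(\lambda^2+\lambda\mu+\mu^2)=\frac{1}{2\pi}\int_\Sigma\frac{w^3\,dw}{(w-\lambda)(w-\mu)}$ invoked earlier in the excerpt, which decouples the $\lambda$ and $\mu$ variables in the kernel of $dK_{\textnormal{csin}}/dt$ and, combined with the Cauchy representation \eqref{IIKStheo5}, forces the appearance of $\Theta^{-1}$. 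Once that step is granted, Proposition~\ref{prop4} reduces to the routine residue calculation outlined above.
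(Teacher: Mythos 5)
Your proof is correct and follows exactly the route the paper takes: the text preceding Proposition~\ref{prop4} already establishes the contour-integral identity $\frac{d}{dt}\ln\det(I-K_{\textnormal{csin}})=\frac{1}{2\pi}\int_{\Sigma}\textnormal{trace}\big[\Theta'(w)\sigma_3\Theta^{-1}(w)\big]\frac{4}{3}w^3\,dw$ and then simply remarks that the integral ``can be evaluated using residue theorem,'' leaving the Laurent-coefficient computation to the reader. You have supplied precisely that missing computation, with the correct orientation convention (needed already for the decoupling identity $i(\lambda^2+\lambda\mu+\mu^2)=\frac{1}{2\pi}\int_\Sigma\frac{w^3\,dw}{(w-\lambda)(w-\mu)}$), the correct expansion of $\Theta^{-1}$, and the correct collection of the three $w^{-4}$ terms, reproducing the stated formula.
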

Our next task is to determine the large $s$-asymptotics of the stated matrix coefficients $\Theta_1,\Theta_2$ and $\Theta_3$, i.e. we need to asymptotically solve the $\Phi$-RHP. Also here, a $g$-function transformation is essential.

\section{Second transformation of the $\Theta$-RHP - rescaling and $g$-function}\label{sec19}
The jump contour of the $\Phi$-RHP consists only of the line segment $[-s,s]$ oriented from left to right
\begin{equation*}
	\Phi_+(\lambda)=\Phi_-(\lambda)\begin{pmatrix}
                                   0 & 1 \\
                                   -1 & 2 \\
                                 \end{pmatrix}
\end{equation*}
and this jump equals the jump one faces during the asymptotical analysis of the sine kernel determinant (cf. \cite{DIZ}). Here and there we use a $g$-function together with the scaling $z=\frac{\lambda}{s}$. Introduce
\begin{equation}\label{gtfunction}
	\hat{g}(z)=\frac{4i}{3}\sqrt{z^2-1}\bigg(z^2t+\frac{t}{2}+\frac{3x}{4s^2}\bigg),\hspace{0.5cm} \sqrt{z^2-1}\sim z,\ \ z\rightarrow\infty
\end{equation}
being analytic outside the segment $[-1,1]$ and as $z\rightarrow\infty$
\begin{equation*}
	\hat{g}(z)=i\bigg(\frac{4}{3}tz^3+\frac{xz}{s^2}\bigg)+O\big(z^{-1}\big).
\end{equation*}
In the situation $t=1$, \eqref{gtfunction} reduces to the previously used $g$-function \eqref{gfunction}, whereas for $t=0$, we obtain the $g$-function used in the analysis of the sine kernel (see \cite{DIZ}). We put
\begin{equation}\label{ghatfunction}
	S(z) = \Phi(zs)e^{-s^3\hat{g}(z)\sigma_3},\ \ z\in\mathbb{C}\backslash[-1,1]
\end{equation}
and are lead to the following RHP
\begin{itemize}
	\item $S(z)$ is analytic for $z\in\mathbb{C}\backslash[-1,1]$
	\item The following jump holds
	\begin{equation*}
		S_+(z) = S_-(z)\begin{pmatrix}
		0 & 1\\
		-1 & 2e^{2s^3\hat{g}_+(z)}\\
		\end{pmatrix},\ \ z\in(-1,1)
	\end{equation*}
	since 
	\begin{equation*}
		\hat{g}_+(z)+\hat{g}_-(z)=0,z\in[-1,1].
	\end{equation*}
	\item $S(z)$ has at most logarithmic singularities at the endpoints $z=\pm 1$
	\item As $z\rightarrow\infty$, $S(z)=I+O\big(z^{-1}\big)$.
\end{itemize}
Since $\textnormal{Im}\ \sqrt{z^2-1}_+>0$ for $z\in(-1,1)$, we have $\textnormal{Re}\ \hat{g}_+(z)<0$ for $z\in(-1,1)$ showing that
\begin{equation*}
	\begin{pmatrix}
		0 & 1 \\
		-1 & 2e^{2s^3\hat{g}_+(z)} \\
		\end{pmatrix}\longrightarrow \begin{pmatrix}
			0 & 1 \\
			-1 & 0 \\
			\end{pmatrix}, \ \ s\rightarrow\infty,\ \ z\in(-1,1)
\end{equation*}
exponentially fast. Thus also here we expect, that as $s\rightarrow\infty$, $S(z)$ converges to a solution of a model RHP posed on the line segment $[-1,1]$.


\section{The model problem in the $S$-RHP}\label{sec20}
Find the piecewise analytic $2\times 2$ matrix valued function $N(z)$ such that
\begin{itemize}
	\item $N(z)$ is analytic for $z\in[-1,1]$
	\item On the line segment $[-1,1]$ the following jump holds
	\begin{equation*}
		N_+(z)=N_-(z)\begin{pmatrix}
		0 & 1 \\
		-1 & 0 \\
		\end{pmatrix},\hspace{0.5cm} z\in[-1,1]
	\end{equation*}
	\item $N(z)$ has at most logarithmic singularities at the endpoints $z=\pm 1$
	\item $N(z)= I+O\big(z^{-1}\big),\ z\rightarrow\infty$
\end{itemize}
This problem has an explicit solution (compare section $8$)
\begin{equation}\label{PhiPmodel}
	N(z)=\begin{pmatrix}
						1 & 1 \\
						i & -i \\
						\end{pmatrix}\beta(z)^{-\sigma_3}\frac{1}{2}\begin{pmatrix}
						1 & -i \\
						1 & i \\
						\end{pmatrix},\hspace{1cm} \beta(z)=\bigg(\frac{z+1}{z-1}\bigg)^{1/4}
\end{equation}
and $\big(\frac{z+1}{z-1}\big)^{1/4}$ is defined on $\mathbb{C}\backslash[-1,1]$ with its branch fixed by the condition $\big(\frac{z+1}{z-1}\big)^{1/4}\rightarrow 1$ as $z\rightarrow\infty$.

\section{Construction of edge point parametrices in the $S$-RHP}\label{sec21}
The construction of endpoint parametrices is very similar to the constructions given in sections \ref{sec9} and \ref{sec10}. We use again Bessel functions. First for the right endpoint $z=+1$ introduce on the punctured plane $\zeta\in\mathbb{C}\backslash \{0\}$
\begin{equation}\label{QBERHright}
	Q_{BE}^{RH}(\zeta)=\begin{pmatrix}
	\sqrt{\zeta}\big(H_0^{(1)}\big)'(\sqrt{\zeta}) & \sqrt{\zeta}\big(H_0^{(2)}\big)'(\sqrt{\zeta})\\
	H_0^{(1)}(\zeta) & H_0^{(2)}(\zeta) \\
	\end{pmatrix},\hspace{0.5cm} -\pi<\textnormal{arg}\ \zeta\leq \pi.
\end{equation}
Since $Q_{BE}^{RH}(\zeta)=
\sigma_1e^{-i\frac{\pi}{4}\sigma_3}P_{BE}(\zeta)e^{i\frac{\pi}{4}\sigma_3}\sigma_1$ we can use \eqref{PBERHasyinfinity} and deduce
\begin{eqnarray}\label{QBERHasyinfinity}
	Q_{BE}^{RH}(\zeta) &=& \sqrt{\frac{2}{\pi}}\zeta^{\sigma_3/4}e^{i\frac{\pi}{4}}\begin{pmatrix}
	1 & -i \\
	-i & 1 \\
	\end{pmatrix}\bigg[I+\frac{i}{8\sqrt{\zeta}}\begin{pmatrix}
	1 & 2i \\
	2i & -1\\
	\end{pmatrix} +\frac{3}{128\zeta}\begin{pmatrix}
	1 & -4i \\
	4i & 1 \\
	\end{pmatrix}\nonumber\\
	&& +\frac{15i}{1024\zeta^{3/2}}\begin{pmatrix}
	-1 & -6i\\
	-6i & 1\\
	\end{pmatrix} +O\big(\zeta^{-2}\big)\bigg]e^{i\sqrt{\zeta}\sigma_3}
\end{eqnarray}
as $\zeta\rightarrow\infty$, valid in a full neighborhood of infinity. Also on the line $\textnormal{arg}\ \zeta=\pi$ we obtain
\begin{equation*}
	\big(Q_{BE}^{RH}(\zeta)\big)_+=\big(Q_{BE}^{RH}(\zeta)\big)_-\begin{pmatrix}
	0 & 1 \\
	-1 & 2\\
	\end{pmatrix}
\end{equation*}
thus \eqref{QBERHright} solves the RHP depicted in Figure \ref{fig11}.
\begin{figure}[tbh]
  \begin{center}
  \psfragscanon
  \psfrag{1}{$\bigl(\begin{smallmatrix}
  0 & 1\\
  -1 & 2\\
  \end{smallmatrix}\bigr)$}
  \psfrag{2}{\footnotesize{$z=+1$}}
  \includegraphics[width=4cm,height=3cm]{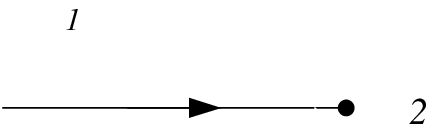}
  \end{center}
  \caption{The model RHP near $z=+1$ which can be solved in terms of Hankel functions}
  \label{fig11}
\end{figure}

We use the model function $Q_{BE}^{RH}(\zeta)$ in the construction of the parametrix to the solution of the original $\Phi$-RHP in a neighborhood of $z=+1$. First (compare \eqref{BEchangeright})
\begin{equation}\label{Qchangeright}
	\zeta(z)=-s^6\hat{g}^2(z),\hspace{0.5cm}|z-1|<r,\ \ -\pi<\textnormal{arg}\ \zeta\leq \pi
\end{equation}
with
\begin{equation*}
	\sqrt{\zeta(z)} = -is^3\hat{g}(z)=\frac{4s^3}{3}\sqrt{z^2-1}\bigg(z^2t+\frac{t}{2}+\frac{3x}{4s^2}\bigg)
\end{equation*}
which gives a locally conformal change of variables
\begin{equation*}
	\zeta(z)=\frac{32s^6}{9}\bigg(\frac{3t}{2}+\frac{3x}{4s^2}\bigg)^2(z-1)\big(1+O(z-1)\big),\ \ |z-1|<r.
\end{equation*}
Secondly define the right parametrix $S^r(z)$ near $z=+1$ by the formula
\begin{equation}\label{Phirparametrix}
	S^r(z)=C_r(z)\frac{1}{2}\begin{pmatrix}
	1 & 0 \\
	0 & i \\
	\end{pmatrix}\sqrt{\frac{\pi}{2}}e^{-i\frac{\pi}{4}}Q_{BE}^{RH}\big(\zeta(z)\big)e^{-s^3\hat{g}(z)\sigma_3},\hspace{0.5cm} |z-1|<r
\end{equation}
with $\zeta(z)$ as in \eqref{Qchangeright} and
\begin{equation*}
	C_r(z) = \begin{pmatrix}
	1 & 1 \\
	i & -i \\
	\end{pmatrix}\bigg(\zeta(z)\frac{z+1}{z-1}\bigg)^{-\sigma_3/4},\ \ C_r(1)=\begin{pmatrix}
	1 & 1 \\
	i & -i \\
	\end{pmatrix}\Bigg(\frac{8s^3}{3}\bigg(\frac{3t}{2}+\frac{3x}{4s^2}\bigg)\Bigg)^{-\sigma_3/2}.
\end{equation*}
As a result of our construction the parametrix has jumps only on the line segment depicted in Figure \ref{fig11}, described by the same jump matrix as in the original $S$-RHP. Also, since $Q_{BE}^{RH}(\zeta)=
\sigma_1e^{-i\frac{\pi}{4}\sigma_3}P_{BE}(\zeta)e^{i\frac{\pi}{4}\sigma_3}\sigma_1$, the singular endpoint behavior matches. Therefore the ratio of $S(z)$ with $S^r(z)$ is locally analytic, i.e.
\begin{equation*}
	S(z)=M_r(z)S^r(z),\hspace{0.5cm} |z-1|<r<\frac{1}{2}
\end{equation*}
and moreover from \eqref{QBERHasyinfinity}
\begin{eqnarray}\label{QBERHrightmatchup}
	S^r(z) &=& \begin{pmatrix}
	1 & 1 \\
	i & -i \\
	\end{pmatrix}\beta(z)^{-\sigma_3}\frac{1}{2}\begin{pmatrix}
	1 & -i\\
	1 & i \\
	\end{pmatrix}\bigg[I+\frac{i}{8\sqrt{\zeta}}\begin{pmatrix}
	1 & 2i \\
	2i & -1\\
	\end{pmatrix} +\frac{3}{128\zeta}\begin{pmatrix}
	1 & -4i \\
	4i & 1 \\
	\end{pmatrix}\nonumber\\
	&& +\frac{15i}{1024\zeta^{3/2}}\begin{pmatrix}
	-1 & -6i\\
	-6i & 1\\
	\end{pmatrix} +O\big(\zeta^{-2}\big)\bigg]\frac{1}{2}\begin{pmatrix}
	1 & 1 \\
	i & -i \\
	\end{pmatrix}\beta(z)^{\sigma_3}\begin{pmatrix}
	1 & -i \\
	1 & i \\
	\end{pmatrix}N(z)\nonumber\\
	&=&\bigg[I+\frac{i}{16\sqrt{\zeta}}\begin{pmatrix}
	3\beta^{-2}-\beta^2 & i(3\beta^{-2}+\beta^2)\\
	i(3\beta^{-2}+\beta^2) & -(3\beta^{-2}-\beta^2) \\
	\end{pmatrix}+\frac{3}{128\zeta}\begin{pmatrix}
	1 & -4i\\
	4i& 1\\
	\end{pmatrix}\nonumber\\
	&&+\frac{15i}{2048\zeta^{3/2}}\begin{pmatrix}
	5\beta^2-7\beta^{-2} & -i(5\beta^2+7\beta^{-2})\\
	-i(5\beta^2+7\beta^{-2}) & -(5\beta^2-7\beta^{-2}) \\
	\end{pmatrix}+O\big(\zeta^{-2}\big)\bigg]N(z)
\end{eqnarray}
as $s\rightarrow\infty$ and $0<r_1\leq|z-1|\leq r_2<1$ (so $|\zeta|\rightarrow\infty$). It is very important that the function $\zeta(z)$ is or order $O\big(s^2\big)$ on the latter annulus for all $t\in[0,1]$. Hence since $\beta(z)$ is bounded, equation \eqref{QBERHrightmatchup} yields the desired matching relation between the model functions $S^r(z)$ and $N(z)$,
\begin{equation*}
	S^r(z)=\big(I+o(1)\big)N(z),\hspace{0.5cm} s\rightarrow\infty,\ 0<r_1\leq|z-1|\leq r_2<1
\end{equation*}
uniformly on any compact subset of the set
\begin{equation}\label{exceptset2}
	\big\{ (t,x)\in\mathbb{R}^2:\ 0\leq t\leq 1, -\infty<x<\infty\big\}.
\end{equation}
\bigskip

We move on to the left endpoint $z=-s$. For $\zeta\in\mathbb{C}\backslash\{0\}$ define
\begin{equation}\label{QBERHleft}
	\tilde{Q}_{BE}^{RH}(\zeta)=\begin{pmatrix}
	H_0^{(2)}(e^{-i\frac{\pi}{2}}\sqrt{\zeta}) & H_0^{(1)}(e^{-i\frac{\pi}{2}}\sqrt{\zeta}) \\
	-e^{-i\frac{\pi}{2}}\sqrt{\zeta}\big(H_0^{(2)}\big)'(e^{-i\frac{\pi}{2}}\sqrt{\zeta}) & e^{-i\frac{3\pi}{2}}\sqrt{\zeta}\big(H_0^{(1)}\big)'(e^{-i\frac{\pi}{2}}\sqrt{\zeta}) \\
	\end{pmatrix},\ 0<\textnormal{arg}\ \zeta\leq 2\pi
\end{equation}
hence, since $\tilde{Q}_{BE}^{RH}(\zeta)=\sigma_1e^{-i\frac{\pi}{4}\sigma_3}\tilde{P}_{BE}(\zeta)e^{i\frac{\pi}{4}\sigma_3}\sigma_1$ we obtain from \eqref{PBERHasyleftinfinity}
\begin{eqnarray}\label{QBERHasyleftinfinity}
	\tilde{Q}_{BE}^{RH}(\zeta) &=& \sqrt{\frac{2}{\pi}}\zeta^{-\sigma_3/4}\begin{pmatrix}
	i & 1 \\
	i & -1\\
	\end{pmatrix}\bigg[I+\frac{1}{8\sqrt{\zeta}}\begin{pmatrix}
	1 & -2i \\
	-2i & -1\\
	\end{pmatrix}+\frac{3}{128\zeta}\begin{pmatrix}
	-1 & -4i\\
	4i & -1\\
	\end{pmatrix}\nonumber\\
	&&+\frac{15}{1024\zeta^{3/2}}\begin{pmatrix}
	1 & -6i\\
	-6i & -1\\
	\end{pmatrix} +O\big(\zeta^{-2}\big)\bigg]e^{-\sqrt{\zeta}\sigma_3}
\end{eqnarray}
as $\zeta\rightarrow\infty$ and on the line $\textnormal{arg}\ \zeta=2\pi$
\begin{equation*}
	\big(\tilde{Q}_{BE}^{RH}(\zeta)\big)_+=\big(\tilde{Q}_{BE}^{RH}(\zeta)\big)_-\begin{pmatrix}
	0 & 1\\
	-1 & 2\\
	\end{pmatrix}
\end{equation*}
which shows that \eqref{QBERHleft} solves the model problem of Figure \ref{fig12}.
\begin{figure}[tbh]
  \begin{center}
  \psfragscanon
  \psfrag{1}{$\bigl(\begin{smallmatrix}
  0 & 1\\
  -1 & 2\\
  \end{smallmatrix}\bigr)$}
  \psfrag{2}{\footnotesize{$z=-1$}}
  \includegraphics[width=4cm,height=3cm]{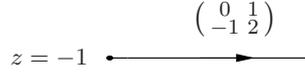}
  \end{center}
  \caption{The model RHP near $z=-1$ which can be solved in terms of Hankel functions}
  \label{fig12}
\end{figure}
This model problem enables us to introduce the parametrix $S^l(z)$ in a neighborhood of $z=-1$. Define
\begin{equation}\label{Qchangeleft}
	\zeta(z) = s^6\hat{g}^2(z),\hspace{0.5cm}|z+1|<r,\ \ 0<\textnormal{arg}\ \zeta\leq 2\pi
\end{equation}
with
\begin{equation*}
	\sqrt{\zeta(z)} = -s^3\hat{g}(z)=-\frac{4is^3}{3}\sqrt{z^2-1}\bigg(z^2t+\frac{t}{2}+\frac{3x}{4s^2}\bigg),
\end{equation*}
a locally conformal change of variables
\begin{equation*}
	\zeta(z) = \frac{32s^6}{9}\bigg(\frac{3t}{2}+\frac{3x}{4s^2}\bigg)^2(z+1)\big(1+O(z+1)\big),\hspace{0.5cm}|z+1|<r.
\end{equation*}
Given the left parametrix $S^l(z)$ near $z=-1$ by the formula
\begin{equation}\label{Philparametrix}
	S^l(z) = C_l(z)\bigg(-\frac{i}{2}\bigg)\sqrt{\frac{\pi}{2}}\tilde{Q}_{BE}^{RH}\big(\zeta(z)\big)e^{-s^3\hat{g}(z)\sigma_3},\hspace{0.5cm}|z+1|<r
\end{equation}
where
\begin{equation*}
	C_l(z) = \begin{pmatrix}
	1 & 1 \\
	i & -i\\
	\end{pmatrix}\bigg(\zeta(z)\frac{z-1}{z+1}\bigg)^{\sigma_3/4},\hspace{0.5cm} C_l(-1)=\begin{pmatrix}
	1 & 1\\
	i & -i\\
	\end{pmatrix}\Bigg(\frac{8is^3}{3}\bigg(\frac{3t}{2}+\frac{3x}{4s^2}\bigg)\Bigg)^{\sigma_3/2}
\end{equation*}
and $\zeta=\zeta(z)$ as in \eqref{Qchangeleft}, we see that the model jump matches the jump in the original $S$-RHP and by the symmetry relation $\tilde{Q}_{BE}^{RH}(\zeta)=\sigma_1e^{-i\frac{\pi}{4}\sigma_3}\tilde{P}_{BE}(\zeta)e^{i\frac{\pi}{4}\sigma_3}\sigma_1$ also 
\begin{equation*}
	S^l(z)=O\big(\ln(z+1)\big),\hspace{0.5cm}z\rightarrow -1.
\end{equation*}
Hence the ratio of $S(z)$ with $S^l(z)$ is locally analytic
\begin{equation*}
	S(z) = M_l(z)S^l(z),\hspace{0.5cm}|z+1|<r<\frac{1}{2}
\end{equation*}
and via \eqref{QBERHasyleftinfinity}
\begin{eqnarray}\label{QBERHleftmatchup}
	S^l(z) &=& \begin{pmatrix}
	1 & 1\\
	i & -i\\
	\end{pmatrix}\beta(z)^{-\sigma_3}\frac{1}{2}\begin{pmatrix}
	1 & -i\\
	1 & i\\
	\end{pmatrix}\bigg[I+\frac{1}{8\sqrt{\zeta}}\begin{pmatrix}
	1 & -2i \\
	-2i & -1\\
	\end{pmatrix}+\frac{3}{128\zeta}\begin{pmatrix}
	-1 & -4i\\
	4i & -1\\
	\end{pmatrix}\nonumber\\
	&&+\frac{15}{1024\zeta^{3/2}}\begin{pmatrix}
	1 & -6i\\
	-6i & -1\\
	\end{pmatrix} +O\big(\zeta^{-2}\big)\bigg]\frac{1}{2}\begin{pmatrix}
	1 & 1\\
	i & -i\\
	\end{pmatrix}\beta(z)^{\sigma_3}\begin{pmatrix}
	1 & -i\\
	1 & i\\
	\end{pmatrix}N(z)\nonumber\\
	&=&\bigg[I+\frac{1}{16\sqrt{\zeta}}\begin{pmatrix}
	3\beta^2-\beta^{-2} & -i(3\beta^2+\beta^{-2})\\
	-i(3\beta^2+\beta^{-2}) &-(3\beta^2-\beta^{-2})\\
	\end{pmatrix}+\frac{3}{128\zeta}\begin{pmatrix}
	-1 & -4i\\
	4i & -1\\
	\end{pmatrix}\nonumber\\
	&&+\frac{15}{2048\zeta^{3/2}}\begin{pmatrix}
	7\beta^2-5\beta^{-2}&-i(7\beta^2+5\beta^{-2})\\
	-i(7\beta^2+5\beta^{-2}) & -(7\beta^2-5\beta^{-2})\\
	\end{pmatrix}+O\big(\zeta^{-2}\big)\bigg]N(z)
\end{eqnarray}
as $s\rightarrow\infty$ and $0<r_1\leq|z+1|\leq r_2<1$ (hence $|\zeta|\rightarrow\infty$). Also here the function $\zeta(z)$ in \eqref{Qchangeleft} is of order $O\big(s^2\big)$ on the latter annulus for all $t\in[0,1]$. Therefore \eqref{QBERHleftmatchup} yields
\begin{equation*}
	S^l(z)=\big(I+o(1)\big)N(z),\hspace{0.5cm}s\rightarrow\infty,\ 0<r_1\leq|z+1|\leq r_2<1
\end{equation*}
again uniformly on any compact subset of the set \eqref{exceptset2}.

\section{Third and final transformation of the $S$-RHP}\label{sec22}
Similar to \eqref{errorfunction} we define
\begin{equation}\label{errorfunctiont}
	\Lambda(z)=S(z)\left\{
                 \begin{array}{ll}
                   \big(S^r(z)\big)^{-1}, & \hbox{$|z-1|<\varepsilon$,} \\
                   \big(S^l(z)\big)^{-1}, & \hbox{$|z+1|<\varepsilon$,} \\
                   \big(N(z)\big)^{-1}, & \hbox{$|z\mp 1|>\varepsilon$} 
                 \end{array}
               \right.
\end{equation}
with $0<\varepsilon<\frac{1}{4}$ fixed and are lead to the ratio-RHP depicted in Figure \ref{fig13}\vspace{0.5cm}
\begin{figure}[tbh]
  \begin{center}
  \psfragscanon
  \psfrag{1}{\footnotesize{$\hat{C}_r$}}
  \psfrag{2}{\footnotesize{$\hat{C}_l$}}
  \includegraphics[width=5.5cm,height=1cm]{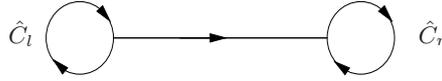}
  \end{center}\vspace{0.5cm}
  \caption{The jump graph for the ratio function $\Lambda(z)$}
  \label{fig13}
\end{figure}

More precisely, the function $\Lambda(z)$ has the following analytic properties.
\begin{itemize}
	\item $\Lambda(z)$ is analytic for $z\in\mathbb{C}\backslash\{(-1+\varepsilon,1-\varepsilon)\cup \hat{C}_r\cup \hat{C}_l\}$
	\item The following jumps are valid on the clockwise oriented circles
	\begin{equation*}
		\Lambda_+(z)=\Lambda_-(z)S^{r,l}(z)\big(N(z)\big)^{-1},\hspace{0.5cm}|z\mp 1|=\varepsilon
	\end{equation*} 
	whereas on the line segment $(-1+\varepsilon,1-\varepsilon)$
	\begin{equation*}
		\Lambda_+(z)=\Lambda_-(z)N_+(z)\begin{pmatrix}
		1 & -2\\
		0 & 1\\
		\end{pmatrix}\big(N_+(z)\big)^{-1}.
	\end{equation*}
	\item $\Lambda(z)$ is analytic at $z=\pm 1$
	\item As $z\rightarrow\infty$ we have $\Lambda(z)\rightarrow I$.
\end{itemize}
As a result of our construction \eqref{errorfunctiont}, $\Lambda(z)$ has no jumps in the parts of the original jump contour which lie inside the circles $\hat{C}_{r,l}$ and as we shall see now, the latter $\Lambda$-RHP admits direct asymptotical analysis. To this end recall the matching relations \eqref{QBERHrightmatchup}, \eqref{QBERHleftmatchup} and deduce
\begin{equation}\label{DZesti1}
	\|S^{r,l}\big(N\big)^{-1}-I\|_{L^2\cap L^{\infty}(\hat{C}_{r,l})}\leq c_4s^{-1},\hspace{0.5cm}s\rightarrow\infty
\end{equation} 
which holds uniformly on any compact subset of the set \eqref{exceptset2} with some constant $c_4>0$. Secondly recall \eqref{PhiPmodel}
\begin{eqnarray*}
	N_+(z)\begin{pmatrix}
		1 & -2\\
		0 & 1\\
		\end{pmatrix}\big(N_+(z)\big)^{-1}&=&\begin{pmatrix}
		\beta_++\beta_+^{-1} & i(\beta_+-\beta_+^{-1})\\
		-i(\beta_+-\beta_+^{-1}) & \beta_++\beta_+^{-1}\\
		\end{pmatrix}
		\begin{pmatrix}
		1 & -2e^{2s^3\hat{g}_+(z)}\\
		0 & 1\\
		\end{pmatrix}\\
		&&\times\begin{pmatrix}
		\beta_++\beta_+^{-1} & i(\beta_+-\beta_+^{-1})\\
		-i(\beta_+-\beta_+^{-1}) & \beta_++\beta_+^{-1}\\
		\end{pmatrix}^{-1}
\end{eqnarray*}
however, as we mentioned previously, for $z\in(-1+\varepsilon,1-\varepsilon)$
\begin{equation*}
	\textnormal{Re}\ \hat{g}_+(z)<0,
\end{equation*}
i.e.
\begin{equation}\label{DZesti2}
	\|N_+\bigl(\begin{smallmatrix}
	1 & -2 \\
	0 & 1\\
	\end{smallmatrix}\bigr)\big(N_+\big)^{-1}-I\|_{L^2\cap L^{\infty}(-1+\varepsilon,1-\varepsilon)}\leq c_5e^{-c_6\varepsilon s},\hspace{0.5cm} s\rightarrow\infty
\end{equation}
also here uniformly on any compact subset of the set \eqref{exceptset2}. Thus together in the limit $s\rightarrow\infty$, with $G_{\Lambda}$ denoting the jump matrix in the $\Lambda$-RHP and $\Sigma_{\Lambda}$ the underlying contour,
\begin{equation}\label{DZesti3}
	\|G_{\Lambda}-I\|_{L^2\cap L^{\infty}(\Sigma_{\Lambda})}\leq c_7s^{-1},\hspace{0.5cm}s\rightarrow\infty
\end{equation}
uniformly on any compact subset of \eqref{exceptset2}. The last estimate provides us with the unique solvability of the $\Lambda$-RHP.


\section{Solution of the RHP for $\Lambda(z)$ via iteration}\label{sec23}

The given $\Lambda$-RHP is equivalent to the singular integral equation
\begin{equation}\label{integraleq}
	\Lambda_-(z)=I+\frac{1}{2\pi i}\int\limits_{\Sigma_{\Lambda}}\Lambda_-(w)\big(G_{\Lambda}(w)-I\big)\frac{dw}{w-z_-}
\end{equation}
which, by \eqref{DZesti3}, can be solved iteratively in $L^2(\Sigma_{\Lambda})$. Its unique solution satisfies
\begin{equation}\label{DZesti4}
	\|\Lambda_--I\|_{L^2(\Sigma_{\Lambda})}\leq \hat{c}s^{-1}.
\end{equation}
uniformly on any compact subset of \eqref{exceptset2}. Tracing back the transformations
\begin{equation*}
	\Theta(\lambda)\mapsto \Phi(\lambda)\mapsto S(z)\mapsto \Lambda(z)
\end{equation*}
we also obtain existence and boundedness of $\Theta(\lambda),\lambda\in[-s,s]$ and hence existence and boundedness of the resolvent $I+\check{R}_{\textnormal{csin}}$ for sufficiently large $s$ which is needed in \eqref{resolventcsin}. The given information enables us now to determine the large $s$-asymptotics of $\det(I-K_{\textnormal{csin}})$ via Proposition \ref{prop4}, however the computations will be more involved than in section \ref{sec14}.

\section{Asymptotics of $\det(I-K_{\textnormal{csin}})$ - proof of theorem \ref{theo2}}\label{sec24}
Let us start to connect the required coefficients $\Theta_i$ in \eqref{tidentity} to the solution of the $\Lambda$-RHP. Since
\begin{equation*}
	\Theta(\lambda)\sim I+\frac{\Theta_1}{\lambda}+\frac{\Theta_2}{\lambda^2}+\frac{\Theta_3}{\lambda^3}+O\big(\lambda^{-4}\big),\hspace{0.5cm}\lambda\rightarrow\infty
\end{equation*}
we have
\begin{eqnarray*}
	\Theta_1&=&\lim_{\lambda\rightarrow\infty}\Big(\lambda\big(\Theta(\lambda)-I\big)\Big),\hspace{1cm} \Theta_2=\lim_{\lambda\rightarrow\infty}\big(\lambda^2\Big(\Theta(\lambda)-I-\frac{\Theta_1}{\lambda}\Big)\big),\\  \Theta_3&=&\lim_{\lambda\rightarrow\infty}\big(\lambda^3\Big(\Theta(\lambda)-I-\frac{\Theta_1}{\lambda}-\frac{\Theta_2}{\lambda^2}\Big)\big).
\end{eqnarray*}
At this point the following expansions for $z\rightarrow\infty$ are needed.
\begin{equation*}
	N(z) = I+\frac{i}{2z}\begin{pmatrix}
	0 & 1\\
	-1 & 0\\
	\end{pmatrix}+\frac{1}{8z^2}\begin{pmatrix}
	1 & 0\\
	0 & 1\\
	\end{pmatrix}+\frac{3i}{16z^3}\begin{pmatrix}
	0 & 1\\
	-1 & 0\\
	\end{pmatrix}+O\big(z^{-4}\big),
\end{equation*}
as well as
\begin{equation*}
	\hat{g}(z)=i\bigg(\frac{4}{3}tz^3+\frac{xz}{s^2}\bigg)-\frac{i}{2z}\Big(t+\frac{x}{s^2}\Big)-\frac{i}{2z^3}\bigg(\frac{t}{3}+\frac{x}{4s^2}\bigg)+O\big(z^{-5}\big).
\end{equation*}
and also
\begin{eqnarray*}
	\Lambda(z)&=&I+\frac{i}{2\pi z}\int\limits_{\Sigma_{\Lambda}}\Lambda_-(w)\big(G_{\Lambda}(w)-I\big)dw+\frac{i}{2\pi z^2}\int\limits_{\Sigma_{\Lambda}}\Lambda_-(w)\big(G_{\Lambda}(w)-I\big)wdw\\
	&&+\frac{i}{2\pi z^3}\int\limits_{\Sigma_{\Lambda}}\Lambda_-(w)\big(G_{\Lambda}(w)-I\big)w^2dw+O\big(z^{-4}\big).
\end{eqnarray*}
Since
\begin{eqnarray*}
	e^{(s^3\hat{g}(z)-is^3(\frac{4}{3}tz^3+\frac{xz}{s^2}))\sigma_3}&=&I-\frac{i}{2z}(ts^3+xs)\sigma_3-\frac{1}{8z^2}(ts^3+xs)^2I\\
	&&-\frac{i}{2z^3}\bigg(\frac{ts^3}{3}+\frac{xs}{4}\bigg)\sigma_3+\frac{i}{48z^3}(ts^3+xs)^3\sigma_3+O\big(z^{-4}\big)
\end{eqnarray*}
we obtain
\begin{eqnarray*}
	\Theta_1&=&
	s\lim_{z\rightarrow\infty}\Big(z\big(\Lambda(z)N(z)e^{(s^3\hat{g}(z)-is^3(\frac{4}{3}tz^3+\frac{xz}{s^2}))\sigma_3}-I\big)\Big)\\
	&=&s\bigg[-\frac{i}{2}(ts^3+xs)\sigma_3-\frac{\sigma_2}{2}+\frac{i}{2\pi}\int\limits_{\Sigma_{\Lambda}}\Lambda_-(w)\big(G_{\Lambda}(w)-I\big)dw\bigg].
\end{eqnarray*}
Similarly
\begin{eqnarray*}
	\Theta_2&=&s^2\bigg[-\frac{1}{8}\big(ts^3+xs\big)^2I-\frac{1}{4}(ts^3+xs)\sigma_1+\frac{I}{8}\\
	&&+\frac{1}{4\pi}\int\limits_{\Sigma_{\Lambda}}\Lambda_-(w)\big(G_{\Lambda}(w)-I\big)dw(ts^3+xs)\sigma_3\\
	&&-\frac{i}{4\pi}\int\limits_{\Sigma_{\Lambda}}\Lambda_-(w)\big(G_{\Lambda}(w)-I\big)dw\sigma_2
	+\frac{i}{2\pi}\int\limits_{\Sigma_{\Lambda}}\Lambda_-(w)\big(G_{\Lambda}(w)-I\big)wdw\bigg]
\end{eqnarray*}
and furthermore
\begin{eqnarray*}
	&&\Theta_3=s^3\bigg[\frac{i}{48}\big(ts^3+xs\big)^3\sigma_3+\frac{1}{16}\big(ts^3+xs\big)^2\sigma_2-\frac{i}{16}(ts^3+xs)\sigma_3\\
	&&-\frac{i}{2}\bigg(\frac{ts^3}{3}+\frac{xs}{4}\bigg)\sigma_3-\frac{3i}{16}\sigma_2
	-\frac{i}{16\pi}\int\limits_{\Sigma_{\Lambda}}\Lambda_-(w)\big(G_{\Lambda}(w)-I\big)dw\big(ts^3+xs\big)^2\\
	&&-\frac{i}{8\pi}\int\limits_{\Sigma_{\Lambda}}\Lambda_-(w)\big(G_{\Lambda}(w)-I\big)dw(ts^3+xs)\sigma_1
	+\frac{i}{16\pi}\int\limits_{\Sigma_{\Lambda}}\Lambda_-(w)\big(G_{\Lambda}(w)-I\big)dw\\
	&&+\frac{1}{4\pi}\int\limits_{\Sigma_{\Lambda}}\Lambda_-(w)\big(G_{\Lambda}(w)-I\big)wdw(ts^3+xs)\sigma_3\\
	&&-\frac{i}{4\pi}\int\limits_{\Sigma_{\Lambda}}\Lambda_-(w)\big(G_{\Lambda}(w)-I\big)wdw\sigma_2
	+\frac{i}{2\pi}\int\limits_{\Sigma_{\Lambda}}\Lambda_-(w)\big(G_{\Lambda}(w)-I\big)w^2dw\bigg].
\end{eqnarray*}
Our next move focuses on the computation of 
\begin{equation*}
	I_n=\int\limits_{\Sigma_{\Lambda}}\Lambda_-(w)\big(G_{\Lambda}(w)-I\big)w^ndw,\ \ n=0,1,2.
\end{equation*}
As we see from \eqref{DZesti3} and \eqref{DZesti4}
\begin{equation*}
	I_n=O\big(s^{-3}\big),\ \ s\rightarrow\infty,\ t>0\hspace{1cm} I_n=O\big(s^{-1}\big),\ \ s\rightarrow\infty,\ t=0
\end{equation*}
on the other hand \eqref{tidentity} has to be evaluated up to $O\big(s^{-1}\big)$ in order to determine $\omega$ in \eqref{theo1}. Hence we need to start iterating \eqref{integraleq}. First in case $t>0$ for $z\in\Sigma_{\Lambda}$
\begin{equation*}
	\Lambda_-(z)-I=\frac{1}{2\pi i}\int\limits_{\hat{C}_r}\big(G_{\Lambda}(w)-I\big)\frac{dw}{w-z_-}+\frac{1}{2\pi i}\int\limits_{\hat{C}_l}\big(G_{\Lambda}(w)-I\big)\frac{dw}{w-z_-}+O\big(s^{-6}\big)
\end{equation*}
and if $t=0$ the latter error term is of order $O\big(s^{-2}\big)$. Thus as $s\rightarrow\infty$
\begin{eqnarray*}
	\Lambda_-(z)-I&=&\frac{1}{2\pi i}\int\limits_{\hat{C}_r}\frac{i}{16\sqrt{\zeta}}\begin{pmatrix} 
	3\beta^{-2}-\beta^2 & i(3\beta^{-2}+\beta^2)\\
	i(3\beta^{-2}+\beta^2) & -(3\beta^{-2}-\beta^2)\\
	\end{pmatrix}\frac{dw}{w-z_-}\\
	&&+\frac{1}{2\pi i}\int\limits_{\hat{C}_l}\frac{1}{16\sqrt{\zeta}}\begin{pmatrix}
	3\beta^2-\beta^{-2} & -i(3\beta^2+\beta^{-2})\\
	-i(3\beta^2+\beta^{-2}) & -(3\beta^2-\beta^{-2})\\
	\end{pmatrix}\frac{dw}{w-z_-}
\end{eqnarray*}
modulo a correction term. Since
\begin{eqnarray*}
	\int\limits_{\hat{C}_r}\frac{\beta^{-2}(w)}{\sqrt{\zeta(w)}(w-z_-)}dw &=& -\frac{3}{4s^3}\bigg(z^2t+\frac{t}{2}+\frac{3x}{4s^2}\bigg)^{-1}\frac{2\pi i}{z+1}\\
	\int\limits_{\hat{C}_r}\frac{\beta^2(w)}{\sqrt{\zeta(w)}(w-z_-)}dw &=&\frac{3}{4s^3}\Bigg(\bigg(\frac{3t}{2}+\frac{3x}{4s^2}\bigg)^{-1}-\bigg(z^2t+\frac{t}{2}+\frac{3x}{4s^2}\bigg)^{-1}\Bigg)\frac{2\pi i}{z-1}\\
	\int\limits_{\hat{C}_l}\frac{\beta^2(w)}{\sqrt{\zeta(w)}(w-z_-)}dw &=&-\frac{3i}{4s^3}\bigg(z^2t+\frac{t}{2}+\frac{3x}{4s^2}\bigg)^{-1}\frac{2\pi i}{z-1}\\
	\int\limits_{\hat{C}_l}\frac{\beta^{-2}(w)}{\sqrt{\zeta(w)}(w-z_-)}dw &=& \frac{3i}{4s^3}\Bigg(\bigg(\frac{3t}{2}+\frac{3x}{4s^2}\bigg)^{-1}-\bigg(z^2t+\frac{t}{2}+\frac{3x}{4s^2}\bigg)^{-1}\Bigg)\frac{2\pi i}{z+1}
\end{eqnarray*}
we obtain
\begin{eqnarray}\label{step1}
	\Lambda_-(z)-I &=& \frac{3i}{64s^3}\Bigg(-2\bigg(z^2t+\frac{t}{2}+\frac{3x}{4s^2}\bigg)^{-1}-\bigg(\frac{3t}{2}+\frac{3x}{4s^2}\bigg)^{-1}\Bigg)\nonumber\\
	&&\times \bigg[\frac{1}{z-1}\begin{pmatrix}
	1 & -i\\
	-i & -1\\
	\end{pmatrix}+\frac{1}{z+1}\begin{pmatrix}
	1 & i \\
	i & -1\\
	\end{pmatrix}\bigg]+O\big(s^{-6}\big)\\
	&\equiv&\frac{3i}{64s^3}f(z,t)\bigg[\frac{1}{z-1}\begin{pmatrix}
	1 & -i\\
	-i & -1\\
	\end{pmatrix}+\frac{1}{z+1}\begin{pmatrix}
	1 & i \\
	i & -1\\
	\end{pmatrix}\bigg]+O\big(s^{-6}\big),\ \ s\rightarrow\infty\nonumber
\end{eqnarray}
for $t>0$ respectively with a correction term of order $O\big(s^{-2}\big)$ in case $t=0$. We are going to improve the latter estimation via iteration
\begin{equation*}
	\Lambda_-(z)-I=\int\limits_{\Sigma_{\Lambda}}\big(\Lambda_-(w)-I\big)\big(G_{\Lambda}(w)-I\big)\frac{dw}{w-z_-}
	+\int\limits_{\Sigma_{\Lambda}}\big(G_{\Lambda}(w)-I\big)\frac{dw}{w-z_-}
\end{equation*}
and the first integral $\Lambda_--I$ is given by \eqref{step1}, thus
\begin{eqnarray*}
	&&\Lambda_-(z)-I =-\frac{1}{2\pi i}\int\limits_{\hat{C}_r}\frac{3i}{32s^3}\frac{f(w,t)}{\sqrt{\zeta(w)}}\begin{pmatrix}
	\frac{\beta^2}{w+1}-\frac{3\beta^{-2}}{w-1}& -i\big(\frac{\beta^2}{w+1}+\frac{3\beta^{-2}}{w-1}\big)\\
	i\big(\frac{\beta^2}{w+1}+\frac{3\beta^{-2}}{w-1}\big) & \frac{\beta^2}{w+1}-\frac{3\beta^{-2}}{w-1}\\
	\end{pmatrix}\frac{dw}{w-z_-}\\
	&&+\frac{1}{2\pi i}\int\limits_{\hat{C}_l}\frac{3}{32s^3}\frac{f(w,t)}{\sqrt{\zeta(w)}}\begin{pmatrix}
	\frac{3\beta^2}{w+1}-\frac{\beta^{-2}}{w-1}&-i\big(\frac{3\beta^2}{w+1}+\frac{\beta^{-2}}{w-1}\big)\\
	i\big(\frac{3\beta^2}{w+1}+\frac{\beta^2}{w-1}\big)&\frac{3\beta^2}{w+1}-\frac{\beta^{-2}}{w-1}\\
	\end{pmatrix}\frac{dw}{w-z_-}\\
	&&+\frac{1}{2\pi i}\int\limits_{\hat{C}_r}\bigg[\frac{i}{16\sqrt{\zeta(w)}}\begin{pmatrix}
	3\beta^{-2}-\beta^2& i(3\beta^{-2}+\beta^2)\\
	i(3\beta^{-2}+\beta^2) & -(3\beta^{-2}-\beta^2)\\
	\end{pmatrix}+\frac{3}{128\zeta(w)}\begin{pmatrix}
	1 & -4i\\
	4i & 1\\
	\end{pmatrix}\bigg]\frac{dw}{w-z_-}\\
	&&+\frac{1}{2\pi i}\int\limits_{\hat{C}_l}\bigg[\frac{1}{16\sqrt{\zeta(w)}}\begin{pmatrix}
	3\beta^2-\beta^{-2}&-i(3\beta^2+\beta^{-2})\\
	-i(3\beta^2+\beta^{-2})&-(3\beta^2-\beta^{-2})\\
	\end{pmatrix}+\frac{3}{128\zeta(w)}\begin{pmatrix}
	-1 & -4i\\
	4i & -1\\
	\end{pmatrix}\bigg]\\
	&&\times\frac{dw}{w-z_-}+O\big(s^{-9}\big)
\end{eqnarray*}
as $s\rightarrow\infty$ for $t>0$ or with a correction term of order $O\big(s^{-3}\big)$ in case $t=0$. Applying residue theorem we compute
\begin{eqnarray*}
	&&\Lambda_-(z)-I = \frac{3i}{64s^3}f(z,t)
	 \bigg[\frac{1}{z-1}\begin{pmatrix}
	1 & -i\\
	-i & -1\\
	\end{pmatrix}+\frac{1}{z+1}\begin{pmatrix}
	1 & i \\
	i & -1\\
	\end{pmatrix}\bigg]\nonumber\\
	&&-\frac{27i}{128s^6}\bigg(\frac{3t}{2}+\frac{3x}{4s^2}\bigg)^{-2}\bigg[\frac{1}{z-1}\begin{pmatrix}
	1+\frac{i}{32}& 2i+\frac{1}{8}\\
	-2i-\frac{1}{8}&1+\frac{i}{32}\\
	\end{pmatrix}-\frac{1}{z+1}\begin{pmatrix}
	1+\frac{i}{32}&-2i-\frac{1}{8}\\
	2i+\frac{1}{8}&1+\frac{i}{32}\\
	\end{pmatrix}\bigg]\\
	&&+\frac{9i}{16s^6}h(z,t)\begin{pmatrix}
	1 & 0\\
	0 & 1\\
	\end{pmatrix}+O\big(s^{-9}\big),\ \ s\rightarrow\infty
\end{eqnarray*}
with
\begin{equation*}
	h(z,t) =\frac{\big(z^2t+\frac{t}{2}+\frac{3x}{4s^2}\big)^{-1}}{(z+1)(z-1)}\Bigg[\bigg(z^2t+\frac{t}{2}+\frac{3x}{4s^2}\bigg)^{-1}\bigg(1+\frac{3i}{64}\bigg)+\bigg(\frac{3t}{2}+\frac{3x}{4s^2}\bigg)^{-1}\Bigg]
\end{equation*}
again for $t>0$ and with an error term of order $O\big(s^{-3}\big)$ in case $t=0$. Having the latter information we will now start to compute $I_0$
\begin{equation*}
	I_0=\int\limits_{\Sigma_{\Lambda}}\big(\Lambda_-(w)-I\big)\big(G_{\Lambda}(w)-I\big)dw +\int\limits_{\Sigma_{\Lambda}}\big(G_{\Lambda}(w)-I\big)dw
\end{equation*}
First
\begin{eqnarray*}
	&&\int\limits_{\Sigma_{\Lambda}}\big(\Lambda_-(w)-I\big)\big(G_{\Lambda}(w)-I\big)dw\\
	&=&\frac{3}{512s^3}\int\limits_{\hat{C}_r}\frac{f(z,t)}{\sqrt{\zeta(z)}}\begin{pmatrix}
	\frac{\beta^2}{z+1}-\frac{3\beta^{-2}}{z-1} & -i\big(\frac{\beta^2}{z+1}+\frac{3\beta^{-2}}{z-1}\big)\\
	i\big(\frac{\beta^2}{z+1}+\frac{3\beta^{-2}}{z-1}\big) & \frac{\beta^2}{z+1}-\frac{3\beta^{-2}}{z-1}\\
	\end{pmatrix}dz\\
	&&+\frac{9i}{8192s^3}\int\limits_{\hat{C}_r}\frac{f(z,t)}{\zeta(z)}\bigg[\frac{1}{z-1}\begin{pmatrix}
	5 & -5i\\
	-5i & -5\\
	\end{pmatrix}+\frac{1}{z+1}\begin{pmatrix}
	-3 & -3i\\
	-3i & 3\\
	\end{pmatrix}\bigg]dz\\
	&&+\frac{27}{2048s^6}\bigg(\frac{3t}{2}+\frac{3x}{4s^2}\bigg)^{-2}\\
	&&\times\Bigg[\begin{pmatrix}
	1 +\frac{i}{32} & 2i+\frac{1}{8}\\
	-2i-\frac{1}{8} & 1+\frac{i}{32}\\
	\end{pmatrix}\int\limits_{\hat{C}_r}\begin{pmatrix}
	3\beta^{-2}-\beta^2 & i(3\beta^{-2}+\beta^2)\\
	i(3\beta^{-2}+\beta^2) & -(3\beta^{-2}-\beta^2)\\
	\end{pmatrix}\frac{dz}{\sqrt{\zeta(z)}(z-1)}\\
	&&-\begin{pmatrix}
	1 +\frac{i}{32} & -2i-\frac{1}{8}\\
	2i+\frac{1}{8} & 1+\frac{i}{32}\\
	\end{pmatrix}\int\limits_{\hat{C}_r}\begin{pmatrix}
	3\beta^{-2}-\beta^2 & i(3\beta^{-2}+\beta^2)\\
	i(3\beta^{-2}+\beta^2) & -(3\beta^{-2}-\beta^2)\\
	\end{pmatrix}\frac{dz}{\sqrt{\zeta(z)}(z+1)}\Bigg]\\
	&&-\frac{9}{256s^6}\int\limits_{\hat{C}_r}\frac{h(z,t)}{\sqrt{\zeta(z)}}\begin{pmatrix}
	3\beta^{-2}-\beta^2 & i(3\beta^{-2}+\beta^2)\\
	i(3\beta^{-2}+\beta^2) &-(3\beta^{-2}-\beta^2)\\
	\end{pmatrix}dz\\
	&&+\frac{3i}{512s^3}\int\limits_{\hat{C}_l}\frac{f(z,t)}{\sqrt{\zeta(z)}}\begin{pmatrix}
		\frac{3\beta^2}{z+1}-\frac{\beta^{-2}}{z-1}& -i\big(\frac{3\beta^2}{z+1}+\frac{\beta^{-2}}{z-1}\big)\\
		i\big(\frac{3\beta^2}{z+1}+\frac{\beta^{-2}}{z-1}\big)& \frac{3\beta^2}{z+1}-\frac{\beta^{-2}}{z-1}\\
		\end{pmatrix}dz\\
		&&+\frac{9i}{8192s^3}\int\limits_{\hat{C}_l}\frac{f(z,t)}{\zeta(z)}\bigg[\frac{1}{z-1}\begin{pmatrix}
		3 & -3i\\
		-3i & -3\\
		\end{pmatrix}+\frac{1}{z+1}\begin{pmatrix}
		-5 & -5i\\
		-5i & 5\\
		\end{pmatrix}\bigg]dz\\
		&&-\frac{27i}{2048s^6}\bigg(\frac{3t}{2}+\frac{3x}{4s^2}\bigg)^{-2}\\
		&&\times \Bigg[\begin{pmatrix}
		1 +\frac{i}{32} & 2i+\frac{1}{8}\\
		-2i-\frac{1}{8} & 1+\frac{i}{32}\\
		\end{pmatrix}\int\limits_{\hat{C}_l}\begin{pmatrix}
		3\beta^2-\beta^{-2} & -i(3\beta^2+\beta^{-2})\\
		-i(3\beta^2+\beta^{-2}) & -(3\beta^2-\beta^{-2})\\
		\end{pmatrix}\frac{dz}{\sqrt{\zeta(z)}(z-1)}\\
		&&-\begin{pmatrix}
		1+\frac{i}{32} & -2i-\frac{1}{8}\\
		2i+\frac{1}{8} & 1+\frac{i}{32}\\
		\end{pmatrix}\int\limits_{\hat{C}_l}\begin{pmatrix}
		3\beta^2-\beta^{-2} & -i(3\beta^2+\beta^{-2})\\
		-i(3\beta^2+\beta^{-2}) & -(3\beta^2-\beta^{-2})\\
		\end{pmatrix}\frac{dz}{\sqrt{\zeta(z)}(z+1)}\Bigg]\\
		&&+\frac{9i}{256s^6}\int\limits_{\hat{C}_l}\frac{h(z,t)}{\sqrt{\zeta(z)}}\begin{pmatrix}
		3\beta^2-\beta^{-2} & -i(3\beta^2+\beta^{-2})\\
		-i(3\beta^2+\beta^{-2}) & -(3\beta^2-\beta^{-2})\\
		\end{pmatrix}dz+O\big(s^{-12}\big),\ s\rightarrow\infty
\end{eqnarray*}
and secondly
\begin{eqnarray*}
	&&\int\limits_{\Sigma_{\Lambda}}\big(G_{\Lambda}(w)-I\big)dw = \int\limits_{\hat{C}_r}\bigg[\frac{i}{16\sqrt{\zeta(z)}}\begin{pmatrix}
	3\beta^{-2} -\beta^2 & i(3\beta^{-2}+\beta^2)\\
	i(3\beta^{-2}+\beta^2) & -(3\beta^{-2}-\beta^2)\\
	\end{pmatrix}\\
	&&+\frac{3}{128\zeta(z)}\begin{pmatrix}
	1 & -4i\\
	4i & 1\\
	\end{pmatrix}\\
	&&+\frac{15i}{2048\zeta(z)^{3/2}}\begin{pmatrix}
	5\beta^2-7\beta^{-2} & -i(5\beta^2+7\beta^{-2})\\
	-i(5\beta^2+7\beta^{-2}) & -(5\beta^2-7\beta^{-2})\\
	\end{pmatrix}\bigg]dz\\
	&&+\int\limits_{\hat{C}_l}\bigg[\frac{1}{16\sqrt{\zeta(z)}}\begin{pmatrix}
	3\beta^{2}-\beta^{-2} & -i(3\beta^2+\beta^{-2})\\
	-i(3\beta^2+\beta^{-2})& -(3\beta^2-\beta^{-2})\\
	\end{pmatrix}+\frac{3}{128\zeta}\begin{pmatrix}
	-1 & -4i\\
	4i & -1\\
	\end{pmatrix}\\
	&&+\frac{15}{2048\zeta(z)^{3/2}}\begin{pmatrix}
		7\beta^2-5\beta^{-2} & -i(7\beta^2+5\beta^{-2})\\
		-i(7\beta^2+5\beta^{-2})&-(7\beta^2-5\beta^{-2})\\
		\end{pmatrix}\bigg]dz+O\big(s^{-12}\big),\ s\rightarrow\infty.
\end{eqnarray*}
All integrals can be evaluated via residue theorem, we summarize the results
\begin{eqnarray*}
	&&\frac{1}{2\pi i}\int\limits_{\Sigma_{\Lambda}}\big(G_{\Lambda}(w)-I\big)dw = \frac{3i}{32s^3}\bigg(\frac{3t}{2}+\frac{3x}{4s^2}\bigg)^{-1}\sigma_3-\frac{27}{512s^6}\bigg(\frac{3t}{2}+\frac{3x}{4s^2}\bigg)^{-2}\sigma_2\\	&&+\frac{405i}{65536s^9}\bigg(\frac{3t}{2}+\frac{3x}{4s^2}\bigg)^{-3}\Bigg[\frac{5}{4}\frac{\frac{27t}{2}+\frac{3x}{4s^2}}{\frac{3t}{2}+\frac{3x}{4s^2}}+\frac{7}{4}\Bigg]\sigma_3+O\big(s^{-12}\big),
\end{eqnarray*}
as $s\rightarrow\infty$ for $t>0$ and with an error term of order $O\big(s^{-4}\big)$ in case $t=0$. Similarly
\begin{eqnarray*}
	&&\frac{1}{2\pi i}\int\limits_{\hat{C}_{r,l}}\big(\Lambda_-(w)-I\big)\big(G_{\Lambda}(w)-I\big)dw=\frac{27}{512 s^6}\bigg(\frac{3t}{2}+\frac{3x}{4s^2}\bigg)^{-2}\sigma_2\\	&&+\frac{81}{1024s^9}\bigg(\frac{3t}{2}+\frac{3x}{4s^2}\bigg)^{-3}\bigg(1-\frac{3i}{256}\bigg)\sigma_3+\frac{81}{8192s^9}\bigg(\frac{3t}{2}+\frac{3x}{4s^2}\bigg)^{-3}\bigg(1-\frac{5i}{32}\bigg)\sigma_3\\	&&+\frac{27}{32768s^9}\bigg(\frac{3t}{2}+\frac{3x}{4s^2}\bigg)^{-4}\bigg[49t+3it+\Big(\frac{211}{2}+3i\Big)\bigg(\frac{19t}{2}+\frac{3x}{4s^2}\bigg)\bigg]\sigma_3\\
	&&-\frac{81}{8192s^9}\bigg(\frac{3t}{2}+\frac{3x}{4s^2}\bigg)^{-3}\bigg[\bigg(1+\frac{i}{32}\bigg)\Big(3i-4it\bigg(\frac{3t}{2}+\frac{3x}{4s^2}\bigg)^{-1}\Big)\\
	&&-\bigg(2i+\frac{1}{8}\bigg)\Big(3+4t\bigg(\frac{3t}{2}+\frac{3x}{4s^2}\bigg)^{-1}\Big)\bigg]\sigma_1+O\big(s^{-12}\big),\ \ \ s\rightarrow\infty
\end{eqnarray*}
and we summarize
\begin{equation}\label{evaleq1}
	\frac{I_0}{2\pi i} = \frac{3i}{32s^3}\bigg(\frac{3t}{2}+\frac{3x}{4s^2}\bigg)^{-1}\sigma_3+\frac{27}{64s^9}\bigg(\frac{3t}{2}+\frac{3x}{4s^2}\bigg)^{-3}\big(a(s,t)\sigma_3+b(s,t)\sigma_1\big)+O\big(s^{-12}\big)
\end{equation}
as $s\rightarrow\infty$ for $t>0$ respectively with an error term of order $O\big(s^{-4}\big)$ for $t=0$. Here the functions $a=a(s,t)$ and $b=b(s,t)$ can be read of from the previous lines, we state $I_0$ in this form since as we will see, only the structure of the term of order $O\big(s^{-9}\big)$ matters. Moving on to $I_1$ and $I_2$ similar computations imply
\begin{equation}\label{evaleq2}
	\frac{I_1}{2\pi i} =\frac{3}{32s^3}\bigg(\frac{3t}{2}+\frac{3x}{4s^2}\bigg)^{-1}\sigma_1-\frac{81}{2048s^6}\bigg(\frac{3t}{2}+\frac{3x}{4s^2}\bigg)^{-2}I+O\big(s^{-9}\big)
\end{equation}
and
\begin{equation}\label{evaleq3}
	\frac{I_2}{2\pi i}=\frac{3i}{32s^3}\bigg(\frac{3t}{2}+\frac{3x}{4s^2}\bigg)^{-1}\sigma_3 +O\big(s^{-9}\big)
\end{equation}
in the limit $s\rightarrow\infty$ for $t>0$ or in case $t=0$ with adjusted error terms. It is now straight forward to use the given information \eqref{evaleq1},\eqref{evaleq2} and \eqref{evaleq3} to obtain the large $s$-asymptotics for $\Theta_1,\Theta_2$ and $\Theta_3$. Once we have the latter expansions we go back to \eqref{tidentity}
\begin{eqnarray*}
	&&\frac{4i}{3}\textnormal{trace}\ \big(-3\Theta_3\sigma_3\big)=\frac{s^3}{6}\big(ts^3+xs\big)^3-\frac{s^3}{2}\big(ts^3+xs\big)-4s^3\bigg(\frac{ts^3}{3}+\frac{xs}{4}\bigg)\\
	&&+\frac{3}{32}\bigg(\frac{3t}{2}+\frac{3x}{4s^2}\bigg)^{-1}\big(ts^3+xs\big)^2-\frac{27i}{64s^6}\bigg(\frac{3t}{2}+\frac{3x}{4s^2}\bigg)^{-3}\big(ts^3+xs\big)^2a(s,t)\\
	&&-\frac{15}{32}\bigg(\frac{3t}{2}+\frac{3x}{4s^2}\bigg)^{-1}-\frac{81}{512s^3}\bigg(\frac{3t}{2}+\frac{3x}{4s^2}\bigg)^{-2}\big(ts^3+xs\big)+O\big(s^{-3}\big),\ \ s\rightarrow\infty
\end{eqnarray*}
and
\begin{eqnarray*}
	&&\frac{4i}{3}\textnormal{trace}\ \big(2\Theta_2\sigma_3\Theta_1\big)=-\frac{3}{16}\bigg(\frac{3t}{2}+\frac{3x}{4s^2}\bigg)^{-1}\big(ts^3+xs\big)^2\\	&&+\frac{27i}{32s^6}\bigg(\frac{3t}{2}+\frac{3x}{4s^2}\bigg)^{-3}\big(ts^3+xs\big)^2a(s,t)-\frac{s^3}{3}\big(ts^3+xs\big)^3+s^3\big(ts^3+xs\big)\\
	&&+\frac{3}{16}\bigg(\frac{3t}{2}+\frac{3x}{4s^2}\bigg)^{-1}+\frac{21}{256s^3}\bigg(\frac{3t}{2}+\frac{3x}{4s^2}\bigg)^{-2}\big(ts^3+xs\big)+O\big(s^{-3}\big)
\end{eqnarray*}
as well as
\begin{eqnarray*}
	&&\frac{4i}{3}\textnormal{trace}\ \big(-\Theta_1\sigma_3(\Theta_1^2-\Theta_2)\big)=\frac{39}{512s^3}\bigg(\frac{3t}{2}+\frac{3x}{4s^2}\bigg)^{-2}\big(ts^3+xs\big)-\frac{3}{32}\bigg(\frac{3t}{2}+\frac{3x}{4s^2}\bigg)^{-1}\\
	&&+\frac{3}{32}\bigg(\frac{3t}{2}+\frac{3x}{4s^2}\bigg)^{-1}\big(ts^3+xs\big)^2-\frac{27i}{64s^6}\bigg(\frac{3t}{2}
	+\frac{3x}{4s^2}\bigg)^{-3}\big(ts^3+xs\big)^2a(s,t)\\
	&&-\frac{s^3}{2}\big(ts^3+xs\big)+\frac{s^6}{6}\big(ts^3+xs\big)^3+O\big(s^{-3}\big)
\end{eqnarray*}
in all cases as $s\rightarrow\infty$ uniformly on any compact subset of the set \eqref{exceptset2}. Now use Proposition \ref{prop4} and add up the latter three identities 
\begin{equation*}
	\frac{d}{dt}\ln\det(I-K_{\textnormal{csin}}) = -\frac{4}{3}ts^6-xs^4-\frac{3}{8}\bigg(\frac{3t}{2}+\frac{3x}{4s^2}\bigg)^{-1}+O\big(s^{-3}\big),\ \ s\rightarrow\infty
\end{equation*}
uniformly on any compact subset of the set \eqref{exceptset2}. We integrate and obtain
\begin{eqnarray*}
	\int\limits_0^1\frac{d}{dt}\ln\det(I-K_{\textnormal{csin}})\ dt &=& -\frac{2}{3}s^6-s^4x-\frac{1}{4}\ln\bigg(\frac{3}{2}+\frac{3x}{4s^2}\bigg)+\frac{1}{4}\ln\bigg(\frac{3x}{4s^2}\bigg)+O\big(s^{-3}\big)\\
	&=&-\frac{2}{3}s^6-s^4x-\frac{1}{2}\ln s+\frac{1}{4}\ln x-\frac{1}{4}\ln 2+O\big(s^{-2}\big),\ s\rightarrow\infty.
\end{eqnarray*}
On the other hand
\begin{equation*}
	\int\limits_0^1\frac{d}{dt}\ln\det(I-K_{\textnormal{csin}}) = \ln\det(I-\check{K}_{\textnormal{csin}})-\ln\det(I-K_{\sin})
\end{equation*}
and we know (see \cite{dyson})
\begin{equation*}
	\ln\det(I-K_{\sin}) = -\frac{(xs)^2}{2}-\frac{1}{4}\ln (sx)  +\frac{1}{12}\ln 2+3\zeta'(-1)+O\big(s^{-1}\big),\ s\rightarrow\infty
\end{equation*}
hence together as $s\rightarrow\infty$
\begin{equation*}
	\ln\det(I-\check{K}_{\textnormal{csin}}) = -\frac{2}{3}s^6-xs^4-\frac{(xs)^2}{2}-\frac{3}{4}\ln s-\frac{1}{6}\ln 2+3\zeta'(-1)+O\big(s^{-1}\big)
\end{equation*}
and the error term is uniform on any compact subset of the set \eqref{excset1}. This proves Theorem \ref{theo2}.

\section{Asymptotics of $\det(I-K_{\textnormal{PII}})$ - proof of Theorem \ref{t1}}\label{sec25}

Recalling the discussion of section \ref{sec15} and the result of Theorem \ref{theo2} we obtain immediately
\begin{equation*}
	\ln\det(I-\check{K}_{\textnormal{PII}})=-\frac{2}{3}s^6-s^4x-\frac{3}{4}\ln s+\int\limits_x^{\infty}(y-x)u^2(y)dy + \omega+O\big(s^{-1}\big)
\end{equation*}
uniformly on any compact subset of the set \eqref{excset1} with $\omega=-\frac{1}{6}\ln 2+3\zeta'(-1)$. This proves Theorem \ref{t1}.

\bigskip

\end{document}